\documentclass[a4paper,
               DIV=12,
               11pt,
               titlepage=off,
               abstract=true,
               listof=totoc]{scrartcl}
\pdfoutput=1
\usepackage{arxivstyle}
\usepackage{xcolor}
\usepackage[normalem]{ulem}

\DeclareMathOperator*{\Prb}{Pr}
\newcommand{\calL}{\mathcal L}
\newcommand{\calT}{\mathcal T}
\newcommand{\calU}{\mathcal U}
\newcommand{\calE}{\mathcal E}
\newcommand{\calV}{\mathcal V}
\newcommand{\supp}{\operatorname{supp}}
\newcommand{\wt}{\operatorname{wt}}
\newcommand{\spanop}{\operatorname{span}}
\newcommand{\polylog}{\operatorname{polylog}}
\newcommand{\ot}{\widetilde O}

\newcommand{\Params}[3]{{\textsf{Parameters}\left\{
\begin{array}{r  l}
\mbox{\scriptsize\sffamily queries to $U_\OO$:}&\enspace {#1} \\
\mbox{\scriptsize\sffamily queries to $U_M$:}&\enspace {#2} \\
\mbox{\scriptsize\sffamily time:}&\enspace {#3} \\
\end{array}
\right.}}
\newcommand{\Yes}{\textnormal{\textsc{Yes}}\xspace}
\newcommand{\No}{\textnormal{\textsc{No}}\xspace}
\newcommand{\Fail}{\textnormal{\textsc{Fail}}\xspace}
\newcommand{\Select}{\mathsf{Select}}

\newcommand{\Extract}{\mathsf{Extract}}
\newcommand{\Bog}{\mathsf{BR}}
\newcommand{\Red}{\mathcal{R}}

\newcommand{\Nodes}{\operatorname{Nodes}}
\newcommand{\Kext}{\mathbb K}

\title{Improved Quantum Random Self-Reduction for Linear Problems}

\author[1]{Vahid R.\ Asadi\thanks{\href{mailto:vrasadi@nii.ac.jp}{\texttt{vrasadi@nii.ac.jp}}}}
\author[1]{Shuichi Hirahara\thanks{\href{mailto:s_hirahara@nii.ac.jp}{\texttt{s\_hirahara@nii.ac.jp}}}}
\author[2]{Nobutaka Shimizu\thanks{\href{mailto:shimizu.n.ah@m.titech.ac.jp}{\texttt{shimizu.n.ah@m.titech.ac.jp}}}}

\affil[1]{\textit{National Institute of Informatics}}
\affil[2]{\textit{Institute of Science Tokyo}}

\date{}
\hypersetup{pdftitle={Improved Quantum Random Self-Reduction for Linear Problems}}
\begin{document}
\maketitle

\begin{abstract}
        We study quantum random self-reductions for linear problems over finite fields.  Let $M\in\mathbb{F}^{n\times n}$ be an arbitrary matrix, and let $\mathcal{O}$ be an oracle that agrees with the linear map $x\mapsto Mx$ on an $\varepsilon$-fraction of inputs $x\sim\mathbb{F}^n$.  Given coherent access to $\mathcal{O}$ and coherent entry access to $M$, we give a uniform quantum reduction that computes $Mx$ on any prescribed input $x$ with probability at least $2/3$ in time $\widetilde{O}(nT^{1/3})$, for $n\le T\le n^{3/2}$ and constant field size and $\eps$, where $T$ is the cost of one coherent query to $\mathcal{O}$.  In particular, when $T=\widetilde{O}(n)$, the reduction runs in time $\widetilde{O}(n^{4/3})$, improving the $\widetilde{O}(n^{3/2}+T)$ reduction of Asadi, Golovnev, Gur, Shinkar, and Subramanian (SODA 2024).
        
        Our reduction uses the Bogolyubov--Ruzsa subspace guaranteed by additive combinatorics, but it avoids learning this subspace explicitly, which was computationally expensive for the previous reduction; in particular, it does not recover a basis for its orthogonal complement. The main technical step is to decompose the inputs into sparse pieces and find a vector that lies outside the Bogolyubov--Ruzsa subspace via a quantum search based on amplitude amplification. This yields a tunable tradeoff between the cost of querying the average-case oracle and the cost of verifying matrix-vector products.
\end{abstract}

\newpage

\tableofcontents

\newpage


\section{Introduction}
The \emph{linear problem} $\mathcal{L}_M$ specified by a matrix $M\in\mathbb{F}^{n\times n}$ over a finite field $\mathbb{F}$ is the following basic problem: Given an input vector $x\in\mathbb F^n$, compute the product $Mx\in\mathbb F^n$. This basic problem forms a broad and fundamental class of computational tasks such as the discrete Fourier transform and multipoint evaluation of polynomials. Although every linear problem can be solved by the naive $O(n^2)$-time algorithm, many structured matrices admit nearly linear-time algorithms.

In applications, the relevant complexity measure is not necessarily the worst-case complexity of computing $Mx$ on every input $x$, but rather the average-case complexity where the input $x\sim\mathbb{F}^n$ is a uniformly random vector.
For example, Wiedemann-type Krylov methods \cite{Kal91} access a matrix through the map $x\mapsto Mx$, start from a random vector $x\sim\mathbb F^n$, and use the sequence $x,Mx,M^2x,\ldots$ to solve tasks such as linear-system solving, minimal-polynomial computation, and determinant computation.

This motivates the \emph{random self-reduction} for linear problems: if an oracle $\mathcal{O}$ computes $Mx$ correctly on an $\varepsilon$-fraction of inputs, i.e.,
\begin{align}
\Pr_{z\sim\mathbb F^n}[\mathcal O(z)=Mz]\ge \varepsilon, \label{eq:oracle success prob}
\end{align}
can we design an oracle algorithm $A^\mathcal{O}$ that computes $Mx$ for every input $x$ without performing the naive $O(n^2)$-time multiplication?

When $\varepsilon>3/4$, a standard self-correction already gives a simple worst-case-to-average-case reduction~\citep{BLR93}: On input $x$, sample a random vector $r\sim\mathbb{F}^n$ and compute $\mathcal{O}(r)+\mathcal{O}(x-r)$, which equals $Mx$ with probability $1/2+2c$ (over the choice of $r$) if $\varepsilon\ge 3/4+c$. Repeating this for $O(1/c^2)$ times and taking majority vote yields $Mx$ with high probability.

Thus, the main difficulty lies in the low-agreement regime where $\varepsilon$ is small. To state it more formally, we say that a worst-case-to-average-case reduction is \emph{$\varepsilon$-tolerant} if, for every oracle $\mathcal{O}$ satisfying \cref{eq:oracle success prob}, it computes $Mx$ with probability $2/3$ (over the internal randomness of the reduction) on every input $x\in\mathbb{F}^n$.

The first general progress dealing with such tolerant reductions was made through the additive-combinatorics framework of \citet{AGGS22}. Their key insight was that, even if the set of inputs on which $\mathcal{O}$ is correct is completely unstructured, additive combinatorics can still extract a large hidden subspace, called a Bogolyubov--Ruzsa (BR) subspace, from it. This makes it possible to design tolerant reductions for several settings, including data structures for all linear problems. The main bottlenecks to applying this framework directly to linear problems are learning the BR subspace and verification, which appear to require learning large Fourier-coefficients~\citep{GL89} and computing $Mx$ itself in the classical oracle model.

A subsequent work by \citet{AGGSS24} overcame these bottlenecks in the quantum setting, assuming oracle access to the entries of $M$ in addition to oracle access to $\mathcal O$.\footnote{Some access to $M$ is necessary in general. For example, let $\mathcal M=\{M_m:m\in\mathbb F^n\setminus{0}\}$, where $M_m$ is the matrix whose first row is $m$ and whose remaining entries are zero. Let $\mathcal O$ be the oracle that always outputs the all-zero vector. Then, for every $M_m\in\mathcal M$, we have $\mathcal O(x)=M_mx$ on a $1/|\mathbb F|$ fraction of inputs $x$, namely those satisfying $\langle m,x\rangle=0$. Thus, from oracle access to $\mathcal O$ alone, the reduction cannot distinguish the possible matrices $M_m$, even though the desired value $M_mx$ depends on $m$ for a worst-case input $x$.} They provide quantum subroutines for both tasks needed to instantiate the additive-combinatorics framework for all linear problems. Consequently, they proved that every linear problem admits a quantum $\varepsilon$-tolerant reduction whose running time is roughly $\widetilde{O}_\varepsilon(T+n^{3/2})$, where $T$ is the time needed to evaluate $\mathcal O$.

\citet{HS26} later presented the optimal classical reduction for linear problems in the nonuniform circuit setting that runs in time $\widetilde{O}_{\varepsilon}(n+T)$. At the same time, they proved a lower bound for uniform classical reductions: over small fields, any classical subquadratic-time random self-reduction requires an advice string of length $\Omega(\log(1/\varepsilon)\log n)$. An advice string means auxiliary information, depending on the matrix and the average-case oracle, that is supplied to the reduction rather than computed by the reduction itself.

Together, these results reveal a sharp contrast between classical and quantum reductions. On the classical side, the result of \citet{HS26} shows that any subquadratic-time uniform reduction must receive nontrivial advice. On the quantum side, the reduction of \citet{AGGSS24} is uniform, i.e., requires no advice, and runs in time $\widetilde O_\varepsilon(T+n^{3/2})$. This leaves the following natural quantitative question: 
\begin{quote}
\emph{Once quantum computation is allowed, what is the best uniform reduction one can obtain?} In particular,
\emph{if the average-case oracle is fast, can one improve on the \(n^{3/2}\)-time
overhead of \cite{AGGSS24}?}
\end{quote}

We answer this question by giving a tunable quantum reduction. To state the result informally, assume in
this introduction that the field size and \(\eps\) are both universal constants.
We refer to \cref{sec:quantum-computing} for the model of quantum computation.

\begin{theorem}[Informal main theorem; see \cref{thm:main,cor:optimized}]
\label{thm:introduction-informal}
Let $\mathcal{O}$ be any oracle satisfying \cref{eq:oracle success prob}. Let one quantum query to the average-case oracle $\OO$ cost time $T$, where $n\le T\le n^{3/2}$.
There is a uniform quantum reduction which, given $x\in\F^n$, outputs $Mx$ with probability at least $2/3$, 
and runs in time $\ot(nT^{1/3})$. In particular, when $T=\Theta(n)$, the running time becomes $\ot(n^{4/3})$.
\end{theorem}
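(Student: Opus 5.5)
The plan follows the additive-combinatorics framework of \citet{AGGSS24}, but without learning the Bogolyubov--Ruzsa subspace. Let $S=\{z:\OO(z)=Mz\}$, so $|S|\ge\eps|\F|^n$. The Bogolyubov--Ruzsa theorem (in the quasi-polynomial form) gives a subspace $V$ of codimension $d=\polylog(1/\eps)$, constant under our assumptions, such that every $v\in V$ can be written as a signed sum of $k=O(1)$ elements of $S$. Moreover, for a uniformly random such decomposition each summand is correct with constant probability. Hence, for any $v\in V$, we can compute $Mv$ from $O(1)$ oracle calls plus self-correction and majority, at cost $\ot(T)$ per vector.

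The key step is to decompose the input $x$ into pieces that are easy to handle. Write $x=\sum_{i} x^{(i)}$, where each $x^{(i)}$ is supported on a block of $s$ coordinates, giving $n/s$ blocks. For each piece we want a representation $x^{(i)}=v+w$ with $v\in V$, where $w$ is a combination of a few ``outside'' vectors whose images we have already certified. Since $V$ has constant codimension, we can maintain a small set $W$ of vectors that span a complement of $V$ modulo directions we have handled, and record their images $Mw$ for $w\in W$. A vector $u$ lies in $V+\spanop(W)$ precisely when the self-corrected value $\widetilde{M}(u)$ is consistent. We can test this consistency by checking a random linear combination, i.e.\ by comparing $\langle r, \widetilde M(u)\rangle$ against $\sum_j r_j \langle M_{j,\cdot},u\rangle$, where the right-hand side is computed via entry access to $M$. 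For an $s$-sparse $u$, this verification costs $\ot(s)$ entry queries when we spot-check $O(1)$ rows, or costs about $\ot(\sqrt{n s})$ if amplitude estimation is used over the rows.

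The pieces are processed as follows. We run a quantum search (Grover, or amplitude amplification) over the $n/s$ sparse pieces for one that fails the test, that is, one that lies outside $V+\spanop(W)$. Each such failure increases $\dim\spanop(W)$, which can happen at most $d=O(1)$ times. For a failing piece $u$ we add $u$ to $W$ and compute $Mu$ exactly at cost $O(ns)$ using entry access. After at most $d$ rounds, every piece lies in $V+\spanop(W)$, and $Mx$ is assembled as $\sum_i \widetilde M(v^{(i)})+M w^{(i)}$. We avoid computing all $n/s$ terms classically by grouping: adding together all pieces yields $x - w_x \in V$ for a single combination $w_x\in\spanop(W)$, determined by linear algebra on the coordinates relative to $W$. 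That combination is found by the same sparse tests. The output is then one self-corrected call plus the precomputed $Mw$.

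For the cost, each search round costs $\sqrt{n/s}\cdot(T+\text{verify}(s))$, and the explicit products cost $ns$. Balancing $\sqrt{n/s}\,T$ against $ns$ gives $s^{3/2}=T/\sqrt n$, hence $s=T^{2/3}/n^{1/3}$. The total is then $ns=n^{2/3}T^{2/3}$. Verification terms must stay below this, which holds in the range $n\le T\le n^{3/2}$. Since this bound is not $nT^{1/3}$, the accounting has to be corrected: computing $Mu$ for sparse $u$ costs $n\cdot s$ entries, but testing requires checking all $n$ output coordinates of $\widetilde M(u)$. The verification cost per test is therefore about $\sqrt n\cdot s$ via amplitude estimation over rows. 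Balancing $\sqrt{n/s}(T+\sqrt n s)$ with the other terms should yield $s=T^{2/3}/n^{1/3}$ and a total of $\ot(nT^{1/3})$.

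The main obstacle is this verification step. We must coherently and reliably decide whether a sparse vector lies outside $V+\spanop(W)$ without knowing $V$, using an oracle that is correct only on an $\eps$ fraction. This needs a robust test, based on agreement of several independent self-corrected decompositions combined with row-sampled checks against $M$, with one-sided error small enough to survive the amplitude amplification. I would first try making the test a majority over $O(\log n)$ independent decompositions.
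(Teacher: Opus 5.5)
\textbf{The gap is in the assembly step.} This is exactly where the improvement over \cite{AGGSS24} has to come from: certifying the output without touching all $n/s$ pieces.

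Any test built on the self-corrected value is only one-sided. If the selected value for a piece $u$ is wrong, then (on the good-randomness event) $u\notin V+\spanop(W)$. But a vector outside $V+\spanop(W)$ can still receive a correct candidate by accident. So the claim ``$u$ lies in $V+\spanop(W)$ precisely when the corrected value is consistent'' is false.

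Consequently, when your search finds no failing piece, you learn only that each piece's selected value is correct. You do not learn that the pieces, and hence $x$, lie in $V+\spanop(W)$. So the conclusion $x-w_x\in V$ and the final ``one self-corrected call'' on $x$ are uncertified. Computing $w_x$ ``by linear algebra relative to $W$'' would in any case require knowing $V$. Certifying the root value directly means verifying a dense vector, which costs about $n^{3/2}$. Summing the $n/s$ verified piece values costs about $(n/s)T$. This is precisely the leaf-only obstruction.

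The missing idea is a \emph{consistency tree}. Place the pieces at the leaves of a balanced binary tree. Call a leaf bad if its selected value fails sparse verification, and an internal node $u$ bad if $y_u\neq y_{u0}+y_{u1}$.
\begin{itemize}
\item If no node is bad, induction up the tree gives $y_{\mathrm{root}}=Mx$, so a single root selection is the output.
\item Otherwise, run a one-sided quantum search over the $O(n/s)$ nodes (it never returns a clean node) and descend to a \emph{minimal} bad node; this yields progress. At an internal minimal bad node the children's values are correct, so $(v_u,y_{u0}+y_{u1})$ is a certified pair with $v_u\notin V+\spanop(W)$. At a leaf, a classical binary descent inside the leaf, using sparse verification, produces a certified new direction.
\end{itemize}
This also removes your $O(ns)$ explicit products, which cost more than $nT^{1/3}$ once $T>n$.

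\textbf{Several other ingredients are not sound as stated.}

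\emph{Majority voting.} Majority over BR decompositions fails in the low-agreement regime. The only guarantee is that a random decomposition is \emph{jointly} correct with probability at least $\varepsilon^{2t+1}$; there is no per-summand constant that makes majority work. For example, let $\mathcal{O}(z)=Mz$ on an unstructured set of density $\varepsilon$ and $\mathcal{O}(z)=M'z$ elsewhere. Then $M'v$ is the most frequent value, so a majority over $O(\log n)$ decompositions is consistently wrong. You need explicit candidate lists (all coefficient combinations over the certified directions, times $R\approx\varepsilon^{-(2t+1)}$ trials) together with tournament selection using entry access to $M$.

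\emph{Verification.} Spot-checking $O(1)$ rows misses an error confined to a single coordinate. The Freivalds quantity $r^{\top}Mu$ needs $ns$ entry queries. The $\sqrt{n}\,s$ bound for amplitude amplification over rows silently assumes free coherent access to coordinates of the internally generated vector $\widetilde M(u)$. In the circuit model each such access costs $\Theta(n)$ gates, which gives $n^{3/2}$ again. The paper instead uses block Reed--Solomon fingerprints, at cost $n+(n/\kappa)^{3/2}+s\sqrt{n\kappa}$. Choosing $b=T^{4/3}/n$ and $\kappa=n/T^{2/3}$ in $(KT+n+(n/\kappa)^{3/2}+b\sqrt{n\kappa})\sqrt{n/b}$ is what gives $\ot(nT^{1/3})$.

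\emph{Your own accounting.} Your balancing does not produce the claimed bound. With $s=T^{2/3}/n^{1/3}$, the search term $\sqrt{n/s}\,T$ equals $n^{2/3}T^{2/3}$, which exceeds $nT^{1/3}$ whenever $T>n$.
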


The improvement of \cref{thm:introduction-informal} over \cite{AGGSS24} is gradual: for $T=n$ the running time is $\ot(n^{4/3})$, and as $T$ approaches $n^{3/2}$, the bound becomes $\ot(n^{3/2})$, matching the overhead of \cite{AGGSS24}.  A notable feature of our approach is that it does not explicitly learn the Bogolyubov--Ruzsa (BR) subspace, whose recovery is one of the main algorithmic tasks in the additive-combinatorics framework of \cite{AGGS22,AGGSS24}.  In particular, our reduction avoids the quantum Fourier-sampling step used in \cite{AGGSS24} for this purpose. 
Interestingly, we still use the existence of the BR subspace, but never recover a basis or a membership oracle for it.
Thus, the improved bound in \cref{thm:introduction-informal} can be viewed as arising from a more efficient way of finding and exploiting verification failures, implemented via a quantum search over a sparse consistency tree.

The extension to quantum average-case oracles also gives a consequence for quantum algorithms for linear problems.  Suppose a quantum algorithm uses $T\ge n$ gates and outputs $Mz$ with probability at least $\eps$ on a uniformly random input $z\in\F^n$, where the probability also includes its internal measurements.  For constant field size and $\eps$, it can be converted into a quantum algorithm that outputs $Mx$ with probability at least $2/3$ for every input $x$, using
\begin{align*}
        \ot\left(T+nT^{1/3}\right)
        \enspace
\end{align*}
gates. In particular, when $T=\ot(n)$, the worst-case gate complexity is $\ot(n^{4/3})$.  We give the formal statement in \cref{cor:quantum-algorithms}.

‌Before proving our main result and as a warm-up, we give a uniform classical reduction in the error-less model, where $\OO(z)$ is always either $Mz$ or $\bot$, and it is not $\bot$ on at least an $\eps$-fraction of inputs (see \cref{def:errorless}).
Note that the error-less model differs from the general error-prone guarantee of \cref{eq:oracle success prob} in that the oracle is not allowed to output the wrong answer, which in particular makes the task of verification trivial. This reduction uses a simple recursive repair procedure and runs in time $\ot(T)$, for constant field size and $\eps$.  In particular, when $T=O(n)$, its running time is nearly linear.  An error-prone oracle together with a perfect verifier can be converted to an error-less oracle by replacing every rejected answer by $\bot$. Therefore, as a corollary, using the quantum matrix-vector verifier of \cite{AGGSS24}, we can recover their $\ot(T+n^{3/2})$ running-time bound without Fourier sampling or learning the BR subspace. Conceptually, this indicates that the main \emph{quantum} speedup achieved by \cite{AGGSS24} is due to fast verification, and not Fourier sampling.  We describe the error-less model and the repair procedure in \cref{sec:warm-up-errorless}.

\begin{figure}[t]
\small
\centering
\begin{tikzpicture}[x=1cm,y=1cm,>=stealth,font=\footnotesize]
  \def\xN{1.35}
  \def\xH{5.65}
  \def\xQ{9.95}
  \def\yN{0.82}
  \def\yF{2.02}
  \def\yH{2.62}
  \def\yQ{4.42}

  \draw[gray!18] (\xN,\yN) rectangle (\xQ,\yQ);
  \foreach \x in {\xN,\xH,\xQ}
    \draw[gray!18] (\x,\yN) -- (\x,\yQ);
  \foreach \y in {\yN,\yF,\yH,\yQ}
    \draw[gray!18] (\xN,\y) -- (\xQ,\y);
  \draw[->,thick] (\xN,\yN) -- (10.35,\yN);
  \draw[->,thick] (\xN,\yN) -- (\xN,4.70);

  \node[anchor=north] at (5.65,0.18) {average-case oracle cost $T$};
  \node[rotate=90,anchor=south] at (0.18,2.62) {total running time};

  \foreach \x/\lab in {\xN/$n$,\xH/$n^{3/2}$,\xQ/$n^2$}
    \draw[black!55] (\x,\yN-0.055) -- (\x,\yN+0.055)
      node[below=3pt,black,font=\scriptsize] {\lab};
  \foreach \y/\lab in {\yN/$n$,\yF/$n^{4/3}$,\yH/$n^{3/2}$,\yQ/$n^2$}
    \draw[black!55] (\xN-0.055,\y) -- (\xN+0.055,\y)
      node[left=4pt,black,font=\scriptsize] {\lab};

  \draw[red!70!black,very thick,dashed,line cap=round]
    (\xN,\yH) -- (\xH,\yH);
  \draw[blue!70!black,very thick,line cap=round]
    (\xN,\yF) -- (\xH,\yH);
  \draw[black!55,very thick,line cap=round]
    (\xH,\yH) -- (\xQ,\yQ);

  \node[anchor=west,font=\scriptsize,red!70!black]
    at (1.82,3.54) {prior work \cite{AGGSS24}: $\ot(T+n^{3/2})$};
  \node[anchor=west,font=\scriptsize,blue!70!black]
    at (1.95,1.65) {this work: $\ot(nT^{1/3})$};
  \node[sloped,above=6pt,font=\scriptsize,black!60]
    at ($(\xH,\yH)!0.58!(\xQ,\yQ)$) {$T\ge n^{3/2}$: $\ot(T)$};

  \fill[blue!70!black] (\xN,\yF) circle (1.4pt);
  \fill[black!70] (\xH,\yH) circle (1.4pt);
  \fill[black!70] (\xQ,\yQ) circle (1.4pt);
\end{tikzpicture}
\caption{Running-time comparison in the constant-field, constant-agreement regime, suppressing polylogarithmic factors.  Both axes are drawn in log-log scale from $n$ to $n^2$, with the right and top endpoints marking the trivial quadratic-time scale.  The improvement occurs for $n\le T\le n^{3/2}$, where the optimized bound in this work goes from $\widetilde O(n^{4/3})$ at $T=n$ to $\widetilde O(n^{3/2})$ at $T=n^{3/2}$.  For larger $T$, both bounds are at the $\widetilde O(T)$ scale up to $T=n^2$.}
\label{fig:intro-runtime-tradeoff}
\end{figure}

\subsection{Previous work}
\label{sec:previous-work}
Random self-reductions, or worst-case-to-average-case reductions, have a
long history, going back in particular to the seminal
work of Blum, Luby, and Rubinfeld \cite{BLR93}.  More recently, Asadi,
Golovnev, Gur, and Shinkar \cite{AGGS22} introduced an additive-combinatorial
framework for designing such reductions, and applied it to a variety of problems, most notably matrix multiplication.  A key idea in this framework is
that, even if the set of inputs on which the average-case algorithm is correct
has no visible structure, a Bogolyubov--Ruzsa type lemma gives a hidden
structured set on which one can self-correct.

The closest prior quantum work is due to Asadi, Golovnev, Gur, Shinkar, and
Subramanian \cite{AGGSS24}.  They showed that every linear problem admits a
uniform quantum worst-case-to-average-case reduction.  Their reduction uses
quantum singular value transformation to implement verification in
superposition, and quantum Fourier sampling ideas to learn the hidden
Bogolyubov--Ruzsa subspace.  Quantitatively, their reduction has an
\(n^{3/2}\)-time overhead.

The closest prior classical work is due to Hirahara and Shimizu \cite{HS26}, following up on  a series of works \cite{HS22, HS23}.
They gave optimal error-tolerant random self-reductions for all linear
problems with advice, and proved that, over fields of size $q$, classical
subquadratic-time random self-reductions require advice of size
\(\Omega(\log_q(1/\eps)\cdot\log n)\).  Thus, the uniform quantum setting is
qualitatively different from the uniform classical one.

In addition, a result of Aggarwal and Kwan \cite{AK25} simplifies the quantum reduction for linear problems using techniques from \cite{HS23}. However, 
their setting is different from ours: In their average-case problem, both the matrix $M$ and the input vector $x$ are random, whereas in our setting, only the input vector $x$ is random. This distinction appears to be essential for their use of techniques from \cite{HS23}.
Moreover, as their result relies on verification of matrix-vector products, in particular, that of \cite{AGGSS24}, it will not result in an improvement on the overhead.

\paragraph{Organization.}
{We give a detailed overview of our results in \cref{sec:overview}, starting with the error-less reduction and its verifier-based corollary.} \cref{sec:prelim} introduces the oracle model, and the necessary tools and ingredients for this result.  \cref{sec:tree} defines the consistency tree and proves
the basic structural facts about what we define as bad nodes.  \cref{sec:extraction} gives the sparse
extraction procedure.  Finally, \cref{sec:full-algorithm} proves the main theorem and
derives the optimized parameter choices. 

\paragraph{AI Disclosure.}
{The algorithm and proof strategy underlying the main theorem were suggested by
ChatGPT 5.5. Subsequent interactions with AI were used to refine and clarify the proof.
The authors independently verified, developed, and simplified the argument presented here and take full
responsibility for its correctness, exposition, and attribution.}

\paragraph{Acknowledgements.}
{Vahid R.\ Asadi is supported by a JSPS Postdoctoral Fellowship for Research in Japan, and by JSPS KAKENHI Grant Number 26KF0111.}
Shuichi Hirahara is supported by JSPS KAKENHI Grant Number 24K21317.
Nobutaka Shimizu is supported by JSPS KAKENHI Grant Numbers 24K21317, 23K16837 and JST PRESTO Grant Number JPMJPR26K4, Japan.

\section{Proof overview}
\label{sec:overview}

We now explain the main ideas of the reduction while suppressing some details. For simplicity, we assume $\F=\F_2$ throughout this overview.  Suppose the input to the problem is an oracle $\OO:\F^n\to\F^n$, a matrix $M\in\F^{n\times n}$, and an input $x\in\F^n$.  The goal is to compute $Mx$, given that the oracle $\OO(v)$ agrees with $Mv$ on an $\eps$-fraction of inputs $v$.  We denote the running time of the oracle by $T$.  We assume that $\OO$ and $M$ can be queried in superposition, meaning that we have access to unitaries $U_\OO$ and $U_M$, where $U_M$ is the entry oracle for $M$.  See \cref{sec:prelim} for the formal model.  Let
\begin{align*}
        A=\{v\in\F^n:\OO(v)=Mv\}
        \enspace
\end{align*}
be the set of \emph{good} inputs, namely those that the oracle answers correctly.  By the average-case assumption, the set $A$ has density at least $\eps$.  However, $A$ need not have any useful linear structure.  Nevertheless, the Bogolyubov--Ruzsa lemma lets us extract a hidden subspace from it. 
Informally, it states that there is a subspace $V\subseteq\F^n$ of codimension $O(\log(1/\eps))$ such that every $v\in V$ has many representations of the form $v=a_1+\dots+a_t-b_1-\dots-b_t$ with $a_1,\dots,a_t,b_1,\dots,b_t\in A$, where $t=O(\log(1/\eps))$. 
To state it more formally, define $tA-tA=\{a_1+\dots+a_t-b_1-\dots-b_t\colon a_1,\dots,a_t,b_1,\dots,b_t\in A\}$.

\begin{lemma}[Informal; see \cref{thm:BR}]\label{lem:BR-informal}
For any set $A\subseteq\F^n$ of density $\eps$, and for $t=\Omega(\log(1/\eps))$, there exists a subspace $V\subseteq\F^n$ of codimension $O(\log(1/\eps))$ such that $V\subseteq tA-tA$.  Moreover, for every $v\in V$, a random $tA-tA$ representation succeeds with probability at least $\eps^{2t+1}$.
\end{lemma}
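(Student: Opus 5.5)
The plan is the standard Fourier-analytic proof of Bogolyubov's lemma over $\F=\F_2$ (the general $\F_q$ case is identical, with characters $\omega^{\langle\xi,\cdot\rangle}$ in place of $(-1)^{\langle\xi,\cdot\rangle}$). Let $f=1_A/|A|$, viewed as a probability density on $A$. The number of representations of $v$ as $a_1+\dots+a_t-b_1-\dots-b_t$, normalized by $|A|^{2t}$, is the convolution $g(v)=(f^{*t}*\tilde f^{*t})(v)$, where $\tilde f(x)=f(-x)$. Equivalently, $g(v)$ is the probability that a uniformly random tuple $(a_2,\dots,a_t,b_1,\dots,b_t)\in A^{2t-1}$ completes to a valid representation, i.e.\ that $a_1:=v-a_2-\dots-a_t+b_1+\dots+b_t$ lands in $A$. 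Under Fourier inversion we have $\widehat{g}(\xi)=|\widehat f(\xi)|^{2t}\ge 0$, and so
\begin{align*}
g(v)=\sum_{\xi}|\widehat f(\xi)|^{2t}(-1)^{\langle \xi,v\rangle},
\end{align*}
with the normalization chosen so that $\widehat f(0)=1$ and $|\widehat f(\xi)|\le 1$.

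First I would define the large spectrum $S=\{\xi:|\widehat f(\xi)|\ge \eps^{c}\}$ for a suitable constant $c$, say $c=1/2$, and set $V=S^\perp$. For $v\in V$, every $\xi\in S$ contributes a nonnegative term, and $\xi=0$ alone contributes $1$. The terms with $\xi\notin S$ sum in absolute value to at most
\begin{align*}
\max_{\xi\notin S}|\widehat f(\xi)|^{2t-2}\sum_\xi|\widehat f(\xi)|^2\le \eps^{c(2t-2)}\cdot\frac{1}{\eps},
\end{align*}
using Parseval, since $\sum_\xi|\widehat f(\xi)|^2=2^n/|A|\cdot(\text{normalization})\le 1/\eps$. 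Taking $t=\Omega(\log(1/\eps))$, and in fact already $t=O(1)$ with $c=1/2$, this error is at most $1/2$. Hence $g(v)\ge 1/2$ on $V$, which in particular gives $V\subseteq tA-tA$. Converting the density statement into the "random representation succeeds" probability is a matter of bookkeeping: a random $a_1$-slot in $\F^n$ lands in $A$ with the stated probability $\ge\eps^{O(t)}$ after accounting for the normalization by $|A|/2^n\ge\eps$. The resulting bound, something like $\eps\cdot(1/2)\ge\eps^{2t+1}$, is comfortably dominated by $\eps^{2t+1}$.

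The main obstacle is the codimension. Bounding $|S|\le \eps^{-2c}\cdot\eps^{-1}$ via Parseval only gives $\dim\spanop S\le |S|=\mathrm{poly}(1/\eps)$, not $O(\log(1/\eps))$. To obtain logarithmic codimension I would invoke Chang's lemma: the large spectrum of a set of density $\eps$ at threshold $\eps^{c}$ (in the $L^1$-normalized sense, that is, $\rho=\eps^{c-1}$ relative to $|A|/2^n$) spans a space of dimension $O(\rho^{-2}\log(1/\eps))$. This yields the logarithmic bound only when $\rho$ is a constant. So I would instead choose the threshold as a constant $\rho$, namely $|\widehat{1_A}(\xi)|\ge\rho\cdot\eps$, and compensate on the error side by taking $t=\Theta(\log(1/\eps))$. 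The tail is then at most $\rho^{2t-2}/\eps\le 1/2$ once $t\gtrsim\log(1/\eps)/\log(1/\rho)$, and this is exactly why the statement needs $t=\Omega(\log(1/\eps))$. Chang's lemma then gives codimension $O(\log(1/\eps))$. If Chang's bound were unavailable, I would fall back on the Sanders-type almost-periodicity argument, although for the informal statement quoting \cref{thm:BR} as established in the literature suffices.
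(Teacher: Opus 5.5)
The paper does not prove this statement. It imports the formal version, \cref{thm:BR}, from Hirahara and Shimizu. Your route is the standard proof of such a statement and is correct in substance: annihilate the large spectrum $S=\{\xi:|\hat f(\xi)|\ge\rho\}$ for a constant $\rho$, use that the $\xi\in S$ terms are nonnegative on $V=S^\perp$, bound the remaining terms by $\rho^{2t-2}$ times Parseval, and get $\operatorname{codim}V=\dim\spanop S=O(\rho^{-2}\log(1/\eps))$ from Chang's lemma.

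For $f=1_A/|A|$, the condition $|\hat f(\xi)|\ge\rho$ is a threshold relative to the true density $\mu:=|A|/2^n$, which is what Chang's lemma needs. Since \cref{thm:BR} only assumes density at least $\eps$, thresholding $\widehat{1_A}$ (normalized so that $\widehat{1_A}(0)=\mu$) at $\rho\eps$, as you write, would worsen Chang's bound by a factor $(\mu/\eps)^2$.

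What must be fixed is the normalization, which as written is inconsistent.
\begin{itemize}
\item With $\hat f(\xi)=\sum_x f(x)(-1)^{\langle\xi,x\rangle}$, you correctly have $\hat f(0)=1$ and $\sum_\xi|\hat f(\xi)|^2=2^n/|A|$. But inversion then reads $g(v)=2^{-n}\sum_\xi\hat g(\xi)(-1)^{\langle\xi,v\rangle}$, so your displayed identity is for $2^n g(v)$, not $g(v)$.
\item Indeed $g$ is a probability distribution on $\F^n$, so $g\ge 1/2$ can hold on at most two points, not on all of $V$.
\item Also $g(v)=N(v)/|A|^{2t}$, where $N(v)$ counts representations of $v$. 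The completion probability over $A^{2t-1}$ is instead $N(v)/|A|^{2t-1}=\mu\cdot 2^n g(v)$, so the two are not ``equivalent''.
\end{itemize}
The correct chain is: for $v\in V$,
\begin{align*}
2^n g(v)\ge 1-\frac{\rho^{2t-2}}{\mu}\ge\frac12
\qquad\text{once }\rho^{2t-2}\le\mu/2,
\end{align*}
so the completion probability is at least $\mu/2$. In \cref{thm:BR} the $2t-1$ free vectors are uniform in $\F^n$, not in $A$. Hence the probability that all $2t$ vectors lie in $A$ is
\begin{align*}
\frac{N(v)}{2^{n(2t-1)}}=\mu^{2t-1}\cdot\frac{N(v)}{|A|^{2t-1}}\ge\frac{\mu^{2t}}{2}\ge\frac{\eps^{2t}}{2}\ge\eps^{2t+1},
\end{align*}
using $\mu\ge\eps$ and $\eps\le 1/2$. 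Your ``$\eps\cdot(1/2)$'' is only the conditional probability for the last slot. It omits the factor $\mu^{2t-1}$, and the final quantity must be at least $\eps^{2t+1}$, not ``dominated by'' it.

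Two small remarks.
\begin{itemize}
\item With $\rho=1/2$ you need $t\ge\frac12\log(1/\mu)+\frac32$. To meet the exact threshold $t\ge\frac12\log(1/\eps)+1$ of \cref{thm:BR}, take for instance $\rho=1/4$. Then $\rho^{2t-2}\le\eps^2\le\eps/2\le\mu/2$, using $\eps\le1/2$.
\item Over $\F_q$ with $q=p^k$ not prime, the characters are $\omega^{\mathrm{Tr}\langle\xi,x\rangle}$ with $\omega=e^{2\pi i/p}$. Take $V$ to be the $\F$-annihilator of $\spanop_{\F}S$. This keeps the $\xi\in S$ terms nonnegative, and its $\F$-codimension is at most the $\F_p$-dimension of $\spanop_{\F_p}S$, which is what Chang's lemma bounds.
\end{itemize}
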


Thus, if $v\in V$ and we sample enough BR representations of $v$, one of the corresponding oracle sums is equal to $Mv$ with high probability.
We refer to the list of oracle sums obtained from these sampled representations as the \emph{BR candidate list} for $v$.

Moving from the unstructured set $A$ to a self-correctable subspace $V$ is a good first step.  However, a barrier is that the algorithm does not know $V$.  In \cite{AGGSS24}, one of the main ideas is to recover a basis for $V^\perp$ using quantum Fourier sampling techniques.  Here we avoid this, as it requires a procedure to evaluate the indicator function $1_A$, which in turn requires verification for arbitrary inputs.  Instead, we increase the self-correctable space by maintaining a certified list $G$ of pairs $(g,Mg)$ and working with the larger hidden space
\begin{align*}
        W_G:=V+\spanop\{g:(g,Mg)\in G\}
        \enspace .
\end{align*}
The important observation is that for every $v\in W_G$, we can brute-force over all possible coefficients for the elements of $G$ to find a decomposition $v=v'+\alpha_1g_1+\cdots+\alpha_kg_k$ with $v'\in V$.  Once we have such a $v'$, the BR candidate list for $v'$ contains $Mv'$ with high probability over the sampled randomness, and therefore gives a way to build a candidate list for $Mv$ as well.

So far, we have only said that $Mv$ belongs to a small list.  We also need a way to recover the correct value from that list.  This is done by the tournament selection procedure of \cite{HS26}.

\begin{lemma}[Informal; see \cref{lem:select}]\label{lem:tournament-informal}
Given $v\in\F^n$ and an explicit list of candidates $(y^{(1)},\ldots,y^{(K)})$ for $Mv$ that is guaranteed to contain the correct answer, one can find $Mv$ using at most $Kn$ queries to $M$.
\end{lemma}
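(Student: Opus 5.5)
The plan is a knockout tournament over the $K$ candidates. The only tool needed is a pairwise comparison that uses few entries of $M$. Throughout I assume $y^{(1)}=Mv$ after relabeling (its position is unknown to the algorithm), and I allow a little slack in the query bound, since the candidates may repeat and each comparison will cost $O(n)$ rather than exactly $n$ queries.

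\textbf{Pairwise comparison.} Take two candidates $y,y'$ that differ in some coordinate $i$, meaning $y_i\neq y'_i$. To decide between them, read row $i$ of $M$ with $n$ entry queries and compute $(Mv)_i=\sum_j M_{ij}v_j$ exactly. The correct candidate always agrees with $(Mv)_i$. So if exactly one of $y,y'$ agrees with $(Mv)_i$, I keep that one. If neither agrees, both are wrong and I may discard either. Two properties follow:
\begin{itemize}
\item A duel costs $n$ queries to $M$ plus $O(n)$ arithmetic (to find $i$ and compute the inner product).
\item The correct value $Mv$ never loses a duel.
\end{itemize}
Candidates that are identical as vectors are merged for free, since they represent the same value.

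\textbf{Tournament.} I process the list sequentially. Keep a current champion $c$, initialized to $y^{(1)}$. For $k=2,\dots,K$:
\begin{itemize}
\item if $y^{(k)}=c$, do nothing;
\item otherwise, run a duel between $c$ and $y^{(k)}$ and set $c$ to the winner.
\end{itemize}
This uses at most $K-1$ duels, hence at most $(K-1)n\le Kn$ queries to $M$, with $O(Kn)$ time overhead. At the end I output $c$.

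\textbf{Correctness.} Let $k^\ast$ be an index with $y^{(k^\ast)}=Mv$.
\begin{itemize}
\item When $y^{(k^\ast)}$ is processed, either it equals $c$ already or it enters a duel. Since the correct value never loses, after step $k^\ast$ we have $c=Mv$.
\item From then on, $c=Mv$ is never replaced. A candidate equal to $c$ is skipped, and any differing candidate loses its duel.
\end{itemize}
So the output is $Mv$.

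The argument is deterministic and has no real obstacle. The one point that needs care is that a duel is decided by an exactly computed coordinate of $Mv$, not by any oracle answer. This is what guarantees the correct candidate survives, which in turn requires that the list actually contains $Mv$. I would also remark that if the hypothesis fails and $Mv$ is absent, the procedure still outputs some candidate, but with no correctness guarantee.
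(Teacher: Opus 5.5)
Your proposal is correct and is essentially the paper's proof of \cref{lem:select}. Like the paper, you run a sequential tournament with a survivor, skip identical candidates, decide each comparison at a differing coordinate $r$ by an exactly computed $(Mv)_r$, and argue that $Mv$ is never eliminated once it appears. The only difference is that the paper computes $(Mv)_r=\sum_{j\in\supp(v)}M_{rj}v_j$ from the $\wt(v)$ entries of row $r$ on the support of $v$. This gives the sharper bound of $K\wt(v)\le Kn$ queries, whereas you read the whole row.
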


The idea is simple: keep one candidate as the current winner, compare it with the next candidate on a coordinate where they differ, and read the corresponding row of $M$ to see which candidate agrees with $Mv$ on that coordinate.  If $Mv$ is in the list, it can never be eliminated, so tournament selection eventually finds it.

We have reduced the problem to the following task: find certified pairs $(g,Mg)$ with $g\notin W_G$, and add them to $G$ so that the dimension of $W_G$ increases by one.  Though one has to be careful here.  A vector outside $W_G$ may still accidentally have a correct candidate in a BR representation list.  Thus, we cannot use BR self-correction and verification as a membership test for $W_G$.  Instead, we only use the one-sided implication that if verification fails, then (except on the event of sampling bad randomness for BR representations) the vector is outside $W_G$.

In the following subsections, we explain how to find such pairs starting from the requested input $x$, and how the final algorithm will look.  {But, before going into the main reduction, we first take a short detour and give a warm-up in the error-less model.  This isolates the recursive repair argument that will be used later in the main result.  The verifier-based reduction matching the running-time bound of \cite{AGGSS24} then follows as a corollary.}

\subsection{{Warm-up: correction in the error-less model}}
\label{sec:warm-up-errorless}

Before proving the main theorem, we describe a simpler version of the argument in the error-less oracle model. Here, the oracle is not allowed to output a wrong answer, but has to tell us when it cannot answer by outputting a special symbol $\bot$. Thus, the oracle answers correctly on at least an $\eps$-fraction of inputs, and outputs $\bot$ on the rest. This lets us repair the current self-correction space by a purely classical method.

\begin{definition}[Error-less oracle]\label{def:errorless}
An oracle $\OO:\F^n\to\F^n\cup\{\bot\}$ is \emph{error-less} for $M$ with agreement at least $\eps$ if both of the following hold:
\begin{itemize}
    \item $\OO(z)\in\{Mz,\bot\}$ for every $z\in\F^n$
    \item $\Pr_{z\sim\F^n}[\OO(z)=Mz]\ge\eps$
\end{itemize}
\end{definition}

As before, let $A=\{z:\OO(z)=Mz\}$ be the set of \emph{good} inputs, and let $V$ be the hidden BR subspace from \cref{lem:BR-informal}.  Also let $G$ be a certified list of pairs of the form $(g,Mg)$, and recall that $W_G:=V+\spanop\{g:(g,h)\in G\}$.

Let $R\in\mathbb{N}$ be a parameter chosen large enough for the required good randomness events, and let $\Omega$ be a BR randomness table of $R$ trials.  We form the BR candidate list as before, except that we discard a candidate whenever one of its oracle answers is $\bot$.  If a candidate returns a non-$\bot$ answer, then it must be the correct answer by the error-less guarantee of the oracle (so in this setting, we do not need to run the tournament selection).  Return any such candidate, or $\bot$ if none remains, and denote this procedure by $\mathsf{Correct}_{G,\Omega}(v)$.  Over $\F_2$, we try roughly $2^{|G|}R$ candidates.

The point is the following: if $v\in W_G$, then, with high probability over $\Omega$, one of the attempted BR representations uses only inputs in $A$.  In that case, $\mathsf{Correct}_{G,\Omega}(v)=Mv$.  Thus, getting a $\bot$ answer certifies that $v\notin W_G$, except on the list-failure event.

We now describe the repair step.  Our goal here is to increase the dimension of $W_G$, by finding a pair $(g,Mg)$ where $g$ is not in $W_G$. It starts from a vector $v$ whose correction returned $\bot$, and therefore, on the good-list event, $v$ lies outside the current $W_G$.  The procedure descends through the support of $v$ recursively, by writing it as $v=v_0+v_1$, where each child has roughly half the support of $v$.  It may reach a weight-one vector, but it may also stop earlier if both children are corrected successfully.

\begin{tcolorbox}[title=Recursive subroutine: \textsc{Repair}]
\paragraph{Input:} Oracles $\OO$ and $M$, a certified list $G$, and a vector $v\in\F^n$ such that $v\notin W_G$.

\paragraph{Goal:} Add one new certified pair $(g,Mg)$ with $g\notin W_G$ to $G$.

\begin{itemize}
    \item If $\wt(v)\le1$, compute $Mv$ directly from $M$, update $G\gets G\cup\{(v,Mv)\}$ and return.

    \item Otherwise, split $v$ into two vectors $v=v_0+v_1$, each supported on roughly half of $\supp(v)$.

    \item Sample fresh BR randomness tables $\Omega_0$ and $\Omega_1$ for the two children, and compute $y_i=\mathsf{Correct}_{G,\Omega_i}(v_i)$ for $i\in\{0,1\}$.

    \begin{itemize}
        \item If $y_0=\bot$, call $\textsc{Repair}^{\OO,M}(G,v_0)$.

        \item Otherwise, if $y_1=\bot$, call $\textsc{Repair}^{\OO,M}(G,v_1)$.

        \item Otherwise, both children have been corrected.  Update $G\gets G\cup\{(v,y_0+y_1)\}$.
    \end{itemize}
\end{itemize}
\end{tcolorbox}

The logic of the procedure is the following:  If a child $v_i$ returns $\bot$, then the good-list event implies $v_i\notin W_G$, and we may recurse on it.  If both children return answers, then $y_0=Mv_0$ and $y_1=Mv_1$, so $y_0+y_1=Mv$.  Since the input to the procedure satisfies $v\notin W_G$, appending $(v,y_0+y_1)$ is certified and gives progress.

The full warm-up algorithm repeatedly tries to self-correct the requested input $x$.  If correction succeeds, we are done.  If it returns $\bot$, we run the repair procedure starting from $x$ and thereby enlarge $W_G$ by one dimension.

\begin{tcolorbox}[title=Warm-up algorithm]
\paragraph{Input:} Oracles $\OO$ and $M$, and an input vector $x\in\F^n$.

\paragraph{Output:} $y=Mx$.

\begin{itemize}
    \item Initialize $G\gets\emptyset$.

    \item Repeat for $O(\log(1/\eps))$ rounds.
    \begin{itemize}
        \item Sample a fresh BR randomness table $\Omega$ for the current list $G$.

        \item Compute $y=\mathsf{Correct}_{G,\Omega}(x)$.

        \item If $y\ne\bot$, output $y$ and return.

        \item Otherwise, call $\textsc{Repair}^{\OO,M}(G,x)$.
    \end{itemize}
\end{itemize}
\end{tcolorbox}

\begin{theorem}[Error-less warm-up]\label{thm:warm-up-errorless}
Let $0<\eps\le1/2$.  Given an error-less oracle with agreement at least $\eps$, the warm-up algorithm outputs $Mx$ for any given $x$ with probability at least $2/3$ over the internal randomness.  It makes $\ot((2/\eps)^{O(\log(1/\eps))})$ calls to $\OO$ and $O(n\log(1/\eps))$ entry queries to $M$.  If one call to $\OO$ costs time $T\ge n$, then the running time is
\begin{align*}
        \ot\left((2/\eps)^{O(\log(1/\eps))}T\right)
        \enspace .
\end{align*}
\end{theorem}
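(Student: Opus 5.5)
Three ingredients go into the proof: correctness of $\textsc{Repair}$ conditioned on a global ``good-list'' event, a bound on the number of rounds via the codimension of $V$, and an accounting of queries and time.

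\textbf{Good-list event.} Let $V$ be the BR subspace from \cref{lem:BR-informal}, applied to $A=\{z:\OO(z)=Mz\}$, with codimension $d=O(\log(1/\eps))$. Every invocation of $\mathsf{Correct}_{G,\Omega}(v)$ uses a table $\Omega$ of $R$ trials. For each coefficient vector $\alpha\in\F^{|G|}$ a trial samples a random $tA-tA$ representation of $v-\sum_i\alpha_ig_i$. If $v\in W_G$, then some $\alpha$ makes this vector lie in $V$, and each trial succeeds independently with probability at least $\eps^{2t+1}$. Taking $R=\eps^{-(2t+1)}\ln(1/\delta)$ makes the failure probability at most $\delta$.

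Call an invocation \emph{bad} if $v\in W_G$ but the output is $\bot$. Outside bad invocations, the error-less guarantee makes every non-$\bot$ output equal to $Mv$, since each summand is either $\bot$ or correct and the candidates are linear combinations of these. Hence every pair added to $G$ is certified. Moreover, $\bot$ implies $v\notin W_G$, which is exactly the precondition of $\textsc{Repair}$.

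I bound the total number $N$ of invocations of $\mathsf{Correct}$ by $O(d\log n)$. Each round makes one top-level call, plus two calls per recursion level of depth $\lceil\log_2 n\rceil$. A union bound then gives failure probability at most $N\delta\le 1/3$ for $\delta=1/(3N)$, which keeps $R$ at $\ot(\eps^{-O(\log(1/\eps))})$.

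\textbf{Progress and correctness.} On the good event, I check by induction on $\wt(v)$ that $\textsc{Repair}(G,v)$ with $v\notin W_G$ appends one certified $(g,Mg)$ with $g\notin W_G$. The weight-one base case is immediate. If some child returns $\bot$, that child is outside $W_G$, so the induction applies to it. Otherwise $y_0+y_1=Mv$ and $g=v\notin W_G$.

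Each call therefore raises $\dim W_G$ by one. Since $\dim W_G\ge n-d$ from the start, at most $d$ repairs can occur. After that $W_G=\F^n$, so $x\in W_G$ and $\mathsf{Correct}$ returns $Mx$. This shows $d+1=O(\log(1/\eps))$ rounds suffice.

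\textbf{Cost.} Throughout, $|G|\le d$, so one $\mathsf{Correct}$ call tries at most $|\F|^{d}R$ candidates, each using $2t$ oracle calls. Over $\F_2$, and more generally with $|\F|$ absorbed into the exponent, this is $(2/\eps)^{O(\log(1/\eps))}$ oracle calls up to polylogarithmic factors.

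Multiplying by $N=O(d\log n)$ calls gives the stated number of calls to $\OO$. Entry queries to $M$ occur only at weight-one leaves: reading the column $Me_i$ and scaling costs $n$ queries, at most once per repair, for $O(n\log(1/\eps))$ in total.

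For time, each candidate additionally needs $O(n\cdot|G|)$ field operations to form the vector and combine outputs. This is dominated by $T\ge n$ up to $\polylog$ factors, and yields the bound $\ot((2/\eps)^{O(\log(1/\eps))}T)$.

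\textbf{Main obstacle.} The delicate step is the conditioning. The tables $\Omega$ must be fresh and independent of $G$ and of the vector being corrected. Otherwise the $\eps^{2t+1}$ success guarantee per trial, which holds for a \emph{fixed} $v\in V$, need not apply. I would handle this by revealing randomness sequentially: each $\Omega$ is sampled after $(G,v)$ is determined, so a per-invocation bound conditioned on the history holds, and the union bound over the $N$ invocations is then valid.
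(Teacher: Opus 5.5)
Your proposal is correct and follows essentially the same route as the paper's proof. Both arguments use a union bound over the $O(L\log n)$ correction calls with $R=\ot(\eps^{-(2t+1)})$, the invariant that the current vector in $\textsc{Repair}$ lies outside $W_G$, the codimension count to bound the number of rounds, and the same query and time accounting. One small refinement in the paper is that $G$ remains certified on every execution, not only on the good event, because in the error-less model any non-$\bot$ candidate is automatically $Mv$ whenever $G$ is certified.
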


For constant $\eps$, there are only $O(\log n)$ correction calls along the repair paths.  We leave the sample sizes and the proof to \cref{apx:proof}, as the main point here is the recursive correction argument.

Note that an error-prone oracle together with a perfect verifier gives an error-less oracle: evaluate $\OO(z)$, return its answer if the verifier accepts, and return $\bot$ otherwise.  We therefore obtain the following verifier-based reduction as a corollary.

\begin{corollary}[Correction with a verifier]\label{thm:warm-up-verifier}
Suppose $\OO$ has agreement at least $\eps$, and a perfect verifier $\calV_M$ decides whether a proposed answer is correct.  If one call to $\OO$ costs time $T\ge n$ and one call to $\calV_M$ costs time $S$, then the warm-up gives a reduction with running time
\begin{align*}
        \ot\left((2/\eps)^{O(\log(1/\eps))}(T+S)\right)
        \enspace .
\end{align*}
Using the quantum matrix-vector verifier of \cite{AGGSS24}, gives running time $\ot((2/\eps)^{O(\log(1/\eps))}(T+n^{3/2}))$.
\end{corollary}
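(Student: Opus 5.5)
The plan is to reduce \cref{thm:warm-up-verifier} to \cref{thm:warm-up-errorless} by building an error-less oracle $\OO'$ from $\OO$ and $\calV_M$, and then running the warm-up algorithm on $\OO'$ unchanged. Concretely, define $\OO'(z)$ as follows: compute $y=\OO(z)$, run $\calV_M(z,y)$, and output $y$ if it accepts and $\bot$ otherwise.

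We first check \cref{def:errorless}. Because $\calV_M$ is perfect, $\OO'(z)\in\{Mz,\bot\}$ for every $z$. Moreover, $\OO'(z)=Mz$ exactly when $\OO(z)=Mz$. Hence the set of good inputs $A$ is the same for $\OO$ and $\OO'$, and $\Pr_z[\OO'(z)=Mz]\ge\eps$. The hidden BR subspace $V$ from \cref{lem:BR-informal} therefore exists for $\OO'$ as well, so the correctness guarantee of \cref{thm:warm-up-errorless} (success probability at least $2/3$) carries over verbatim.

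For the running time, \cref{thm:warm-up-errorless} bounds the number of calls to $\OO'$ by $\ot((2/\eps)^{O(\log(1/\eps))})$. Each call to $\OO'$ costs $T+S$ plus $O(n)$ bookkeeping, which is absorbed since $T\ge n$. The $O(n\log(1/\eps))$ entry queries to $M$ and the remaining classical work are also dominated by this count times $T$. Re-reading the accounting in the proof of \cref{thm:warm-up-errorless}, everything is of the form (number of oracle calls) $\times$ (oracle cost) plus lower-order terms, so substituting $T+S$ for $T$ gives $\ot((2/\eps)^{O(\log(1/\eps))}(T+S))$.

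For the final claim, we instantiate $\calV_M$ with the quantum matrix-vector verifier of \cite{AGGSS24}, which runs in time $S=\ot(n^{3/2})$ using entry access $U_M$. This step is the main (mild) obstacle: that verifier is bounded-error rather than perfect. I would first amplify it by $O(\log N)$ repetitions and a majority vote, where $N=\ot((2/\eps)^{O(\log(1/\eps))})$ is the total number of verifier calls. A union bound then shows that, with probability at least $1-1/100$, every call returns the correct verdict. On that event the execution is identical to one with a perfect verifier. Adjusting the constant in the $2/3$ target (for instance, by running the warm-up with success probability $0.7$) keeps the overall success probability at least $2/3$. The amplification costs only a logarithmic factor, absorbed in $\ot$, which gives the bound $\ot((2/\eps)^{O(\log(1/\eps))}(T+n^{3/2}))$.
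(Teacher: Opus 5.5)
Your proposal is correct and follows essentially the same route as the paper: wrap $\OO$ with the verifier to obtain an error-less oracle with the same good set, apply \cref{thm:warm-up-errorless} with per-call cost $T+S$, and for the verifier of \cite{AGGSS24} reduce the per-call error to about $1/(\text{number of calls})$, then couple with the perfect-verifier execution via a union bound conditioned on the transcript. The only difference is how the constants are budgeted: the paper uses that the warm-up proof already succeeds with probability $9/10$ and allots $1/10$ to verifier errors, while you raise the warm-up's success probability to $0.7$ (via a larger constant in $R$) and allot $1/100$.
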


Thus, we recover the running-time bound of \cite{AGGSS24} without Fourier sampling or learning the BR subspace.  The proof, including the error bound for the quantum verifier, is given in \cref{apx:proof}.

\subsection{A leaf-only idea and why it fails}

{We now return to the general case of the error-prone model and the extra ideas needed for the main result.  The verifier-based corollary of the warm-up checks oracle answers on arbitrary inputs. Therefore, using the matrix-vector verification algorithm of \cite{AGGSS24} costs about $n^{3/2}$ per check, which is too expensive for the tradeoff we want. }

A first attempt to avoid this blowup is to use the following intuition.  Verifying matrix-vector multiplication is cheaper when the input vector is sparse, as we only need to work with a small subset of columns of $M$.  Therefore, we can split the input $x$ into sparse blocks $x=v_1+\cdots+v_m$, where each $v_i$ has Hamming weight at most $b$, for some $b\in[n]$, and $m=O(n/b)$.

For each block $v_i$, compute a selected candidate $y_i$ and verify whether $y_i=Mv_i$ using the matrix-vector verification procedure of \cite{AGGSS24}.  If some block fails verification, then we have found a vector outside $W_G$, and we can increase $G$ and make progress toward making $x$ lie in $W_G$.  We can search for such a bad block quantumly using amplitude amplification.  However, an issue arises when \emph{no block} verification fails.  In this case, every selected block value is correct, so
\begin{align*}
        Mx=\sum_i y_i
        \enspace .
\end{align*}
This is logically sound, but too expensive to compute.  The quantum search tells us that no bad block exists, but it does not materialize all the values $y_i$.  To output the sum above, we would still have to compute all $O(n/b)$ selected block values, giving an extra $\ot(Tn/b)$ cost.

\subsection{The consistency tree}

The consistency tree is a way to replace the expensive leaf sum by one root computation plus local consistency checks.  Imagine the full binary tree over the coordinates of $x$, and cut it at the level where the nodes have weight at most $b$.  These cut-level vectors are the leaves of the working tree.  For each node $u$, let $v_u$ be the sum of the leaf vectors below $u$, and let $y_u$ be the outcome of tournament selection on $\calL_{G,\Omega}(v_u)$.  The value $y_u$ is the algorithm's current selected candidate for $Mv_u$.

A leaf $u$ is bad if $y_u\ne Mv_u$.  This can be checked by restricted verification, since leaves have weight at most $b$.  An internal node $u$ with children $u0,u1$ is bad if
\begin{align*}
        y_u\ne y_{u0}+y_{u1}
        \enspace .
\end{align*}
Note that this is a local additive check as it only checks consistency with the children.  If the whole tree has no bad node, then a bottom-up induction gives $y_u=Mv_u$ at every node, and in particular the root value is $Mx$.  Thus the algorithm can output one selected root value instead of explicitly summing all leaves.

If a bad node exists, we can find one by a quantum search over the nodes of the tree.  We then descend to a minimal bad node, one that does not have a bad node among its proper descendants, by searching the two child subtrees and recursively moving into a subtree that contains a bad node.  Since the searched subtrees shrink geometrically, this descent has the same asymptotic cost as the first search.

So far, we have shown that the tree lets us certify the correctness of the root value if no node is bad, and also lets us find a minimal bad node if one exists.  The next step is to explain how to make progress having found a minimal bad node.

\subsection{Repairing a bad vector}

A minimal bad internal node is easy to repair.  If $u$ is internal and minimal, then its children are not bad, so their selected values are correct.  Hence
\begin{align*}
        Mv_u=Mv_{u0}+Mv_{u1}=y_{u0}+y_{u1}
        \enspace .
\end{align*}
Since $u$ itself is bad, the vector $v_u$ cannot lie in the current $W_G$.  Therefore appending $(v_u,y_{u0}+y_{u1})$ to $G$ increases the dimension of $W_G$.

If the minimal bad node is a leaf, then its vector has weight at most $b$.  We repair it by a classical binary descent, similar to the one in the \textsc{Repair} procedure.  Split the current bad vector $v$ as $v=v_0+v_1$.  Compute selected candidates for $v_0$ and $v_1$ and verify them.  If one child fails verification, recurse on that child.  If both children verify, then their selected values are correct, so their sum is $Mv$, and we append $(v,Mv)$ to $G$.  The key observation is that all vectors we verify in this descent have sparsity at most $b$, so the verification cost depends on $b$ rather than on the full input size.

The remaining issue is how to implement this sparse verification in the standard circuit model.  A direct row search would need to read the coordinate $(y_u)_r$ of an internally generated vector at a superposed row index $r$.  Indeed, this is what \cite{AGGSS24} does (see Remark 5.3) but implementing this by a plain selector costs order $n$ gates per row query, which would erase the sparse saving.  Instead, we use block Reed--Solomon (RS) fingerprints. The idea is to divide the rows into blocks of length $\kappa$, scan the candidate vector once to compute an RS fingerprint for each block, and then search over the block indices quantumly.  A block predicate compares the stored RS fingerprint of $y_u$ with the corresponding RS fingerprint of $Mv_u$.  This costs more than the ideal coordinate-access verifier, but it avoids generating complete oracle access to the generated vector.

\subsection{Putting everything together}\label{sec:putting-things-together}

Having shown how to repair a bad node, we are essentially done.  Assume the input $x\in\F^n$, the oracles $U_M$ and $U_\OO$, and a parameter $b\in\N$ are given. We can briefly summarize the algorithm into the following steps:

\paragraph{\textbf{Step 1.}} Construct a binary tree whose root corresponds to $x$ and whose leaves have disjoint supports of size at most $b$.  Initialize the certified list $G$ to be empty.

\paragraph{\textbf{Step 2.}} Sample a BR randomness table $\Omega$.  Given $\Omega$ and $G$, define the badness predicate on the nodes of the tree.  For internal nodes, badness means failure of consistency with the children.  For leaves, badness means failure of the quantum matrix-vector verification test.  Run quantum search over the nodes of the tree to find a bad node.  If none exists, return the output of tournament selection on the root.  If a bad node is found, descend to a minimal bad node, use the repair step above to find a new certified pair $(g,Mg)$, and add it to $G$.

\paragraph{\textbf{Step 3.}} Having found a new certified pair, we have increased the dimension of $W_G$ by one.  Since $V$, and therefore the initial $W_G$, has codimension $O(\log(1/\eps))$, we repeat Step 2 at most $O(\log(1/\eps))$ times.

\paragraph{Runtime:} Let us finish with the informal runtime calculation.  Let $K$ denote the size of the candidate list in an iteration.  Since $|G|=O(\log(1/\eps))$ and the BR sampling parameter is $\ot(1/\eps^{2t+1})$, we have $K=\ot((2/\eps)^{O(\log(1/\eps))})$.  For a fixed node, generating the candidate list and running tournament selection costs roughly $KT$ (recall that $T$ is the runtime of the oracle $\OO$).

The verification cost has a second parameter $\kappa$, the block length used by the RS fingerprinting verifier.  For a leaf of weight at most $b$, the verifier costs, up to logarithmic factors,
\begin{align}\label{eq:sci}
        \Psi(n,b,\kappa)
        :=
        n+
        \left(\frac n\kappa\right)^{3/2}
        +
        b\sqrt{n\kappa}
        \enspace .
\end{align}
The three terms come from scanning the candidate vector once, searching over the RS fingerprint table of length about $n/\kappa$, and computing block RS fingerprints of $Mv_u$.  Thus one bad-node test costs roughly $KT+\Psi(n,b,\kappa)$.

The tree has $O(n/b)$ nodes, so quantum search over the tree costs
\begin{align*}
        \ot\left(\left(KT+\Psi(n,b,\kappa)\right)\sqrt{\frac nb}\right)
        \enspace .
\end{align*}
The descent to a minimal bad node has the same asymptotic cost because the searched subtrees shrink geometrically, and the number of iterations is only $O(\log(1/\eps))$, which is absorbed in $\ot(\cdot)$.  When $T=\Theta(n)$, choosing $b=\kappa=\Theta(n^{1/3})$ gives total running time $\ot(n^{4/3})$, up to the dependence on $\eps$.


\section{Preliminaries}\label{sec:prelim}

Let $\F=\F_q$ be a finite field of size $q$, and for a vector $v\in\F^n$, let
\begin{align*}
        \supp(v)=\{j\in[n]:v_j\ne0\}\enspace ,
        \qquad \wt(v)=|\supp(v)|\enspace .
\end{align*}

We fix the following. All the logarithms are base-2. Let $n\in\N$, and $M\in \F^{n\times n}$ be a matrix. Assume we have an oracle $\OO:\F^n\to\F^n$ such that $\Pr_z[\OO(z)=Mz]\geq \eps$ for some $\eps\in (0,1]$. {In the following sections, we take $\OO$ to be deterministic for simplicity of exposition, and in \cref{apx:randomized-oracles} we give the extension to randomized oracles.} Throughout the paper, $\ot(\cdot)$ hides polylogarithmic factors in $n$, $q$, $1/\eps$, and the relevant inverse error parameters.

\subsection{The Bogolyubov--Ruzsa lemma and self-correction}
In this part, we state the additive-combinatorial idea used by the reduction. Let $A$ be the set of \emph{good} inputs, namely
\begin{align*}
        A:=\{z\in\F^n:\OO(z)=Mz\}\enspace .
\end{align*}
We use the following variation of the probabilistic Bogolyubov--Ruzsa (BR) lemma of Hirahara and Shimizu \cite[Lemma~5.5]{HS26}.

\begin{lemma}[Bogolyubov--Ruzsa]\label{thm:BR}
Let $A\subseteq\F^n$ have density at least $\eps$, where $0<\eps\le1/2$, and set $t\geq\frac{\log(1/\eps)}{2}+1$.
Then there exists an additive subspace $V\subseteq \F^n$ with $\codim(V)\le O(\log(1/\eps))$
such that for every $v\in V$, if
\begin{align*}
        a_1,\ldots,a_t,b_1,\ldots,b_{t-1}\in\F^n
\end{align*}
are sampled independently and uniformly and
\begin{align*}
        b_t:=a_1+\cdots+a_t-b_1-\cdots-b_{t-1}-v\enspace ,
\end{align*}
then
\begin{align*}
        \Prb[a_1,\ldots,a_t,b_1,\ldots,b_t\in A]\ge\eps^{2t+1}\enspace .
\end{align*}
\end{lemma}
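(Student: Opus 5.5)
The plan is Fourier-analytic: use the standard Bogolyubov argument with a large-spectrum subspace. Write $\alpha=|A|/q^n\ge\eps$ and $f=1_A$. Fix characters $\chi_\xi(z)=\omega^{\mathrm{Tr}(\langle\xi,z\rangle)}$, and let $\hat f(\xi)=\mathbb E_z f(z)\overline{\chi_\xi(z)}$.

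\textbf{Step 1: Fourier expansion of the success probability.} The probability in the statement equals
\begin{align*}
P(v)=q^{-(2t-1)n}\,\#\{(a,b)\in A^{t}\times A^{t}: \textstyle\sum a_i-\sum b_i=v\}=\sum_{\xi}\hat f(\xi)^t\,\overline{\hat f(\xi)}^{\,t}\chi_\xi(v)\,q^{n}\cdot q^{-n}.
\end{align*}
The convolution identity gives $\Pr[\cdot]=\sum_\xi |\hat f(\xi)|^{2t}\chi_\xi(v)$, up to the normalization making $P$ a probability over $2t-1$ free vectors. This is the key formula: all coefficients are nonnegative, namely $|\hat f(\xi)|^{2t}$.

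\textbf{Step 2: Choose the subspace from the large spectrum.} Set $\mathrm{Spec}=\{\xi:|\hat f(\xi)|\ge \rho\alpha\}$ with $\rho$ a small constant power, for instance $\rho=\alpha^{1/2}$ or a constant like $1/2$. The natural choice of $V$ is $\mathrm{Spec}^\perp$. Chang's lemma, or simply Parseval ($\sum|\hat f|^2=\alpha$, so $|\mathrm{Spec}|\le 1/(\rho^2\alpha)$), bounds the dimension. However, plain Parseval gives codimension $O(1/\alpha)$, not $O(\log(1/\eps))$. To get logarithmic codimension I would invoke Chang's lemma: $\dim\spanop(\mathrm{Spec})=O(\rho^{-2}\log(1/\alpha))$, which is $O(\log(1/\eps))$ when $\rho$ is a constant such as $1/2$. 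For $v\in V$, every $\xi\in\mathrm{Spec}$ then has $\chi_\xi(v)=1$.

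\textbf{Step 3: Lower bound.} Split the sum. The spectrum part is at least the $\xi=0$ term, $\alpha^{2t}$, since every term on the spectrum is nonnegative real. The remaining part has absolute value at most $\max_{\xi\notin\mathrm{Spec}}|\hat f(\xi)|^{2t-2}\sum_\xi|\hat f(\xi)|^2\le(\rho\alpha)^{2t-2}\alpha$. With $\rho=1/2$ this error is $2^{-(2t-2)}\alpha^{2t-1}$. The condition $t\ge\frac{\log(1/\eps)}2+1$ gives $2^{2t-2}\ge 1/\eps\ge1/\alpha$, so the error is at most $\alpha^{2t}$. This comparison is only borderline, so I would take the threshold slightly smaller than $1/2$, or use a $t$ one larger inside the constants, to leave $P(v)\ge\alpha^{2t}-\tfrac12\alpha^{2t}\cdot$ (slack). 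The target $\eps^{2t+1}$ has an extra factor of $\eps\le1/2$ in it, which absorbs the factor $1/2$. So choosing $\rho$ such that $\rho^{2t-2}\le \alpha/2\cdot$ const works when $t-1\ge\frac12\log(2/\alpha)$, which matches the hypothesis on $t$ up to the constant in $\rho$.

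The main obstacle is making the constants in Steps 2–3 line up exactly with the stated threshold for $t$ while keeping codimension $O(\log(1/\eps))$ via Chang's lemma over $\F_q$, including nonprime $q$, where characters go through the trace. If the constants resist, I would fall back on citing \cite[Lemma~5.5]{HS26} directly, since the statement is a restatement of it.
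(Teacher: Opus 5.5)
The paper gives no proof of this lemma; it imports it as a variant of \cite[Lemma~5.5]{HS26}. Your Fourier-analytic route is therefore a genuinely different, self-contained argument, and it is correct. Its ingredients are the nonnegative expansion $P(v)=\sum_\xi|\hat f(\xi)|^{2t}\chi_\xi(v)$, a large spectrum at a constant threshold, Chang's lemma for the codimension, and $t=\Theta(\log(1/\eps))$ to suppress the non-spectral tail. The citation is shorter. Your argument instead checks directly that the specific constants in the statement ($t\ge\frac{\log(1/\eps)}{2}+1$ and the bound $\eps^{2t+1}$) are attainable, which the paper leaves to the reference.

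The one loose end is the constant in Step~3, and neither of your proposed fixes works as stated. You cannot enlarge $t$, since it is given. A threshold only ``slightly'' below $1/2$ is not enough with your crude Parseval bound: for $\eps=\alpha=1/4$ and $t=2$ the crude bound needs $\rho\le\sqrt3/4$. There are two clean ways to close the step.

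\textbf{Option 1: $\rho=1/4$.} Here $2^{2t-2}\ge1/\eps$ gives $\rho^{2t-2}\le\eps^2\le\eps\alpha$. Hence
\begin{align*}
P(v)\ge\alpha^{2t}-\rho^{2t-2}\alpha^{2t-1}\ge\alpha^{2t}(1-\eps)\ge\eps^{2t+1},
\end{align*}
using $1-\eps\ge\frac12\ge\eps$.

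\textbf{Option 2: $\rho=1/2$.} Observe that $\xi=0$ lies in the spectrum, so the non-spectral Parseval mass is at most $\alpha-\alpha^2$. Hence
\begin{align*}
P(v)\ge\alpha^{2t}-\Bigl(\frac{\alpha}{2}\Bigr)^{2t-2}(\alpha-\alpha^2)\ge\alpha^{2t}-\alpha^{2t}(1-\alpha)=\alpha^{2t+1}\ge\eps^{2t+1}.
\end{align*}
This reproduces exactly the stated threshold on $t$ and the exponent $2t+1$. In both options Chang's lemma gives codimension $O(\rho^{-2}\log(1/\alpha))=O(\log(1/\eps))$.

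Non-prime $q$ is not an obstacle either. Chang's lemma holds in any finite abelian group and bounds the size of a maximal dissociated subset $D\subseteq\mathrm{Spec}$. By maximality, every element of $\mathrm{Spec}$ is a $\{-1,0,1\}$-combination of $D$. So $V:=\{v:\langle\xi,v\rangle=0 \text{ for all } \xi\in\mathrm{Spec}\}$ is an $\F$-linear subspace of codimension at most $|D|$. On $V$, every spectral character $\omega^{\mathrm{Tr}(\langle\xi,\cdot\rangle)}$ is trivial, which is all Step~3 uses.
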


\subsubsection*{Self-correction using \cref{thm:BR}}
For $v\in\F^n$, define a randomized BR self-correction sample by drawing $a_1,\ldots,a_t,b_1,\ldots,b_{t-1}$ uniformly, setting
\begin{align*}
        b_t=a_1+\cdots+a_t-b_1-\cdots-b_{t-1}-v\enspace ,
\end{align*}
and outputting
\begin{align*}
        \Bog(v):=\sum_{i=1}^t\OO(a_i)-\sum_{i=1}^t\OO(b_i)\enspace .
\end{align*}
The following simple corollary shows that $\Bog(v)=Mv$ with high probability.
\begin{corollary}\label{cor:bog}
For the subspace $V$ of \cref{thm:BR}, every $v\in V$ satisfies
\begin{align*}
        \Prb[\Bog(v)=Mv]\ge\eps^{2t+1}\enspace .
\end{align*}
\end{corollary}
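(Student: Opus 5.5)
The plan is to reduce \cref{cor:bog} directly to \cref{thm:BR}, applied to the set of good inputs $A=\{z\in\F^n:\OO(z)=Mz\}$. By the standing assumption $\Pr_z[\OO(z)=Mz]\ge\eps$, this set has density at least $\eps$. Let $V$ be the subspace that \cref{thm:BR} provides for $A$, and fix any $v\in V$.

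The sampling procedure defining $\Bog(v)$ draws $a_1,\ldots,a_t,b_1,\ldots,b_{t-1}$ independently and uniformly and sets $b_t$ by the same affine formula used in \cref{thm:BR}. So the joint distribution of $(a_1,\ldots,a_t,b_1,\ldots,b_t)$ is exactly the one in that lemma. Hence the event
\[
E=\{a_1,\ldots,a_t,b_1,\ldots,b_t\in A\}
\]
has probability at least $\eps^{2t+1}$. Since $\OO$ is deterministic (as fixed in the preliminaries), there is no additional oracle randomness to account for.

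It remains to show that $E$ implies $\Bog(v)=Mv$. On $E$, every query is answered correctly, so $\OO(a_i)=Ma_i$ and $\OO(b_i)=Mb_i$ for every $i$. By linearity of $M$,
\[
\Bog(v)=M\Bigl(\sum_{i=1}^t a_i-\sum_{i=1}^t b_i\Bigr).
\]
The defining relation for $b_t$ gives
\[
\sum_{i=1}^t a_i-\sum_{i=1}^{t} b_i=v,
\]
so the right-hand side equals $Mv$. Therefore $\Pr[\Bog(v)=Mv]\ge\Pr[E]\ge\eps^{2t+1}$.

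There is no real obstacle here. The only point to check is that the sampling distribution in the definition of $\Bog$ matches the one in \cref{thm:BR} exactly, including the sign convention for $b_t$ over a general $\F_q$, and it does.
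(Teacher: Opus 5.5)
Your proposal is correct and follows essentially the same route as the paper: on the event that all $2t$ sampled points lie in $A$, linearity of $M$ and the defining relation for $b_t$ give $\Bog(v)=Mv$, and \cref{thm:BR} lower-bounds that event by $\eps^{2t+1}$. Your extra checks (that $A$ has density at least $\eps$, that the sampling distribution matches, and the sign convention for $b_t$) are implicit in the paper's shorter argument.
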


\begin{proof}
On the event that all sampled $a_1,\ldots,a_t,b_1,\ldots,b_t$ lie in $A$, each oracle value equals the corresponding matrix-vector product. Hence
\begin{align*}
        \Bog(v)
        =\sum_{i=1}^t Ma_i-\sum_{i=1}^t Mb_i
        =M\left(\sum_{i=1}^t a_i-\sum_{i=1}^t b_i\right)=Mv\enspace .
\end{align*}
\cref{thm:BR} lower-bounds this event by $\eps^{2t+1}$.
\end{proof}

\subsection{Tournament selection}\label{sec:tournament}

We will also make use of the following tournament selection procedure from \cite{HS26}. We include a short proof for completeness.

\begin{lemma}[Tournament selection]\label{lem:select}
Given $v\in\F^n$ and an explicit list $\calL=(y^{(1)},\ldots,y^{(K)})$ of vectors in $\F^n$, the procedure $\Select(v,\calL)$ outputs an element of $\calL$ and runs in time $O(K(n+\wt(v)))$.  It uses at most $K\wt(v)$ queries to $U_M$.  If $Mv\in\calL$, then the output is $Mv$.
\end{lemma}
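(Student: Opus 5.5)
I will prove \cref{lem:select} by analysing the sequential tournament already sketched in \cref{lem:tournament-informal}. The procedure keeps a current winner $w$, initialised to $y^{(1)}$. For $k=2,\ldots,K$ it compares $w$ with the challenger $y^{(k)}$.

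\emph{If $w=y^{(k)}$}, nothing changes.

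\emph{Otherwise}, it finds the first coordinate $r\in[n]$ with $w_r\ne y^{(k)}_r$. It then computes
\begin{align*}
        (Mv)_r=\sum_{j\in\supp(v)}M_{rj}v_j
\end{align*}
by querying the entries $M_{rj}$ for $j\in\supp(v)$ via $U_M$. Finally it sets $w\gets y^{(k)}$ if $y^{(k)}_r=(Mv)_r$, and otherwise keeps $w$. After the last comparison it outputs $w$.

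The output is an element of $\calL$, since $w$ is always one of the $y^{(k)}$.

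\textbf{Correctness.} Suppose $Mv=y^{(k^*)}$ for some index $k^*$. I claim that once $w=Mv$, the winner never changes value. Indeed, a challenger $y^{(k)}\ne w$ differs from $Mv$ at the chosen coordinate $r$, so $y^{(k)}_r\ne(Mv)_r$ and $w$ is kept. Challengers equal to $w$ change nothing.

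It remains to check that $w$ does become $Mv$ by the end. If $w\ne Mv$ when $y^{(k^*)}$ is processed, then at the chosen $r$ the challenger agrees with $(Mv)_r$, so it replaces $w$. If instead $w=Mv$ already, it stays. By the invariant above, the final output is $Mv$.

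\textbf{Cost.} There are $K-1$ comparisons. Each comparison costs:
\begin{itemize}
  \item $O(n)$ time to scan for the first differing coordinate;
  \item $\wt(v)$ queries to $U_M$, together with $O(\wt(v))$ field operations, to compute $(Mv)_r$.
\end{itemize}
Computing $\supp(v)$ once costs $O(n)$. Summing over the comparisons gives total time $O(K(n+\wt(v)))$ and at most $K\wt(v)$ queries to $U_M$.

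\textbf{Main point of care.} The only subtle step is the invariant. It depends on choosing $r$ as a coordinate where the two candidates actually differ, because then at most one of them can agree with $Mv$ at $r$. I am also treating a classical query to $U_M$ (applying it to a basis state and measuring) as a single query, and assuming field operations over $\F_q$ cost $O(1)$, or are polylogarithmic and absorbed.
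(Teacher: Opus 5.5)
Your proof is correct and follows the paper's argument essentially step for step: a sequential survivor tournament, a comparison on a coordinate $r$ where the two candidates differ, computing $(Mv)_r$ with $\wt(v)$ entry queries, and the same two-part invariant for correctness. The only difference is cosmetic: you replace the survivor only when the challenger matches $(Mv)_r$, whereas the paper replaces unless the survivor matches; the two rules differ only when neither candidate agrees at $r$, so correctness and the $O(K(n+\wt(v)))$ time and $K\wt(v)$ query bounds are unaffected.
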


\begin{proof}
Maintain a survivor $y^*$.  Initially set $y^*=y^{(1)}$.  When processing $y^{(k)}$, if $y^{(k)}=y^*$, do nothing.  Otherwise, find a coordinate $r$ on which they differ and compute
\begin{align*}
        (Mv)_r=\sum_{j\in\supp(v)}M_{rj}v_j\enspace .
\end{align*}
If $y^*_r=(Mv)_r$, keep $y^*$; otherwise, replace $y^*$ by $y^{(k)}$.

We prove by induction on the processed prefix that, if $Mv$ occurs among the candidates seen so far, then the survivor equals $Mv$.  The base case is immediate.  If the old survivor is already $Mv$, then it agrees with the true coordinate and is not replaced by an incorrect vector.  If $Mv$ first appears as $y^{(k)}$, then at the distinguishing coordinate, it is the unique one of the two compared vectors that agrees with $(Mv)_r$, so it becomes the survivor.  This proves correctness.  Each comparison scans at most $n$ coordinates and computes one coordinate of $Mv$ in $O(\wt(v))$ time.
\end{proof}

\subsection{Certified directions and candidate lists} \label{sec:Certified directions and candidate lists}
To extend the self-correction idea of \cref{cor:bog} beyond the BR subspace, we maintain a list of input-output pairs of the form
\begin{align*}
        G=\left((g_1,h_1),\ldots,(g_d,h_d)\right)\in(\F^n\times\F^n)^d\enspace ,
\end{align*}
where $h_i$ is intended to be $Mg_i$.  We call $G$ certified if $h_i=Mg_i$ for all $i$.  For $\alpha\in\F^d$, define
\begin{align*}
        w_\alpha:=\sum_{i=1}^d \alpha_i g_i\enspace ,
        \qquad
        h_\alpha:=\sum_{i=1}^d \alpha_i h_i\enspace .
\end{align*}
We will use the space generated by the first components of $G$ together with the hidden BR subspace,
\begin{align*}
        W_G:=V+\spanop\{g_1,\ldots,g_d\}\enspace .
\end{align*}

For a positive integer $R$, a randomness table $\Omega$ for $G$ consists of independent uniform vectors
\begin{align*}
        a_{\alpha,r,1},\ldots,a_{\alpha,r,t}\enspace ,
        b_{\alpha,r,1},\ldots,b_{\alpha,r,t-1}\in\F^n
        \enspace ,
        \qquad \alpha\in\F^d\enspace ,
        \quad r\in[R]
        \enspace .
\end{align*}
For $v\in\F^n$, define the derived vector
\begin{align*}
        b_{\alpha,r,t}(v):=
        \sum_{i=1}^t a_{\alpha,r,i}
        -\sum_{i=1}^{t-1}b_{\alpha,r,i}
        -(v-w_\alpha)
        \enspace .
\end{align*}
The candidate list $\calL_{G,\Omega}(v)$ contains the $q^dR$ vectors
\begin{align*}
        y_{\alpha,r}:=
        \sum_{i=1}^t\OO(a_{\alpha,r,i})
        -\sum_{i=1}^t\OO(b_{\alpha,r,i}(v))
        +h_\alpha
        \enspace ,
\end{align*}
where $b_{\alpha,r,i}(v)=b_{\alpha,r,i}$ for $i<t$.  Thus, the first $2t-1$ random vectors in the table are reused, while the final point is derived from the input vector $v$.

\begin{lemma}\label{lem:list-cost}
Let $K=q^dR$.  Given $G$, $\Omega$, and $v$, the list $\calL_{G,\Omega}(v)$ can be generated explicitly in time $\ot(KT)$.  It uses $2tK=\ot(K)$ queries to $U_{\OO}$ and no queries to $U_M$.
\end{lemma}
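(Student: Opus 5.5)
The plan is to build $\calL_{G,\Omega}(v)$ by enumerating all index pairs $(\alpha,r)\in\F^d\times[R]$, which gives $K=q^dR$ entries. For each pair we compute the candidate $y_{\alpha,r}$ straight from its definition, and then we sum the costs.

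Fix a pair $(\alpha,r)$. The computation of $y_{\alpha,r}$ proceeds in four steps.

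\emph{First}, compute $w_\alpha=\sum_i\alpha_ig_i$ and $h_\alpha=\sum_i\alpha_ih_i$. Each is a linear combination of $d$ stored vectors in $\F^n$, so this takes $O(dn)$ field operations.

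\emph{Second}, compute the derived point
\[
b_{\alpha,r,t}(v)=\sum_{i\le t}a_{\alpha,r,i}-\sum_{i<t}b_{\alpha,r,i}-(v-w_\alpha).
\]
This uses the table entries of $\Omega$ and costs $O(tn)$.

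\emph{Third}, make the $2t$ oracle calls $\OO(a_{\alpha,r,1}),\dots,\OO(a_{\alpha,r,t})$ and $\OO(b_{\alpha,r,1}),\dots,\OO(b_{\alpha,r,t}(v))$. In the model of \cref{sec:quantum-computing}, each call is one query to $U_\OO$ on a classical basis state followed by a measurement of the output register. Since $\OO$ is taken to be deterministic here, the measurement returns $\OO(z)$ exactly. Each call costs time $T$.

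\emph{Fourth}, form the signed sum of the $2t$ oracle answers and add $h_\alpha$. This costs $O(tn)$.

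Nothing in these steps reads an entry of $M$, so there are no queries to $U_M$.

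Summing over the $K$ pairs gives exactly $2tK$ queries to $U_\OO$. The time is $O\bigl(K(tT+(t+d)n)\bigr)$. Using $T\ge n$, this becomes $O(K(t+d)T)$. Since $t=O(\log(1/\eps))$ and $d$ is kept at $O(\log(1/\eps))$ in the paper, both $t$ and $d$ are polylogarithmic, and the bound is $\ot(KT)$. Likewise $2tK=\ot(K)$.

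The only step needing care is the bookkeeping in the first step: if $d$ were allowed to grow beyond polylogarithmic, the $O(dn)$ cost would stop being absorbed. The $d$-dependence can be reduced if we enumerate $\alpha$ in a Gray-code order and update $w_\alpha,h_\alpha$ incrementally in $O(n)$ per step. That keeps this term at $O(Kn)$ regardless of $d$, so the claimed bound holds without relying on the size of $d$.
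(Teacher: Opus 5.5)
Your proposal is correct and follows the paper's argument. You enumerate the $K=q^dR$ index pairs and pay $2t$ calls to $U_\OO$ plus $O(tn)$ vector arithmetic per pair, which is absorbed into $\ot(T)$ via $T\ge n$ and $t=O(\log(1/\eps))$, and you handle $w_\alpha,h_\alpha$ by incremental updates across the enumeration of $\alpha$, as the paper indicates. One small point: since $\OO$ is deterministic, no measurement is needed after each query, because applying $U_\OO$ to $|z\rangle|0\rangle$ already writes $\OO(z)$ into a work register. Avoiding the measurement is what keeps the procedure reversible when it is later run on a superposition of tree nodes in \cref{lem:bad-cost}.
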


\begin{proof}
For each pair $(\alpha,r)$, forming the derived vector $b_{\alpha,r,t}(v)$, making the $2t$ calls to $U_{\OO}$, and adding $h_\alpha$ costs $\ot(T)$ after the table has been sampled.  Updating $w_\alpha$ and $h_\alpha$ across the enumeration of $\alpha$ can be done within the same bound.  Sampling or storing the table costs $\ot(Ktn)$, which is absorbed because $t$ is logarithmic and $T\ge n$.  The only oracle calls during list generation are the $2t$ calls to $U_{\OO}$ made for each of the $K$ candidates.
\end{proof}

\begin{lemma}\label{lem:good-list}
Assume $G$ is certified.  For every fixed $v\in W_G$,
\begin{align*}
        \Pr_\Omega[Mv\in\calL_{G,\Omega}(v)]\ge 1-(1-\eps^{2t+1})^R
        \enspace .
\end{align*}
\end{lemma}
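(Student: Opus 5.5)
Since $v\in W_G=V+\spanop\{g_1,\ldots,g_d\}$, I will first fix a decomposition $v=v'+w_{\alpha^*}$ with $v'\in V$ and $\alpha^*\in\F^d$. Such an $\alpha^*$ exists because every element of $\spanop\{g_i\}$ has the form $w_\alpha$ for some $\alpha\in\F^d$. The choice of $\alpha^*$ is deterministic, depending only on $v$ and $G$ and not on $\Omega$. I will then look only at the $R$ candidates $y_{\alpha^*,r}$ for $r\in[R]$ and discard the other values of $\alpha$.

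For a fixed $r$, the vectors $a_{\alpha^*,r,1},\ldots,a_{\alpha^*,r,t},b_{\alpha^*,r,1},\ldots,b_{\alpha^*,r,t-1}$ are independent and uniform, and the derived vector is
\begin{align*}
        b_{\alpha^*,r,t}(v)=\sum_{i=1}^t a_{\alpha^*,r,i}-\sum_{i=1}^{t-1}b_{\alpha^*,r,i}-v'\enspace ,
\end{align*}
since $v-w_{\alpha^*}=v'$. This is exactly the sampling distribution of \cref{thm:BR} applied to $v'\in V$. So the event $E_r$ that all $2t$ points lie in $A$ has probability at least $\eps^{2t+1}$. On $E_r$ I argue as in \cref{cor:bog}. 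Each oracle value equals the corresponding matrix-vector product, so
\begin{align*}
        y_{\alpha^*,r}=M\left(\sum_i a_{\alpha^*,r,i}-\sum_i b_{\alpha^*,r,i}(v)\right)+h_{\alpha^*}=Mv'+h_{\alpha^*}\enspace .
\end{align*}
Because $G$ is certified, $h_{\alpha^*}=\sum_i\alpha^*_iMg_i=Mw_{\alpha^*}$, and hence $y_{\alpha^*,r}=Mv'+Mw_{\alpha^*}=Mv$.

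For the final step, the events $E_1,\ldots,E_R$ depend on disjoint blocks of independent randomness in $\Omega$, with $v$ fixed in advance, so they are mutually independent. Therefore
\begin{align*}
        \Pr_\Omega[Mv\notin\calL_{G,\Omega}(v)]\le\Pr\left[\textstyle\bigcap_r\overline{E_r}\right]\le(1-\eps^{2t+1})^R\enspace .
\end{align*}

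The only delicate point is confirming that the sampling matches \cref{thm:BR}. The reduction must be to $v'$, not $v$, and the subtraction of $w_{\alpha^*}$ must be built into $b_{\alpha,r,t}$. It also matters that $v$, and hence $\alpha^*$, is fixed before $\Omega$ is drawn, since this is what makes the per-$r$ events independent.
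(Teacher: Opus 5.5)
Your proof is correct and follows essentially the same route as the paper: fix $\alpha^*$ with $v-w_{\alpha^*}\in V$, note that each trial indexed by $(\alpha^*,r)$ is exactly a BR trial for $v-w_{\alpha^*}$, so that \cref{cor:bog} together with the certification of $G$ makes the candidate $y_{\alpha^*,r}$ equal to $Mv$ with probability at least $\eps^{2t+1}$, and conclude by independence over $r$. Your explicit checks that the sampling matches \cref{thm:BR} and that the events $E_r$ use disjoint blocks of $\Omega$ spell out what the paper states in one line each.
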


\begin{proof}
Choose $\alpha^*\in\F^d$ such that $w:=v-w_{\alpha^*}\in V$.  For each $r\in[R]$, the tuple indexed by $(\alpha^*,r)$ is a uniform $tA-tA$ representation trial for $w$.  With probability at least $\eps^{2t+1}$, \cref{cor:bog} gives
\begin{align*}
        \sum_{i=1}^t\OO(a_{\alpha^*,r,i})
        -\sum_{i=1}^t\OO(b_{\alpha^*,r,i}(v))
        =Mw
        =M(v-w_{\alpha^*})
        \enspace .
\end{align*}
Because $G$ is certified, $h_{\alpha^*}=Mw_{\alpha^*}$.  The corresponding candidate is therefore $Mv$.  The repetitions over $r$ are independent, so all $R$ repetitions fail with probability at most $(1-\eps^{2t+1})^R$.
\end{proof}

\subsection{Quantum computing} \label{sec:quantum-computing}

We assume familiarity with basics of quantum computation and information and refer the interested reader to standard references such as \cite{NS02}.
We assume the algorithm has quantum oracle access to $\OO$ and the matrix $M$. In particular, it has access to a unitary $U_M$ which gives coherent entry access to the matrix $M$, and $U_\OO$ that gives unitary access to the average-case oracle $\OO$. Formally, we have
\begin{align*}
        U_M:|i,j,\beta\rangle\mapsto |i,j,\beta+M_{ij}\rangle\enspace ,
        \qquad i,j\in[n],\ \beta\in\F\enspace ,
\end{align*}
and
\begin{align*}
        U_{\OO}:|z,y\rangle\mapsto |z,y+\OO(z)\rangle\enspace ,
        \qquad z,y\in\F^n\enspace .
\end{align*}
A query to $U_M$ reveals one matrix entry, and we assume it will cost a constant number of elementary operations.  A query to $U_{\OO}$ writes an $n$-dimensional vector, and we denote the running time of $\OO$ by $T\ge n$ elementary operations.

We work in the standard uniform quantum oracle circuit model.  The only input oracles are $U_M$ and $U_{\OO}$.  Explicit vectors produced by the algorithm, such as selected candidates and sampled tables, are ordinary work registers.  When the circuit needs to use such data at a superposed address, the access is implemented by an explicit reversible circuit and its gate cost is included in the running time.  The matrix-vector product verifier that we will discuss shortly in \cref{lem:restricted-verification} is designed with this point in mind. Instead of reading arbitrary coordinates of a candidate vector inside a row search, it first compresses the candidate into shorter blocks of Reed--Solomon (RS) fingerprints and then searches over blocks. This will allow us to avoid paying $n$ gates when the input vector is sparse. We will discuss this shortly.

All subroutines used inside quantum searches are implemented coherently.  At the implementation level, predicates are written as qubit flags.  A deterministic classical computation is made reversible by computing its work registers, XORing the desired bit into a designated flag qubit, and then uncomputing the work registers.  A bounded-error verification subroutine is used in the same way: measurements are deferred, and the final accept/reject outcome is represented by a coherent flag.  The quantum-search primitives below take such flagging circuits as input.  In particular, we will use fixed-point amplitude amplification in a form that preserves the one-sided property: if the flag has zero amplitude, amplification still leaves it with zero amplitude.  BR randomness tables are sampled before the corresponding quantum search begins and are then held fixed as reversible classical controls throughout that search.  The RS fingerprint seed used in \cref{lem:restricted-verification} is part of the verifier's coherent workspace.

The algorithm is randomized uniform in the usual sense.  For fixed public parameters and a fixed classical random seed, a deterministic classical compiler outputs the corresponding quantum oracle circuit.  The seed determines the BR randomness tables and the classical choices made by the outer search procedures.  The verifier's internal seed registers and amplification circuits are generated uniformly by the circuit itself, which itself will be invoked during the outer search, and hence we require coherency.

\subsubsection*{Quantum search with one-sided predicates}

We will use fixed-point amplitude amplification as our only quantum-search primitive.  There are two forms that we need.  The first one is a measured search: it outputs a marked item or \Fail.  This is used later to find bad nodes in the consistency tree.  The second one is a coherent OR: it only writes a flag saying that a marked item exists.  This is used inside the restricted verifier.

We use the following standard consequence of fixed-point amplitude amplification, in the form given in \cite[Theorem~27]{Gilyen2019}.  

\begin{lemma}[Fixed-point amplitude amplification]\label{lem:fpaa}
Let \(A\) be a unitary, let \(\Pi_{\mathrm{src}}\) be the projection onto a
one-dimensional source space spanned by \(|0\rangle\), and let
\(\Pi_{\mathrm{flag}}\) be a target projector.  Put
\[
        p:=\|\Pi_{\mathrm{flag}}A|0\rangle\|^2
        \enspace .
\]
Given a lower bound \(\lambda>0\) and an error parameter
\(\eta\in(0,1)\), there is a unitary
\[
        \mathsf{Amp}_{\lambda,\eta}(A,\Pi_{\mathrm{flag}})
\]
using
\[
        \ot\left(\lambda^{-1/2}\log(1/\eta)\right)
\]
uses of \(A\), \(A^\dagger\), and the standard QSVT primitives, with the
following guarantees:
\begin{enumerate}
\item If \(p=0\), then the final state has zero support on
\(\Pi_{\mathrm{flag}}\).
\item If \(p\ge\lambda\), then the final state has support on
\(\Pi_{\mathrm{flag}}\) with probability at least \(1-\eta\).
\item More precisely, when \(p>0\), the projection of the final state onto
the flagged subspace is a scalar multiple of $\Pi_{\mathrm{flag}}A|0\rangle$.
\end{enumerate}
\end{lemma}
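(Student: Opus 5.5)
This lemma is cited from Gily\'en et al., so the plan is to derive it from the QSVT fixed-point construction rather than build it from scratch.

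\emph{Step 1: set up the singular value.} Write $\Pi_{\mathrm{flag}}A\Pi_{\mathrm{src}}$. Because $\Pi_{\mathrm{src}}$ has rank one, this operator has a single singular value $\sqrt p$. Its left singular vector is $\Pi_{\mathrm{flag}}A|0\rangle/\sqrt p$ when $p>0$. In other words, $A$ is a block-encoding, with respect to $(\Pi_{\mathrm{flag}},\Pi_{\mathrm{src}})$, of a rank-one operator.

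\emph{Step 2: choose the polynomial.} Pick an odd real polynomial $P$ of degree $d=O(\lambda^{-1/2}\log(1/\eta))$ with two properties:
\begin{itemize}
\item $|P(x)|\le1$ on $[-1,1]$;
\item $|P(x)|\ge\sqrt{1-\eta}$ for $x\ge\sqrt\lambda$.
\end{itemize}
A standard choice is a polynomial approximation of the sign function (or of $\mathrm{erf}(kx)$) with margin $\sqrt\lambda$. Apply quantum singular value transformation with $P$, using alternating phased reflections about $\Pi_{\mathrm{src}}$ and $\Pi_{\mathrm{flag}}$. This gives a circuit $\mathsf{Amp}$ with $d$ uses of $A$ and $A^\dagger$, and it satisfies
\[
\Pi_{\mathrm{flag}}\,\mathsf{Amp}\,\Pi_{\mathrm{src}} = P^{(SV)}(\Pi_{\mathrm{flag}}A\Pi_{\mathrm{src}}).
\]

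\emph{Step 3: check the three guarantees.} Applied to $|0\rangle$, the flagged component becomes $P(\sqrt p)\cdot\Pi_{\mathrm{flag}}A|0\rangle/\sqrt p$. This is a scalar multiple of $\Pi_{\mathrm{flag}}A|0\rangle$, which is item~3.
\begin{itemize}
\item If $p=0$, then $\Pi_{\mathrm{flag}}A\Pi_{\mathrm{src}}=0$. Since $P$ is odd, $P^{(SV)}(0)=0$, so the flagged amplitude is exactly zero. This is item~1, the one-sided property.
\item If $p\ge\lambda$, the flagged weight is $P(\sqrt p)^2\ge1-\eta$. This is item~2.
\end{itemize}

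The main obstacle is Step 2: getting a degree-$O(\lambda^{-1/2}\log(1/\eta))$ odd polynomial that stays bounded on $[-1,1]$ and satisfies the margin condition, and then confirming that QSVT realizes it exactly with phases that can be computed. I would take both facts directly from the sign-approximation and QSVT theorems in \cite{Gilyen2019}.
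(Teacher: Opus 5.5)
Your proposal is correct and follows the paper's argument. You treat $\Pi_{\mathrm{flag}}A\Pi_{\mathrm{src}}$ as a rank-one block with singular value $\sqrt p$ and use that QSVT with an odd fixed-point polynomial (degree bound from \cite{Gilyen2019}, Theorem~27) only rescales this singular value and leaves the singular vectors fixed, giving item~3 directly, item~1 from $P(0)=0$, and item~2 from the margin condition. The only small imprecision is that a real $P$ with $|P(1)|<1$ is not realized exactly as $\Pi_{\mathrm{flag}}U_\Phi\Pi_{\mathrm{src}}$, but only as the real part of an achievable odd complex polynomial $P_\Phi$ (or via an extra-ancilla linear combination); since $|P_\Phi|\ge|P|$ and $P_\Phi(0)=0$, your three checks go through unchanged with $P_\Phi$ in place of $P$.
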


\begin{proof}
This is the fixed-point amplitude-amplification consequence of quantum
singular value transformation, as in \cite[Theorem~27]{Gilyen2019}.  The
third item is the singular-vector part of the statement: for the projected
unitary
\[
        \Pi_{\mathrm{flag}} A \Pi_{\mathrm{src}} \enspace,
\]
the source space is one-dimensional, so there is only one relevant right
singular vector, namely \(|0\rangle\), and the corresponding left singular
vector is
\[
        \frac{\Pi_{\mathrm{flag}}A|0\rangle}
             {\|\Pi_{\mathrm{flag}}A|0\rangle\|}
        \enspace .
\]
QSVT changes the corresponding singular value according to the fixed-point
amplification polynomial, but it does not change the associated singular
vectors.  Thus the final flagged component is proportional to the original
flagged component.  The first two guarantees are the usual fixed-point
amplification guarantees.
\end{proof}

We now state the search primitive used in the paper.  The coherent part is used inside the restricted verifier.  The measured part is used later to find bad nodes in the consistency tree.

\begin{lemma}\label{lem:search}\label{lem:coherent-or}
Let $\calU$ be a finite set of size $N$, and let $P:\calU\to\{0,1\}$.  Suppose there is a reversible quantum flagging procedure $B$ of cost $C_B$ such that
\begin{align*}
\begin{cases}
        \Pr[B(u)=1]=0, & P(u)=0 \enspace,\\
        \Pr[B(u)=1]\ge 2/3, & P(u)=1
        \enspace .
\end{cases}
\end{align*}
Then, for every $\eta\in(0,1)$, there is a coherent flagging circuit $B_{\mathrm{search}}(\eta)$ with the following guarantees.  If $P(u)=0$ for every $u\in\calU$, then the flag has amplitude zero.  If $P(u)=1$ for at least one $u\in\calU$, then the flag is equal to $1$ with probability at least $1-\eta$.  The running time is
\begin{align*}
        \ot\left(C_B\sqrt N\log(1/\eta)\right)
        \enspace .
\end{align*}

Furthermore, if we measure the flag and the item register at the end, then we get a measured search procedure: it returns a marked element with probability at least $1-\eta$, if one exists, and returns $\Fail$ with probability one if no marked element exists.
\end{lemma}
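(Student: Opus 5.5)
The plan is to reduce to \cref{lem:fpaa} by building a single unitary $A$ that prepares a uniform superposition over items and then runs the flagging procedure. Concretely, I will let $A$ first map $|0\rangle$ to $\frac{1}{\sqrt N}\sum_{u\in\calU}|u\rangle$, which costs $\ot(\log N)$ gates, or a little more if $N$ is not a power of two; in that case I will pad $\calU$ with dummy unmarked items up to the next power of two, at most doubling $N$. It then applies $B$ controlled on the item register, writing its flag qubit. Take $\Pi_{\mathrm{flag}}$ to be the projector onto flag $=1$. Because $B$ is reversible with deferred measurements, $p=\|\Pi_{\mathrm{flag}}A|0\rangle\|^2=\frac1N\sum_u\Pr[B(u)=1]$.

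The two cases are then immediate from this formula. If $P\equiv0$, every term vanishes, so $p=0$. Item 1 of \cref{lem:fpaa} then says the amplified state has zero support on the flag, so the flag amplitude is exactly zero. If some $u^*$ has $P(u^*)=1$, then $p\ge \frac{2}{3N}=:\lambda$. Applying $\mathsf{Amp}_{\lambda,\eta}(A,\Pi_{\mathrm{flag}})$ gives flag $=1$ with probability at least $1-\eta$ by item 2. The cost is $\ot(\lambda^{-1/2}\log(1/\eta))=\ot(\sqrt N\log(1/\eta))$ uses of $A$ and $A^\dagger$, each costing $C_B+\ot(\log N)$. Adding the QSVT reflection overhead per round, which is polylogarithmic, gives total time $\ot(C_B\sqrt N\log(1/\eta))$. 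To get a clean coherent flagging circuit, I will copy the flag into a fresh output qubit and uncompute the amplification. That second copy only doubles the cost and preserves the zero-amplitude property.

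For the measured version, I measure both the flag and the item register after amplification. If the flag reads $1$, output the item register; otherwise output \Fail. Soundness uses item 3 of \cref{lem:fpaa}: the flagged component of the final state is proportional to $\Pi_{\mathrm{flag}}A|0\rangle=\frac1{\sqrt N}\sum_u|u\rangle\otimes\Pi_{\mathrm{flag}}B|u\rangle$. By one-sidedness of $B$, this vector has zero weight on every $u$ with $P(u)=0$. So whenever the flag reads $1$, the item register collapses onto a marked $u$, and returning an unmarked item has probability exactly zero. In particular, with no marked element the flag is never $1$, so the procedure returns \Fail with probability one. With a marked element, it returns a marked item with probability at least $1-\eta$.

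The main obstacle is this last soundness step. It needs the exact proportionality in item 3 together with $B$'s one-sidedness holding pointwise in $u$, not merely on average. I will check that $\Pi_{\mathrm{flag}}$ acts only on the flag qubit, so it commutes with the item register, and that the controlled application of $B$ keeps the branches for different $u$ orthogonal. Everything else is routine bookkeeping of the polylogarithmic factors.
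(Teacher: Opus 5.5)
Your proposal is correct and follows essentially the same route as the paper: the same unitary $A$ (uniform superposition over $\calU$ followed by controlled $B$), the observation that $p=0$ when nothing is marked and $p\ge 2/(3N)$ otherwise (the paper simply takes $\lambda=1/(2N)$), items 1--2 of \cref{lem:fpaa} for the coherent OR, and item 3 together with pointwise one-sidedness of $B$ for soundness of the measured version. Your padding to a power of two and the copy-and-uncompute step are harmless extras that the paper omits; with padding, just take $\lambda=2/(3N')$ for the padded size $N'\le 2N$, which does not change the $\ot(C_B\sqrt N\log(1/\eta))$ bound.
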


\begin{proof}
Let \(A\) be the unitary which prepares the uniform superposition over
\(\calU\) and then applies the flagging procedure \(B\):
\[
        A|0\rangle
        =
        \frac{1}{\sqrt N}
        \sum_{u\in\calU}
        |u\rangle B|u\rangle|0\rangle_{\mathrm{work}}
        \enspace .
\]
Let \(\Pi_{\mathrm{flag}}\) project onto states whose designated flag qubit
is \(1\).  If no item is marked, then the one-sided property of \(B\) gives
\[
        \|\Pi_{\mathrm{flag}}A|0\rangle\|^2=0
        \enspace .
\]
If at least one item is marked, then at least one branch has flag probability
at least \(2/3\), and therefore
\[
        \|\Pi_{\mathrm{flag}}A|0\rangle\|^2
        \ge
        \frac{2}{3N}
        \enspace .
\]
Apply \cref{lem:fpaa} with \(\lambda=1/(2N)\) and error parameter \(\eta\).
This gives the coherent OR guarantee and costs
\[
        \ot(C_B\sqrt N\log(1/\eta))
        \enspace .
\]

For the measured version, measure the flag and item registers.  If the flag
is \(0\), return \(\Fail\).  If the flag is \(1\), return the measured item.
It remains to check that this item is marked.  By the one-sided property of
\(B\), the vector $\Pi_{\mathrm{flag}}A|0\rangle$
has support only on branches whose item label \(u\) satisfies \(P(u)=1\).
By the last part of \cref{lem:fpaa}, the final flagged component is a scalar
multiple of this same vector.  Hence the final state has no support on
branches with \(P(u)=0\) and flag \(1\).  Therefore, conditioned on observing
flag \(1\), the measured item is marked.

If no marked item exists, the flag has zero amplitude, so the measured
procedure returns \(\Fail\) with probability one.  If a marked item exists,
the flag is \(1\) with probability at least \(1-\eta\), and conditioned on
this event the measured item is marked.  This proves the measured-search
guarantee.
\end{proof}

\subsection{Restricted verification}\label{sec:fingerprinted-verification}

To do matrix-vector product verification on sparse inputs, we replace the direct row check by a block Reed--Solomon (RS) fingerprinted verifier.  The goal is, given a vector \(v\in\F^n\) supported on a small set and an explicit candidate \(y\in\F^n\), to decide whether \(y=Mv\) with one-sided error.  The difference from the direct row check is that we never assume that a coordinate of \(y\) can be read for free at a superposed row index.  Implementing such an access by a plain selector would cost \(\ot(n)\) gates per row query.

Fix a block length \(\kappa\in[n]\) and partition \([n]\) into \(m=O(n/\kappa)\) consecutive blocks \(B_1,\ldots,B_m\), each of size \(\kappa\), where we assume we pad the last block with 0's for consistency.  Write
\(B_p=\{r^{(p)}_0,\ldots,r^{(p)}_{\kappa-1}\}\).  For an error parameter \(\eta>0\), let \(\Kext/\F\) be an extension field with \(|\Kext|\ge 6\kappa/\eta\). For \(a\in\Kext\) and \(z\in\F^n\), define the RS fingerprint of \(z\) on block \(B_p\) by
\begin{align*}
        \Phi_a(z,p):=\sum_{\ell=0}^{\kappa-1}a^\ell z_{r^{(p)}_\ell}
        \enspace .
\end{align*}

Using this, we state the following lemma, which shows how to use RS fingerprinting to verify matrix-vector product for sparse vectors. In later applications, the verifier is called with an explicit sparse
description of the input vector.  That is, the verifier receives a list of
positions and values $(j,v_j)_{j\in J}$
where \(|J|\) is the sparsity parameter used in the verifier cost.  The cost
of producing this sparse description is not part of
\(\mathsf{Verify}_\kappa\).  It will be charged in the outer circuit that
calls the verifier.

\begin{lemma}[Restricted verification]\label{lem:restricted-verification}
Let \(J\subseteq[n]\), let \(s=|J|\), and let \(v\in\F^n\) be supported on \(J\).  Suppose the values \((v_j)_{j\in J}\) are given explicitly.  For every \(\kappa\in[n]\) and every error parameter \(\eta\in(0,1)\), there is a one-sided quantum flagging verifier
\[
\mathsf{Verify}^M_\kappa(J,(v_j)_{j\in J},y;\eta)
\]
with the following guarantees.  If \(y=Mv\), then the reject flag has amplitude zero.  If \(y\neq Mv\), then the reject flag is equal to \(1\) with probability at least \(1-\eta\).  Its running time is
\begin{align*}
        \ot\left(
        n+
        \left(\frac n\kappa\right)^{3/2}
        +
        s\sqrt{n\kappa}
        \right)
        \enspace ,
\end{align*}
and it uses \(\ot(s\sqrt{n\kappa})\) queries to \(U_M\).  Equivalently, measuring the flag gives an equality test which always accepts when \(y=Mv\) and rejects with probability at least \(1-\eta\) when \(y\neq Mv\).
\end{lemma}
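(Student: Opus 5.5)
The verifier is built in three stages: classical fingerprinting of $y$, a coherent block predicate, and a coherent OR over blocks via \cref{lem:coherent-or}. First, the verifier coherently samples a uniform seed $a\in\Kext$ into its workspace. It then scans the explicit vector $y$ once and writes the table $(\Phi_a(y,p))_{p\in[m]}$ into a work register. Each fingerprint is evaluated by Horner's rule over the $\kappa$ coordinates of its block. This costs $\ot(n)$ gates, with polylogarithmic overhead for arithmetic in $\Kext$, since $\log|\Kext|=O(\log(\kappa/\eta))$.

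Second, I define the block predicate $P_a(p)=1$ iff $\Phi_a(y,p)\neq\Phi_a(Mv,p)$. Computing $\Phi_a(Mv,p)$ uses
\[
        \Phi_a(Mv,p)=\sum_{j\in J}v_j\sum_{\ell=0}^{\kappa-1}a^\ell M_{r^{(p)}_\ell j}
        \enspace ,
\]
which costs $O(s\kappa)$ queries to $U_M$ if done naively. To reach the target $s\sqrt{n\kappa}$, the search and the fingerprint cost have to be balanced, so I plan a \emph{nested} coherent OR. The outer OR runs over the $m=O(n/\kappa)$ blocks. For a fixed block, the inner test is ``some coordinate in the block disagrees.'' The intended accounting is that the total block-fingerprint work is $s\kappa\sqrt{n/\kappa}=s\sqrt{n\kappa}$: each of the $\sqrt{m}$ amplification rounds computes one block fingerprint at cost $O(s\kappa)$. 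Reading the stored fingerprint $\Phi_a(y,p)$ at a superposed block index $p$ is done with an explicit selector over the table of length $m$. That costs $\ot(n/\kappa)$ per access, and over $\sqrt{m}$ rounds this gives the $(n/\kappa)^{3/2}$ term.

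The flagging procedure $B(p)$ therefore does the following: it loads $\Phi_a(y,p)$ by the selector, computes $\Phi_a(Mv,p)$ using $s\kappa$ queries, compares the two, and uncomputes. It has cost $\ot(n/\kappa+s\kappa)$. Applying \cref{lem:coherent-or} over $m$ items with error $\eta/2$ gives total cost
\[
        \ot\left(\left(\frac n\kappa+s\kappa\right)\sqrt{\frac n\kappa}\right)
        =\ot\left(\left(\frac n\kappa\right)^{3/2}+s\sqrt{n\kappa}\right)
        \enspace ,
\]
plus the $\ot(n)$ scan. This matches the claimed bound, and the query count to $U_M$ is $\ot(s\sqrt{n\kappa})$. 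So the naive block fingerprint is enough once the balancing is done correctly.

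Third, correctness. If $y=Mv$, every block fingerprint agrees for every seed $a$, so $P_a\equiv0$. By the one-sided guarantee of \cref{lem:coherent-or}, the reject flag then has amplitude exactly zero in every seed branch, and hence overall. If $y\neq Mv$, fix a block $p$ on which $y$ and $Mv$ differ. The map $a\mapsto\Phi_a(y,p)-\Phi_a(Mv,p)$ is a nonzero polynomial of degree less than $\kappa$. It therefore vanishes for at most $\kappa/|\Kext|\le\eta/6$ of the seeds. On the remaining seeds the OR flags with probability at least $1-\eta/2$. The reject probability is thus at least $1-\eta$, which also covers the measured equality-test form of the statement.

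The main obstacle I expect is keeping everything coherent without breaking the zero-amplitude property. The seed $a$ and the fingerprint table are superposed work registers, and the inner predicate $B$ must be exactly one-sided: deterministic comparison, with no amplitude leaking from garbage. I would handle this by making $B$ a fully reversible classical computation controlled by $a$, so that $\Pr[B(p)=1]\in\{0,1\}$. I would then apply \cref{lem:coherent-or} controlled on the seed register and finally uncompute the fingerprint table. Unlike the seed, the table is a deterministic function of $a$ and $y$, so uncomputing it is clean.
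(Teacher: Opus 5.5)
Your proposal is correct and is essentially the paper's proof: a uniform seed superposition over $\Kext$, one $\widetilde O(n)$ scan tabulating the block fingerprints of $y$, a block predicate costing an $\widetilde O(n/\kappa)$ selector plus $O(s\kappa)$ queries to $U_M$, a seed-controlled one-sided coherent OR over the $m=O(n/\kappa)$ blocks, and the degree-$<\kappa$ root-counting bound with $|\Kext|\ge 6\kappa/\eta$. The ``nested coherent OR'' you mention first is never used and is best dropped. Its inner per-coordinate test would need superposed access to coordinates of $y$, which is exactly what the fingerprint table is there to avoid.
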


\begin{proof}
We describe the verifier as a unitary flagging circuit.  It has a seed register \(A\), a table register \(\mathrm{tab}\), a reject flag, and workspace registers. The unitary has access to explicit index value pairs $(j,v_j)_{j\in J}$ and the vector $y$ in its input registers. Starting from the clean state, it first prepares the uniform seed superposition
\begin{align*}
        |\psi_0\rangle
        =
        \frac{1}{\sqrt{|\Kext|}}
        \sum_{a\in\Kext}
        |a\rangle_A
        |0\rangle_{\mathrm{tab}}
        |0\rangle_{\mathrm{flag}}
        |0\rangle_{\mathrm{work}}
        \enspace .
\end{align*}
Controlled on the seed \(a\), the verifier scans the explicit vector \(y\) once and computes all block fingerprints
\[
        F_y^a(p):=\Phi_a(y,p)
        \qquad p\in[m].
\]
After this step the state is
\begin{align*}
        |\psi_1\rangle
        =
        \frac{1}{\sqrt{|\Kext|}}
        \sum_{a\in\Kext}
        |a\rangle_A
        |F_y^a\rangle_{\mathrm{tab}}
        |0\rangle_{\mathrm{flag}}
        |0\rangle_{\mathrm{work}}
        \enspace ,
\end{align*}
where \(F_y^a=(F_y^a(1),\ldots,F_y^a(m))\).  The cost of this table computation is \(\ot(n)\) gates and it uses no queries to \(U_M\).

For a fixed seed \(a\), define the exact block predicate
\begin{align*}
        P_{M,J,v,y,a}(p)=1
        \quad\Longleftrightarrow\quad
        F_y^a(p)\neq
        \sum_{\ell=0}^{\kappa-1}a^\ell
        \sum_{j\in J}M_{r^{(p)}_\ell,j}v_j
        \enspace .
\end{align*}
One reversible evaluation of this predicate first reads \(F_y^a(p)\) from the stored table by an ordinary reversible selector.  Since the table has length \(m=O(n/\kappa)\), this costs \(\ot(n/\kappa)\) gates.  It then computes the matrix-side fingerprint
\begin{align*}
        \sum_{\ell=0}^{\kappa-1}a^\ell
        \sum_{j\in J}M_{r^{(p)}_\ell,j}v_j
        \enspace
\end{align*}
using \(O(\kappa s)\) queries to \(U_M\), compares the two fingerprints, writes a block flag, and uncomputes the temporary arithmetic.

Now fix the coherent one-sided OR construction from
\cref{lem:coherent-or} for the domain \([m]\) and error parameter
\(\eta/3\).  We apply this construction coherently, with every
block-predicate evaluation controlled on the seed \(a\) and the stored
table \(F_y^a\).  We first justify that this controlled application is
compatible with the one-dimensional source-space assumption used in
the proof of \cref{lem:coherent-or}.

For each fixed seed \(a\), let \(\mathsf{OR}_a\) denote the coherent
one-sided OR circuit obtained from \cref{lem:coherent-or} for the
predicate \(P_{M,J,v,y,a}\).  If this predicate has a marked block,
then the uniform superposition over \([m]\) has marked probability at
least \(1/m\).  Hence the same lower bound used in the proof of
\cref{lem:coherent-or} applies to every seed \(a\).  In particular, the
fixed-point phase sequence used to construct \(\mathsf{OR}_a\) depends
only on \(m\) and \(\eta\), and not on \(a\).

For a fixed seed \(a\), the source space is spanned by the clean
block-index, flag, and workspace state, and is therefore
one-dimensional.  Thus the fixed-point amplification lemma applies to
\(\mathsf{OR}_a\).  We implement all of the circuits
\(\mathsf{OR}_a\) by a single controlled unitary.  Every call to the
block-predicate circuit and its inverse is controlled on the seed
register \(A\) and the table register \(\mathrm{tab}\), while the
source and flag phase operations act only on the block-index, flag,
and workspace registers.  The resulting unitary leaves \(A\) and
\(\mathrm{tab}\) unchanged and is block diagonal in their
computational basis.  On the branch with seed \(a\) and table
\(F_y^a\), its restriction is exactly \(\mathsf{OR}_a\).  Thus the
fixed-point construction is applied separately inside each
orthogonal seed branch, and not to the span of all seed branches as a
single source space.

For each fixed seed \(a\), write \( |\Gamma_a\rangle \) for the joint
state of the flag and the workspace after \(\mathsf{OR}_a\) is applied
to the predicate \(P_{M,J,v,y,a}\).  The global state is then
\begin{align*}
        |\psi_2\rangle
        =
        \frac{1}{\sqrt{|\Kext|}}
        \sum_{a\in\Kext}
        |a\rangle_A
        |F_y^a\rangle_{\mathrm{tab}}
        |\Gamma_a\rangle_{\mathrm{flag},\mathrm{work}}
        \enspace .
\end{align*}
The verifier's reject flag is the flag written by this controlled OR.
Since the seed states are orthogonal, the total rejection probability
is
\begin{align*}
        \left\|\Pi_{\mathrm{flag}}|\psi_2\rangle\right\|^2
        =
        \frac{1}{|\Kext|}
        \sum_{a\in\Kext}
        \left\|\Pi_{\mathrm{flag}}|\Gamma_a\rangle\right\|^2
        \enspace .
\end{align*}
This identity is the reason the coherent seed behaves like averaging
over a random seed.

If \(y=Mv\), then for every seed \(a\) and every block \(p\), the two fingerprints agree.  Thus \(P_{M,J,v,y,a}(p)=0\) for all \(p\).  By the one-sided property of \cref{lem:coherent-or}, \(\Pi_{\mathrm{flag}}|\Gamma_a\rangle=0\) for every \(a\).  Hence the reject flag has amplitude zero.

Assume now that \(y\neq Mv\), and put \(d:=y-Mv\).  Choose a block \(B_{p^*}\) on which \(d\) is not identically zero.  Then
\begin{align*}
        D_{p^*}(X):=
        \sum_{\ell=0}^{\kappa-1}
        d_{r^{(p^*)}_\ell}X^\ell
        \enspace
\end{align*}
is a nonzero polynomial over \(\Kext\) of degree less than \(\kappa\).  Hence it has at most \(\kappa-1\) roots.  For uniform \(a\in\Kext\),
\begin{align*}
        \Pr_a[D_{p^*}(a)=0]
        \le
        \frac{\kappa}{|\Kext|}
        \le
        \frac{\eta}{6}
        \enspace .
\end{align*}
For every seed with \(D_{p^*}(a)\neq0\), the block \(p^*\) is marked.  Therefore the coherent OR raises the reject flag with probability at least \(1-\eta/3\) on that seed.  Using the averaging identity above,
\begin{align*}
        \Pr[\mathsf{Verify}_\kappa\text{ rejects}]
        &\ge
        \left(1-\frac{\eta}{6}\right)
        \left(1-\frac{\eta}{3}\right)        \\
        &\ge
        1-\eta
        \enspace .
\end{align*}

It remains to account for the cost.  First, note that working on the larger field $\Kext$ incurs additional $\polylog$ operations. The table construction costs \(\ot(n)\).  
The controlled coherent OR uses
\(\ot(\sqrt m\log(1/\eta))
=\ot(\sqrt{n/\kappa}\log(1/\eta))\)
block-predicate evaluations. 
Each block-predicate evaluation costs \(\ot(n/\kappa)\) gates for the selector and \(O(\kappa s)\) queries to \(U_M\).  Thus the running time is
\begin{align*}
        \ot\left(
        n+
        \sqrt{\frac n\kappa}\cdot \frac n\kappa
        +
        \sqrt{\frac n\kappa}\cdot \kappa s
        \right)
        =
        \ot\left(
        n+
        \left(\frac n\kappa\right)^{3/2}
        +
        s\sqrt{n\kappa}
        \right)
        \enspace ,
\end{align*}
and the number of \(U_M\)-queries is \(\ot(s\sqrt{n\kappa})\).
\end{proof}


\section{Main results}
Having defined the necessary tools, we are now ready to state our main result and present a formal proof.

\subsection{Formal statement of the main theorem}

\begin{theorem}[Main result]\label{thm:main}
For every $n\in\mathbb N$, a promise lower bound $\eps$, and $b,\kappa\in[n]$, there is a randomized uniform quantum algorithm $\Red_{b,\kappa}$ such that the following holds.  Let $\F$ be a finite field of size $q$, let $0<\eps\le1/2$ and let $M\in\F^{n\times n}$ and $\OO:\F^n\to\F^n$ satisfy
\begin{align*}
        \Prb_{z\in\F^n}[\OO(z)=Mz]\ge \eps
        \enspace .
\end{align*}
Then, for every $x\in\F^n$,
\begin{align*}
        \Prb[\Red_{b,\kappa}^{\OO,M}(x)=Mx]\ge \frac23
        \enspace .
\end{align*}
The probability is over the internal randomness and quantum measurements. 
If one $U_{\OO}$-query costs $T\ge n$ elementary operations, recalling the definition of $\Psi$ from $\cref{eq:sci}$, then the bounds are
\begin{align*}
        \Params
        {\ot\left((q/\eps)^{O(\log(1/\eps))}\sqrt{\frac nb}\right)}
        {\ot\left((q/\eps)^{O(\log(1/\eps))}\frac{n^{3/2}}{\sqrt b}+n\sqrt{b\kappa}\right)}
        {\ot\left((q/\eps)^{O(\log(1/\eps))}T\sqrt{\frac nb}+\Psi(n,b,\kappa)\sqrt{\frac nb}\right)}
        \enspace .
\end{align*}
Using $T\ge n$, the time bound may equivalently be written as
\begin{align*}
        \ot\left((q/\eps)^{O(\log(1/\eps))}T\sqrt{\frac nb}+\left(\frac n\kappa\right)^{3/2}\sqrt{\frac nb}+n\sqrt{b\kappa}\right)
        \enspace .
\end{align*}

\end{theorem}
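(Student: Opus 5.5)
We will instantiate the algorithm sketched in \cref{sec:putting-things-together} and prove three things: \emph{soundness} (whenever the algorithm outputs, the output is $Mx$, outside a small failure event); \emph{progress} (every round that does not output adds a certified pair $(g,Mg)$ with $g\notin W_G$); and the \emph{cost bound}. Fix the BR subspace $V$ from \cref{thm:BR}. Set $t=\lceil\log(1/\eps)/2\rceil+1$ and $c:=\codim(V)=O(\log(1/\eps))$. Since each new certified pair raises $\dim W_G$ by one, $G$ never exceeds size $d\le c$, and at most $c+1$ rounds are run. The list size is $K=q^dR$.

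We pick $R=\eps^{-(2t+1)}\ln(\cdot)$ so that \cref{lem:good-list} fails with probability at most $\delta=1/\mathrm{poly}(n)$ for each fixed vector. Hence $K=(q/\eps)^{O(\log(1/\eps))}\polylog n$.

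The consistency tree has $O(n/b)$ nodes, and every node vector $v_u$ is fixed by $x$ alone. Within a round, however, the relevant vectors (nodes and repair descendants) number only $\mathrm{poly}(n)$. A union bound over them and over all rounds therefore yields a \emph{good-list event} $\mathcal{E}$ of probability $1-o(1)$. On $\mathcal{E}$, every vector in $W_G$ examined has $Mv\in\calL_{G,\Omega}(v)$. By \cref{lem:select}, the selected value $y_u$ then equals $Mv_u$ whenever $v_u\in W_G$.

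\textbf{Soundness.} The predicate has two forms. For an internal node it is the exact check $y_u\ne y_{u0}+y_{u1}$. For a leaf it is $\mathsf{Verify}_\kappa$ applied to the explicit support of $v_u$ and to $y_u$. The leaf check is one-sided by \cref{lem:restricted-verification}. The node predicate, recomputed coherently, thus flags only genuinely bad nodes. \cref{lem:search} then says that every node the measured search returns is truly bad, and this holds with probability one.

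If the search returns \Fail, then either no bad node exists, or we are in the $\eta$-failure event of the search. In the first case, induction from the leaves up shows $y_{\mathrm{root}}=Mx$, which is the output.

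If a bad node is found, we descend to a minimal bad node. At each step we search the two child subtrees, and this costs a geometric sum. The repair goes as follows.

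For an internal minimal bad node $u$, both children are correct, so $y_{u0}+y_{u1}=Mv_u$. Moreover $v_u\notin W_G$, because $v_u\in W_G$ together with $\mathcal{E}$ would force $y_u=Mv_u=y_{u0}+y_{u1}$.

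For a bad leaf, we run the classical binary descent, using a measured $\mathsf{Verify}_\kappa$ on each vector, all of weight at most $b$. This parallels \textsc{Repair} from \cref{sec:warm-up-errorless}. A rejection certifies $y\ne Mv$, hence $v\notin W_G$ on $\mathcal{E}$. Two acceptances certify the sum, with error $\eta$ per call. We take $\eta=1/\mathrm{poly}(n)$ and union-bound over the $O(\log n)$ calls.

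\textbf{Cost accounting.} One coherent evaluation of the node predicate costs the following:
\begin{itemize}
\item generating $\calL_{G,\Omega}(v_u)$ and running $\Select$ for the node and its children, which is $\ot(KT)$ time and $\ot(K)$ oracle queries by \cref{lem:list-cost}, plus $K\wt(v_u)$ queries to $U_M$;
\item for leaves only, $\Psi(n,b,\kappa)$ time and $\ot(b\sqrt{n\kappa})$ queries to $U_M$.
\end{itemize}

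Here a difficulty appears. A node near the root has $\wt(v_u)$ up to $n$, so the internal-node selection uses $Kn$ queries to $U_M$. Multiplied by $\sqrt{n/b}$, this gives the stated term $K n^{3/2}/\sqrt b$.

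The leaf verifier contributes $\ot(b\sqrt{n\kappa})\cdot\sqrt{n/b}=\ot(n\sqrt{b\kappa})$ queries.

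Producing the sparse description of a node, and the $O(n)$ work for $w_\alpha$, is absorbed into $\ot(KT)$ because $T\ge n$.

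Combining these, the search over $N=O(n/b)$ nodes gives the claimed bounds through \cref{lem:search}. The descent adds only a constant factor, since the subtree sizes decrease geometrically. The leaf repair costs $O(\log b)$ measured verifications of $\ot(KT+\Psi)$ each, which is dominated by the search cost. The $O(\log(1/\eps))$ rounds are absorbed into the $\ot(\cdot)$.

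The final rewriting of the time bound uses $n\le T$ together with $b\sqrt{n\kappa}\sqrt{n/b}=n\sqrt{b\kappa}$.

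\textbf{Main obstacle.} The delicate step is making the predicate a legitimate one-sided reversible flagging circuit. Its inputs are the superposed node index, coherent list generation, $\Select$, and the nested coherent verifier. In addition, the minimal-bad-node descent must remain correct when the measured searches return marked items with probability one but find them only with probability $1-\eta$.

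I would handle this by fixing $\Omega$ classically before each search. The predicate is then a deterministic function of $u$, apart from the verifier's one-sided coin. For any failure of the descent, we simply restart the round. This failure happens with probability at most $\eta$ and is union-bounded in the total error, which keeps the overall success probability at least $2/3$.
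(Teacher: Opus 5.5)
Your proposal is correct and follows essentially the same route as the paper. The shared steps are:
\begin{itemize}
\item the consistency tree with a one-sided bad-node predicate (an exact consistency check at internal nodes, the restricted verifier at leaves);
\item search plus a geometric descent to a minimal bad node;
\item repair via the child sum or sparse extraction;
\item a union bound over list, verification and search failures;
\item the same per-predicate cost multiplied by $\sqrt{n/b}$.
\end{itemize}

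One small correction concerns ``restart the round''. A descent that stops at a non-minimal node cannot be detected, so restarting is not an actual mechanism. What carries the argument is the union bound over search-failure events, which you also invoke and which the paper uses. You also need a fixed cap of $L=O(\log(1/\varepsilon))$ rounds. That cap gives $|G|\le L$, so the resource bounds hold in the worst case and not only on the good event.
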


\begin{corollary}\label{cor:optimized}

Assume first that $n\le T\le n^{3/2}$, and choose $b=\Theta(T^{4/3}/n)$ and $\kappa=\Theta(n/T^{2/3})$.  Then the bounds are
\begin{align*}
        \Params
        {\ot\left((q/\eps)^{O(\log(1/\eps))}\frac{n}{T^{2/3}}\right)}
        {\ot\left((q/\eps)^{O(\log(1/\eps))}\frac{n^2}{T^{2/3}}+nT^{1/3}\right)}
        {\ot\left((q/\eps)^{O(\log(1/\eps))}nT^{1/3}\right)}
        \enspace .
\end{align*}
In particular, the displayed matrix-query bound is at most $\ot((q/\eps)^{O(\log(1/\eps))}nT^{1/3})$ throughout the range $T\ge n$.  If $T>n^{3/2}$, then choosing $b=n$ and $\kappa=1$ gives
\begin{align*}
        \Params
        {\ot\left((q/\eps)^{O(\log(1/\eps))}\right)}
        {\ot\left((q/\eps)^{O(\log(1/\eps))}n+n^{3/2}\right)}
        {\ot\left((q/\eps)^{O(\log(1/\eps))}T\right)}
        \enspace .
\end{align*}
In particular, when $T=\Theta(n)$, choosing $b=\kappa=\Theta(n^{1/3})$ gives running time and matrix-query complexity $\ot((q/\eps)^{O(\log(1/\eps))}n^{4/3})$.
\end{corollary}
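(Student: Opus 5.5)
The corollary follows by substituting the stated parameter choices into the three bounds of \cref{thm:main} and simplifying. No new algorithmic ingredient is needed. Throughout, write $K:=(q/\eps)^{O(\log(1/\eps))}$ for the common prefactor.

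\textbf{Step 1 (range $n\le T\le n^{3/2}$, admissibility).} Check that $b=\Theta(T^{4/3}/n)$ and $\kappa=\Theta(n/T^{2/3})$ lie in $[n]$ after rounding to integers.
\begin{itemize}
\item From $T\ge n$ we get $b\ge n^{1/3}$ and $\kappa\le n^{1/3}$.
\item From $T\le n^{3/2}$ we get $b\le n$ and $\kappa\ge 1$.
\end{itemize}
Rounding changes each quantity by at most a constant factor, and constants are absorbed in $\ot(\cdot)$.

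\textbf{Step 2 (range $n\le T\le n^{3/2}$, the three bounds).} The key identities are
\begin{align*}
\sqrt{n/b}=\Theta(n/T^{2/3}),\qquad (n/\kappa)^{3/2}=\Theta(T),\qquad b\kappa=\Theta(T^{2/3}).
\end{align*}
\begin{itemize}
\item \emph{Oracle queries.} The bound $K\sqrt{n/b}$ becomes $K\,n/T^{2/3}$.
\item \emph{Matrix queries.} The term $K n^{3/2}/\sqrt b$ becomes $K n^2/T^{2/3}$, and $n\sqrt{b\kappa}$ becomes $nT^{1/3}$.
\item \emph{Time.} Use the second form of the time bound in \cref{thm:main}.
\begin{itemize}
\item $KT\sqrt{n/b}=K\,nT^{1/3}$.
\item $(n/\kappa)^{3/2}\sqrt{n/b}=T\cdot n/T^{2/3}=nT^{1/3}$.
\item $n\sqrt{b\kappa}=nT^{1/3}$.
\end{itemize}
The total is $\ot(K\,nT^{1/3})$. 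The first form gives the same result, since $\Psi(n,b,\kappa)\sqrt{n/b}$ splits into $n^2/T^{2/3}$, $nT^{1/3}$ and $n\sqrt{b\kappa}$.
\item \emph{Claim that the matrix-query bound is at most $\ot(K nT^{1/3})$.} In this range it suffices that $n^2/T^{2/3}\le nT^{1/3}$, which is equivalent to $n\le T$.
\end{itemize}

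\textbf{Step 3 (range $T>n^{3/2}$).} Take $b=n$ and $\kappa=1$, so $\sqrt{n/b}=1$.
\begin{itemize}
\item Oracle queries become $K$.
\item Matrix queries become $K n^{3/2}/\sqrt n+n\sqrt n=Kn+n^{3/2}$.
\item Time becomes $KT+\Psi(n,n,1)=KT+O(n+n^{3/2}+n^{3/2})=\ot(KT)$, using $T>n^{3/2}$.
\end{itemize}
For the ``throughout $T\ge n$'' claim on matrix queries:
\begin{itemize}
\item $Kn\le K\,nT^{1/3}$ trivially.
\item $n^{3/2}\le nT^{1/3}$ because $T^{1/3}>n^{1/2}$.
\end{itemize}

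\textbf{Step 4 (the case $T=\Theta(n)$).} The general choice specializes to $b=\Theta(n^{4/3}/n)=\Theta(n^{1/3})$ and $\kappa=\Theta(n/n^{2/3})=\Theta(n^{1/3})$. The time and matrix-query bounds from Step 2 become $\ot(K n^{4/3})$, since $n^2/T^{2/3}=n^{4/3}$.

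\textbf{Main obstacle.} The argument is pure arithmetic. The only points needing care are the boundary checks that the chosen $b,\kappa$ remain in $[n]$ across the whole range, and the comparison $n^2/T^{2/3}\le nT^{1/3}$. Both reduce to the hypotheses $n\le T\le n^{3/2}$.
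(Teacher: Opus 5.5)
Your proposal is correct and matches the paper's proof. Like the paper, you substitute $b=\Theta(T^{4/3}/n)$, $\kappa=\Theta(n/T^{2/3})$ (respectively $b=n$, $\kappa=1$) into the bounds of the main theorem, use $n^2/T^{2/3}\le nT^{1/3}\iff n\le T$, and absorb the $n^{3/2}$ term into $T$ when $T>n^{3/2}$; your explicit check that $b,\kappa\in[n]$ throughout $n\le T\le n^{3/2}$ is a small detail the paper leaves implicit.
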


\subsection{The consistency tree}\label{sec:tree}

The starting point in proving \cref{thm:main} is to construct a binary tree in which each node corresponds to a vector.  Fix the requested input $x\in\F^n$ and the block parameter $b\in\{1,\ldots,n\}$ from \cref{thm:main}.  Let $b'=O(n/b)$ be a power of two, by adding zero blocks if necessary.  Build a balanced binary tree $\calT_x$ of height $h$ in which each node is denoted by a string $u\in \bits^{\le h}$.  For a node $u$, we denote its left and right children by $u0$ and $u1$.  Each node $u$ corresponds to a vector, and the root $r=\emptyset$ corresponds to the input $x$.  The vectors corresponding to the leaves are obtained by partitioning $[n]$ into blocks of size at most $b$ and letting $v_1,\ldots,v_{b'}$ be the corresponding restrictions of $x$ to each block, together with the added zero leaves.  Thus $\wt(v_i)\le b$.  For every  node $u$, let $v_u$ be the sum of the leaf vectors in the subtree rooted at $u$, so the root $r$ satisfies $v_r=x$.  Also, denote the number of nodes in the tree by $N_{\calT}$, so that $N_{\calT}=O(n/b)$.

Having defined the tree, we now move on to defining what we denote as a bad node.  First, for a fixed set $G$ and fixed randomness table $\Omega$, define the selected value at a node by $y_u:=\Select(v_u,\calL_{G,\Omega}(v_u))$.  Note that the selected values are not assumed to be correct, since $v_u$ may not necessarily be in $W_G$.  However, our focus in defining bad nodes is not necessarily correctness, but consistency with the children.

\begin{definition}[Bad node]\label{def:bad}
A leaf $u$ is \emph{bad} if $y_u\ne Mv_u$.  An internal node $u$ with children $u0,u1$ is \emph{bad} if $y_u\ne y_{u0}+y_{u1}$.
A bad node is \emph{minimal} if none of its proper descendants is bad.
\end{definition}

The next observation shows that if the tree has no bad nodes, then we are done.

\begin{observation}\label{lem:no-bad}
If the subtree rooted at $u$ contains no bad node, then $y_u=Mv_u$.  In particular, if the whole tree has no bad node, then $y_r=Mx$.
\end{observation}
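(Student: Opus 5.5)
I will prove this by strong induction on the height of the subtree rooted at $u$, that is, bottom-up over the consistency tree $\calT_x$. The only facts needed are \cref{def:bad} and the linearity of $M$. In particular, the statement uses no property of $W_G$, of $\Omega$, or of tournament selection: $y_u$ is simply some fixed vector attached to node $u$.

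\textbf{Base case.} Let $u$ be a leaf. The subtree rooted at $u$ is just $\{u\}$, and it contains no bad node by hypothesis. Hence $u$ is not bad, and by the leaf clause of \cref{def:bad} this means $y_u=Mv_u$. This also covers the padded zero leaves: since $v_u=0$ there, the leaf condition simply says $y_u=0$.

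\textbf{Inductive step.} Let $u$ be internal with children $u0,u1$, and suppose the subtree rooted at $u$ has no bad node. Each child subtree is contained in the subtree at $u$, so neither contains a bad node. The induction hypothesis therefore gives $y_{u0}=Mv_{u0}$ and $y_{u1}=Mv_{u1}$. Since $u$ itself is not bad, the internal clause of \cref{def:bad} gives $y_u=y_{u0}+y_{u1}$.

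By construction, $v_u$ is the sum of the leaf vectors below $u$. The leaves below $u$ are the disjoint union of the leaves below $u0$ and the leaves below $u1$, so $v_u=v_{u0}+v_{u1}$. Linearity of $M$ then yields
\begin{align*}
        y_u=Mv_{u0}+Mv_{u1}=M(v_{u0}+v_{u1})=Mv_u
        \enspace .
\end{align*}

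\textbf{The root.} Apply the claim at the root $r$, using $v_r=x$, to obtain $y_r=Mx$.

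I do not expect a genuine obstacle. The one point to state explicitly is the additivity $v_u=v_{u0}+v_{u1}$, which follows directly from the definition of $v_u$ as a sum over the leaves of the subtree.
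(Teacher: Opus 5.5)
Your proof is correct and follows the paper's argument: induction on the height of the subtree. The leaf case comes directly from \cref{def:bad}, and the internal case combines the induction hypothesis, the non-badness of $u$, and linearity of $M$. You also state the additivity $v_u=v_{u0}+v_{u1}$ explicitly, which the paper uses only implicitly.
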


\begin{proof}
We prove this by induction on the height of the subtree.  For a leaf, the assertion is exactly the definition of not being bad.  For an internal node, both child subtrees contain no bad node, so the induction hypothesis gives $y_{u0}=Mv_{u0}$ and $y_{u1}=Mv_{u1}$.  Since $u$ is not bad,
\begin{align*}
        y_u=y_{u0}+y_{u1}=Mv_{u0}+Mv_{u1}=Mv_u\enspace . 
\end{align*}
\end{proof}

Next, we show how to make progress by finding a minimal bad node.  In this part, we assume there exists a procedure called $\Extract$ such that, when called on a leaf $u$ and $G$ with $v_u\notin W_G$, it outputs a pair $(z,Mz)$ where $z\notin W_G$.

\begin{lemma}\label{lem:minimal}
Assume $G$ is certified, and assume $\Omega$ is a \emph{good} randomness table, in the following sense (we often call this the good-list event).  For every $u\in\Nodes(\calT_x)$, we have
\begin{align*}
        v_u\in W_G \quad\Longrightarrow\quad Mv_u\in\calL_{G,\Omega}(v_u)\enspace .
\end{align*}
Under these conditions, let $u$ be a minimal bad node.  Then:
\begin{enumerate}[label=(\alph*)]
\item If $u$ is internal with children $u0,u1$, then $v_u\notin W_G$ and $y_{u0}+y_{u1}=Mv_u$.  Thus, appending $(v_u,y_{u0}+y_{u1})$ to $G$ keeps $G$ certified and increases the quotient dimension by at least one.
\item If $u$ is a leaf, then $v_u\notin W_G$.  Hence $\Extract(v_u,G)$ satisfies the precondition and outputs $(z,Mz)$.  Adding this pair to $G$ increases the quotient dimension by one.
\end{enumerate}
\end{lemma}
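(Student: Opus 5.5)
The argument rests on the good-list event used contrapositively, together with \cref{lem:select}: for any node $w$, if $v_w\in W_G$, then $Mv_w\in\calL_{G,\Omega}(v_w)$ by the good-list assumption, so $y_w=Mv_w$ by \cref{lem:select}. Equivalently, $y_w\neq Mv_w$ forces $v_w\notin W_G$. Every case will be reduced to showing that the minimal bad node $u$ satisfies $y_u\neq Mv_u$.

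\emph{Case (a), $u$ internal.} Minimality means neither $u0$ nor $u1$ is bad, and neither subtree contains a bad node, since those are proper descendants of $u$. Applying \cref{lem:no-bad} to the subtrees rooted at $u0$ and $u1$ gives $y_{u0}=Mv_{u0}$ and $y_{u1}=Mv_{u1}$. Because $v_u=v_{u0}+v_{u1}$, linearity yields $y_{u0}+y_{u1}=Mv_u$. Since $u$ is bad, $y_u\neq y_{u0}+y_{u1}=Mv_u$, and the observation above gives $v_u\notin W_G$. The new pair $(v_u,y_{u0}+y_{u1})$ is therefore correct, so $G$ stays certified. Adding $v_u$ to the span gives $W_{G'}=W_G+\spanop\{v_u\}$, which strictly contains $W_G$. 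Hence $\dim(\F^n/W_G)$, the quotient dimension, drops by exactly one; in the statement's phrasing, the dimension of $W_G$ increases by at least one.

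\emph{Case (b), $u$ a leaf.} Here badness is by definition $y_u\neq Mv_u$, so the observation directly gives $v_u\notin W_G$. This is exactly the precondition of $\Extract$, which by assumption returns $(z,Mz)$ with $z\notin W_G$. Appending this pair keeps $G$ certified and enlarges $W_G$ by $\spanop\{z\}$, again decreasing the quotient dimension by exactly one.

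I expect no real obstacle. The one point needing care is that the good-list hypothesis is stated only for tree nodes, and here it is applied only to $v_u$, a tree node, so the hypothesis suffices. The implication is one-sided: it is never used to conclude membership in $W_G$. One should also note that the statement's ``at least one'' is in fact ``exactly one'' for a single vector outside $W_G$.
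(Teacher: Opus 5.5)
Your proof is correct and follows the paper's argument essentially step for step: in both cases, minimality together with \cref{lem:no-bad} gives correct child values, and the good-list event combined with \cref{lem:select}, used contrapositively on the tree node $u$, forces $v_u\notin W_G$. The paper's ``quotient dimension'' refers to the growth of $W_G$ relative to $V$ rather than to $\dim(\F^n/W_G)$, but as you note, these describe the same one-step enlargement of $W_G$.
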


\begin{proof}
Suppose first that $u$ is internal.  By minimality, neither child subtree contains a bad node.  \cref{lem:no-bad} gives $y_{u0}=Mv_{u0}$ and $y_{u1}=Mv_{u1}$, hence $y_{u0}+y_{u1}=Mv_u$.  Since $u$ is bad, $y_u\ne Mv_u$, but if $v_u\in W_G$, the good-table assumption and \cref{lem:select} would imply $y_u=Mv_u$.  Therefore $v_u\notin W_G$.  Hence, adding $(v_u,y_{u0}+y_{u1})$ to $G$ increases the dimension of $V+\spanop\{g:(g,h)\in G\}$ by at least one.

If $u$ is a leaf, badness means $y_u\ne Mv_u$.  If $v_u\in W_G$, the good-list event and selection would again imply $y_u=Mv_u$, a contradiction.  Thus $v_u\notin W_G$.  Applying $\Extract$ to the leaf produces a certified new direction on the good event.
\end{proof}

The previous lemma shows that in order to make progress towards increasing the space of elements we can self-correct, it suffices to find a minimal bad node.  This is exactly what we do using a quantum search.  To do so and apply \cref{lem:search}, we need to show that we can implement this predicate with one-sided error.  This is what we show next.

\begin{lemma}\label{lem:bad-cost}
Fix $G$ and $\Omega$, let $K=q^{|G|}R$, and define the ideal local predicate
\begin{align*}
        P_{\mathrm{bad}}^{G,\Omega}(u)=1
        \quad\Longleftrightarrow\quad
        u\text{ is bad in the sense of \cref{def:bad}}
        \enspace .
\end{align*}
For every $\eta\in(0,1)$, there is a coherent one-sided flagging circuit $B_{\mathrm{bad}}^{G,\Omega}(\eta)$ for this predicate.  On a clean input $|u\rangle|0\rangle$, it has no false positives:
\begin{align*}
        P_{\mathrm{bad}}^{G,\Omega}(u)=0
        \quad\Longrightarrow\quad
        \left\|
        \Pi_{\mathrm{flag}}
        B_{\mathrm{bad}}^{G,\Omega}(\eta)|u\rangle|0\rangle\right\|^2=0
        \enspace ,
\end{align*}
and it raises the flag with high probability on bad nodes:
\begin{align*}
        P_{\mathrm{bad}}^{G,\Omega}(u)=1
        \quad\Longrightarrow\quad
        \left\|
        \Pi_{\mathrm{flag}}
        B_{\mathrm{bad}}^{G,\Omega}(\eta)|u\rangle|0\rangle
        \right\|^2
        \ge 1-\eta
        \enspace .
\end{align*}
Equivalently, measuring the flag gives a one-sided predicate which never outputs $1$ on a non-bad node and outputs $1$ with probability at least $1-\eta$ on a bad node.

The resource bounds for one call to the flagging circuit are
\begin{align*}
        \Params
        {\ot\left(K\log(1/\eta)\right)}
        {\ot\left((Kn+b\sqrt{n\kappa})\log(1/\eta)\right)}
        {\ot\left((KT+\Psi(n,b,\kappa))\log(1/\eta)\right)}
        \enspace .
\end{align*}
\end{lemma}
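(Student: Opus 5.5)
The circuit $B_{\mathrm{bad}}^{G,\Omega}(\eta)$ will branch on whether the node register $|u\rangle$ holds a leaf or an internal node; the depth of $u$ is a classical function of its label, so this is a reversible control. Everything except the leaf verification is deterministic reversible classical computation: compute work registers, XOR a bit into the flag, uncompute. That part cannot produce false positives. The only probabilistic ingredient is $\mathsf{Verify}_\kappa$ from \cref{lem:restricted-verification}, which is itself one-sided. So the zero-amplitude guarantee on non-bad nodes will follow by composition, and the only error to control is the $\eta$ failure of the verifier on bad leaves.

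\emph{Shared first step.} Controlled on $u$, compute an explicit sparse description of $v_u$. This is the restriction of $x$ to the coordinate interval covered by the subtree of $u$. At internal nodes it may have up to $n$ nonzeros, so writing the full vector costs $\ot(n)$. Then generate $\calL_{G,\Omega}(v_u)$ by the procedure of \cref{lem:list-cost}, with $\Omega$ and $G$ held as fixed classical controls. This costs $\ot(KT)$ and $\ot(K)$ queries to $U_\OO$. Next run $\Select$ from \cref{lem:select}, costing $O(K(n+\wt(v_u)))=O(Kn)$ time and at most $Kn$ queries to $U_M$, to obtain $y_u$.

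\emph{Internal nodes.} Additionally produce $y_{u0}$ and $y_{u1}$ in the same way from $v_{u0}$ and $v_{u1}$, which are computable from $u$. This at most triples the cost. Compare $y_u$ with $y_{u0}+y_{u1}$ in $O(n)$ time, XOR the result into the flag, and uncompute all three lists and selections. This flag equals $P_{\mathrm{bad}}(u)$ exactly.

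\emph{Leaves.} Here $\wt(v_u)\le b$, and the pairs $(j,v_j)$ are read off $x$ on the leaf's block. Call $\mathsf{Verify}^M_\kappa(J,(v_j),y_u;\eta)$ and copy its reject flag into the output flag. This costs $\ot((n+(n/\kappa)^{3/2}+b\sqrt{n\kappa})\log(1/\eta))$ time and $\ot(b\sqrt{n\kappa})$ queries to $U_M$; the $\log(1/\eta)$ factor comes from the coherent OR inside the verifier. By \cref{lem:restricted-verification}, the reject flag has amplitude zero when $y_u=Mv_u$, and it is $1$ with probability at least $1-\eta$ otherwise.

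\emph{Uncomputation (main obstacle).} Copying a flag out of a non-deterministic subroutine leaves the verifier's workspace entangled. I would handle this the standard way: CNOT the verifier's flag into the output flag and leave the verifier's garbage in ancillas, which do not need to be clean for the flag guarantees. Alternatively, one can run the verifier inverse afterwards. Either way, the flagged component is exactly the verifier's flagged component, since the surrounding classical computation is a permutation controlled on $u$. Hence the zero-amplitude property and the $1-\eta$ lower bound transfer verbatim.

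If an application needs a clean workspace, as \cref{lem:search} uses $B$ as a flagging procedure on $|u\rangle|0\rangle$, the garbage is harmless there. This is because the proof of \cref{lem:search} only uses $\|\Pi_{\mathrm{flag}}B|u\rangle|0\rangle\|^2$ and the fact that flagged support lies on marked $u$.

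\emph{Resource bounds.} Adding the pieces gives $\ot(K\log(1/\eta))$ queries to $U_\OO$. Queries to $U_M$ total $\ot((Kn+b\sqrt{n\kappa})\log(1/\eta))$, from $3Kn$ for selection plus the verifier. Time is $\ot((KT+n+\Psi(n,b,\kappa))\log(1/\eta))$; the $n$ term and the $Kn$ selection cost are absorbed using $T\ge n$ and $\Psi\ge n$. No other step requires any new idea.
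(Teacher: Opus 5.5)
Your proposal is correct and follows essentially the same route as the paper: the three selected values at an internal node are computed by deterministic reversible list generation and tournament selection, and an exact comparison bit is written. At a leaf, the $x$-slice is materialized and the one-sided restricted verifier is run. One-sidedness then follows from block-diagonality in the node register, and the costs add up, with the $n$ and $Kn$ terms absorbed via $T\ge n$ and $\Psi\ge n$. The only cosmetic difference is that you uncompute the classical lists at internal nodes, whereas the paper simply keeps all work registers (including the verifier's) as garbage; as you observe, this is harmless for \cref{lem:search}.
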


\begin{proof}
All randomness in \(\Omega\) is sampled before the quantum search begins and
is then held fixed as a reversible classical control.  Thus
\(P_{\mathrm{bad}}^{G,\Omega}\) is a fixed predicate during the search.

We describe the flagging circuit on an arbitrary superposition of nodes.  Let
\[
        |\psi_0\rangle
        =
        \sum_{u\in\Nodes(\calT_x)}
        \alpha_u
        |u\rangle
        |0\rangle_{\mathrm{type}}
        |0\rangle_{\mathrm{data}}
        |0\rangle_{\mathrm{flag}}
        |0\rangle_{\mathrm{work}}
        \enspace .
\]
The register \(\mathrm{type}\) will store 0 for an
internal node, and 1 if it is a leaf.  The register \(\mathrm{data}\) stores the selected values
and, for leaves, the explicit sparse description that is passed to the
restricted verifier.

First, the circuit computes the type of \(u\), and also the relevant
intervals in the tree.  This uses only reversible arithmetic on the node
label.  For every node \(u\), it can compute the interval of leaves below
\(u\), and hence the support interval of the vector \(v_u\).  This costs only
polylogarithmic factors.

We next compute the selected values.  Given \(u\), the circuit materializes
the vector \(v_u\) from the input \(x\), generates the list
\(\calL_{G,\Omega}(v_u)\), and runs the deterministic tournament procedure
\[
        y_u:=\Select(v_u,\calL_{G,\Omega}(v_u))
        \enspace .
\]
This whole computation is reversible by keeping the work registers for list
generation and tournament selection.  By \cref{lem:list-cost,lem:select},
one selected value costs \(\ot(KT)\) time, uses \(\ot(K)\) queries to
\(U_\OO\), and uses at most \(\ot(Kn)\) queries to \(U_M\).  The cost of materializing \(v_u\) by scanning \(x\) is
\(\ot(n)\), and is absorbed because \(T\ge n\).

After this step, the state has the form
\[
\begin{aligned}
        |\psi_1\rangle
        =
        &\sum_{u\ \mathrm{internal}}
        \alpha_u
        |u\rangle
        |\mathrm{0}\rangle
        |y_u,y_{u0},y_{u1}\rangle
        |0\rangle_{\mathrm{flag}}
        |\sigma_u\rangle                                      \\
        &+
        \sum_{u\ \mathrm{leaf}}
        \alpha_u
        |u\rangle
        |\mathrm{1}\rangle
        |y_u\rangle
        |0\rangle_{\mathrm{flag}}
        |\sigma_u\rangle
        \enspace .
\end{aligned}
\]
Here \(|\sigma_u\rangle\) denotes the remaining workspace.  For an
internal node we compute three selected values, namely those for \(u,u0,u1\).
For a leaf we compute only \(y_u\) at this stage.

Now consider the internal-node branches.  On such a branch, the circuit
computes the exact bit
\[
        f_{\mathrm{int}}(u)
        :=
        \mathbf 1[y_u\ne y_{u0}+y_{u1}]
        \enspace ,
\]
XORs this bit into the designated flag qubit, and uncomputes the temporary
comparison workspace.  Thus on internal nodes the flag is exactly the
badness predicate:
\[
        f_{\mathrm{int}}(u)
        =
        P_{\mathrm{bad}}^{G,\Omega}(u)
        \enspace .
\]

It remains to describe the leaf branches.  For a leaf \(u\), let \(I_u\) be
the interval of coordinates corresponding to that leaf.  The circuit now
prepares the explicit sparse description of \(v_u\) that will be given to
the restricted verifier.  It scans the input register \(x\) once and writes
\[
        \bigl(j,x_j\bigr)_{j\in I_u}
\]
into a workspace register.  Since the leaves are consecutive blocks,
this is done coherently as follows.  For each coordinate \(r\in[n]\), write
\(r=qb+\ell\).  On the branch indexed by \(u\), the circuit checks whether
\(q\) is the block index of the vector corresponding to leaf \(u\).  If so, it copies \(x_r\) into
the fixed output slot \(\ell\).  This is a reversible scan over \(x\), costs
\(\ot(n)\) gates, and is performed only once before the verifier is called.
The resulting state on the leaf part is
\[
        \sum_{u\ \mathrm{leaf}}
        \alpha_u
        |u\rangle
        |\mathrm{1}\rangle
        |y_u\rangle
        |(j,x_j)_{j\in I_u}\rangle
        |0\rangle_{\mathrm{flag}}
        |\sigma'_u\rangle
        \enspace .
\]
This is the point at which the outer circuit has produced the explicit
values needed by the inner verifier.

The circuit now invokes the restricted verifier
\[
 \mathsf{Verify}^M_\kappa
        \bigl(I_u,(x_j)_{j\in I_u},y_u;\eta\bigr)
\]
on the leaf branches. Since
\(|I_u|\le b\), \cref{lem:restricted-verification} gives running time
\[
        \ot\bigl(\Psi(n,b,\kappa)\log(1/\eta)\bigr)
        \enspace
\]
and \(\ot(b\sqrt{n\kappa}\log(1/\eta))\) queries to \(U_M\).  If
\(y_u=Mv_u\), the verifier's reject flag has amplitude zero.  If
\(y_u\ne Mv_u\), the reject flag is equal to \(1\) with probability at least
\(1-\eta\).

Thus, after the leaf verifier is applied, the state has the form
\[
\begin{aligned}
        |\psi_2\rangle
        =
        &\sum_{u\ \mathrm{internal}}
        \alpha_u
        |u\rangle
        |\mathrm{0}\rangle
        |y_u,y_{u0},y_{u1}\rangle
        |f_{\mathrm{int}}(u)\rangle_{\mathrm{flag}}
        |\tau_u\rangle                                      \\
        &+
        \sum_{u\ \mathrm{leaf}}
        \alpha_u
        |u\rangle
        |\mathrm{1}\rangle
        |y_u\rangle
        |(j,x_j)_{j\in I_u}\rangle
        |\Gamma_u\rangle_{\mathrm{flag},\mathrm{work}}
        \enspace .
\end{aligned}
\]
For a leaf \(u\), the verifier guarantee says
\[
        y_u=Mv_u
        \quad\Longrightarrow\quad
        \|\Pi_{\mathrm{flag}}|\Gamma_u\rangle\|^2=0
        \enspace ,
\]
and
\[
        y_u\ne Mv_u
        \quad\Longrightarrow\quad
        \|\Pi_{\mathrm{flag}}|\Gamma_u\rangle\|^2\ge 1-\eta
        \enspace .
\]

We now verify the one-sided predicate guarantee.  If \(u\) is an internal
node, the flag equals \(P_{\mathrm{bad}}^{G,\Omega}(u)\) exactly.  If \(u\)
is a leaf, then \(u\) is bad exactly when \(y_u\ne Mv_u\), and the restricted
verifier has the one-sided guarantee above.  Therefore, for every basis
state \(u\),
\[
        P_{\mathrm{bad}}^{G,\Omega}(u)=0
        \quad\Longrightarrow\quad
        \left\|
        \Pi_{\mathrm{flag}}
        B_{\mathrm{bad}}^{G,\Omega}(\eta)|u\rangle|0\rangle \right\|^2=0
        \enspace ,
\]
and
\[
        P_{\mathrm{bad}}^{G,\Omega}(u)=1
        \quad\Longrightarrow\quad
        \left\|
        \Pi_{\mathrm{flag}}
        B_{\mathrm{bad}}^{G,\Omega}(\eta)|u\rangle|0\rangle
        \right\|^2
        \ge 1-\eta
        \enspace .
\]
The construction is controlled by the classical node label \(u\), so it is
block diagonal in the node register.  Hence the same guarantee holds on
arbitrary superpositions of nodes.  In particular, measuring the flag gives
a one-sided implementation of the bad-node predicate.

It remains to account for the resources.  On an internal node, the circuit
computes three selected values and an exact equality check.  This costs
\(\ot(KT)\) time, \(\ot(K)\) queries to \(U_\OO\), and \(\ot(Kn)\) queries
to \(U_M\), up to constant factors.  On a leaf, the circuit computes one
selected value, scans \(x\) once to write the leaf slice, and runs the
restricted verifier.  The scan of \(x\) costs \(\ot(n)\) gates and is
absorbed in the \(n\) term of \(\Psi(n,b,\kappa)\).  The verifier contributes
\(\ot(\Psi(n,b,\kappa)\log(1/\eta))\) time and
\(\ot(b\sqrt{n\kappa}\log(1/\eta))\) queries to \(U_M\).  Using a common
\(\log(1/\eta)\) factor for the statement, the resource bounds are
\[
        \Params
        {\ot\left(K\log(1/\eta)\right)}
        {\ot\left((Kn+b\sqrt{n\kappa})\log(1/\eta)\right)}
        {\ot\left((KT+\Psi(n,b,\kappa))\log(1/\eta)\right)}
        \enspace .
\]
This proves the lemma.
\end{proof}

Having defined the predicate, we now move on to show how to use it within a quantum search to find a minimal bad node.

\begin{lemma}\label{lem:find-minimal}
Fix $G$ and $\Omega$, and an error parameter $\delta\in(0,1)$.  If a bad node exists, then a minimal bad node can be found with probability at least $1-\delta$.  If no bad node exists, the first search returns \Fail with probability one.  The resource bounds are
\begin{align*}
        \Params
        {\ot\left(K\sqrt{N_\calT}\log(1/\delta)\right)}
        {\ot\left((Kn+b\sqrt{n\kappa})\sqrt{N_\calT}\log(1/\delta)\right)}
        {\ot\left((KT+\Psi(n,b,\kappa))\sqrt{N_\calT}\log(1/\delta)\right)}
        \enspace .
\end{align*}
\end{lemma}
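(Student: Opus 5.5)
The plan is to combine the one-sided predicate circuit of \cref{lem:bad-cost} with the measured search of \cref{lem:search}, and then descend the tree by further searches restricted to subtrees. Fix $G$ and $\Omega$. Let $\calU_w$ denote the set of nodes in the subtree rooted at $w$, so $|\calU_w|=O(2^{h-|w|})$. For each $w$ we build a measured search $\mathsf{S}_w$ over $\calU_w$. It prepares a uniform superposition over $\calU_w$, which is cheap reversible arithmetic on node labels, and uses $B_{\mathrm{bad}}^{G,\Omega}(1/3)$ as the flagging procedure. By \cref{lem:bad-cost} this flagging procedure has zero amplitude on non-bad nodes and flag probability at least $2/3$ on bad nodes, which is exactly the hypothesis of \cref{lem:search}. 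Hence $\mathsf{S}_w$ never returns a non-bad node. It returns \Fail with probability one if $\calU_w$ has no bad node, and it returns a bad node with probability at least $1-\eta'$ otherwise.

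\textbf{Procedure.} First run $\mathsf{S}_r$ on the whole tree with error $\eta'$. If it returns \Fail, we report that there is no bad node; by the one-sided guarantee this is correct with probability one in the bad-node-free case. Otherwise we hold a bad node $u$ and descend from it.
\begin{itemize}
\item Maintain a current bad node $u$.
\item While $u$ is internal, run $\mathsf{S}_{u0}$ and then, if needed, $\mathsf{S}_{u1}$.
\item If a search returns a bad node $u'$, set $u\gets u'$ and continue. Since $u'$ lies strictly below $u$, the depth increases.
\item If both searches return \Fail, stop and output $u$.
\end{itemize}

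\textbf{Correctness.} The output is always a bad node, because searches never return non-bad nodes. Suppose every search that was run on a subtree containing a bad node succeeded. Then when we stop, both child subtrees of $u$ contain no bad node, so $u$ is minimal. The same holds trivially when $u$ is a leaf, since a leaf has no proper descendants. There are at most $2h+1=O(\log(n/b))$ searches in total. Choosing $\eta'=\delta/(2h+1)$ and applying a union bound gives overall success probability at least $1-\delta$, at the cost of only $\log\log$ overhead inside $\log(1/\eta')$.

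\textbf{Cost.} Each search over $\calU_w$ calls the flagging circuit $\ot(\sqrt{|\calU_w|}\log(1/\eta'))$ times. Summing over the descent gives
\[
\sum_{k}\sqrt{2^{h-k}}=O\left(\sqrt{N_\calT}\right),
\]
a geometric series. The depths visited are strictly increasing, and at most two searches are made per depth, so the same bound holds even if we only use the crude bound of one search per level. Multiplying by the per-call cost from \cref{lem:bad-cost} with $\eta=1/3$ gives the stated bounds:
\begin{itemize}
\item $\ot(K)$ queries to $U_\OO$,
\item $\ot(Kn+b\sqrt{n\kappa})$ queries to $U_M$,
\item $\ot(KT+\Psi(n,b,\kappa))$ time,
\end{itemize}
each multiplied by $\sqrt{N_\calT}\log(1/\delta)$.

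\textbf{Main obstacle.} The subtle step is ensuring that the measured search outputs only genuinely bad nodes even though the flagging circuit is itself a coherent bounded-error verifier. This is handled by part 3 of \cref{lem:fpaa} as used in \cref{lem:search}. We must check that restricting the uniform superposition to a subtree still yields a one-dimensional source space; it does, since $\calU_w$ is fixed classically before each search. We must also check that the marked-amplitude lower bound $\lambda=1/(2|\calU_w|)$ is used per subtree.
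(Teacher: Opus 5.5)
Your proposal is correct and follows essentially the same route as the paper's proof. Both run a one-sided measured search over the whole tree, then descend by searching the two child subtrees, where a failed search certifies that the subtree is clean, and both bound the cost by a geometric sum of square roots of subtree sizes. The only differences are cosmetic: you use $B_{\mathrm{bad}}^{G,\Omega}(1/3)$, which matches the hypothesis of the search lemma directly, instead of the paper's $B_{\mathrm{bad}}^{G,\Omega}(1/10)$ with adjusted constants, and a uniform failure parameter $\delta/(2h+1)$ instead of the paper's depth-dependent $\rho_i=\delta/(100(i+1)^2)$.
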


\begin{proof}
We first define a search subroutine $\mathsf{Search}(S,\rho)$ for a set $S$ of nodes and an error parameter $\rho\in(0,1)$.  It applies \cref{lem:search} to the set $S$, using the bad-node flagging circuit $B_{\mathrm{bad}}^{G,\Omega}(1/10)$ from \cref{lem:bad-cost} as the one-sided flagging procedure.  This circuit never flags a non-bad node, and it flags a bad node with probability at least $9/10$, so it satisfies the hypothesis of \cref{lem:search} after adjusting constants.

Therefore, if $S$ contains a bad node, then $\mathsf{Search}(S,\rho)$ returns a bad node with probability at least $1-\rho$.  If $S$ contains no bad node, then $\mathsf{Search}(S,\rho)$ returns \Fail with probability one.  The resource bounds for $\mathsf{Search}(S,\rho)$ are
\begin{align*}
        \Params
        {\ot\left(K\sqrt{|S|}\log(1/\rho)\right)}
        {\ot\left((Kn+b\sqrt{n\kappa})\sqrt{|S|}\log(1/\rho)\right)}
        {\ot\left((KT+\Psi(n,b,\kappa))\sqrt{|S|}\log(1/\rho)\right)}
        \enspace .
\end{align*}

First run $\mathsf{Search}(\Nodes(\calT_x),\rho_0)$, where $\rho_0$ will be chosen below.  If the tree has no bad node, then the searched set contains no bad node, and by the one-sided property above the search returns \Fail with probability one.  This proves the no-bad-node part of the lemma.

Assume now that at least one bad node exists.  Except with probability $\rho_0$, the first search returns some bad node $u$.  Starting from this node, we descend as follows.  If $u$ is a leaf, output $u$.  If $u$ is internal, let $S_0$ and $S_1$ be the node sets of the two child subtrees rooted at $u0$ and $u1$.  First run $\mathsf{Search}(S_0,\rho_i)$.  If this returns a bad node, move to that node and continue.  If it returns \Fail, run $\mathsf{Search}(S_1,\rho_i)$.  If this returns a bad node, move to that node and continue.  If both child searches return \Fail, output the current node $u$.

The one-sided no-false-positive property is crucial.  A search over a subtree containing no bad node returns \Fail with probability one, so the descent never moves into a clean subtree.  Conversely, if a child subtree contains a bad node, the only way to miss all bad nodes in that subtree is the allotted failure event of that search.  Therefore, outside the union of all search-failure events, every move is to a subtree containing a bad node, and the algorithm stops exactly when neither child subtree contains a bad node.  The output is then bad and has no bad proper descendant, namely it is a minimal bad node.

Choose, for example, $\rho_0\le \delta/100$ and $\rho_i=\delta/(100(i+1)^2)$ for the searches made at depth $i$.  There are at most $O(\log N_\calT)$ depths and at most two child searches per depth, so the sum of the failure probabilities is at most $\delta$, after changing constants.

It remains to bound the cost.  Along the descent, the searched subtree sizes decrease geometrically.  If the current subtree has size $s$, each child subtree has size at most $s/2$ up to harmless rounding.  Thus, even charging for both child searches at every level, the sum of square roots of searched domain sizes is bounded by
\begin{align*}
        \sqrt{N_\calT}
        +2\sqrt{N_\calT/2}
        +2\sqrt{N_\calT/4}
        +\cdots
        =
        O(\sqrt{N_\calT})
        \enspace .
\end{align*}
The confidence parameters contribute only logarithmic factors in $N_\calT$ and $1/\delta$, which are hidden in $\ot(\cdot)$ except for the displayed $\log(1/\delta)$ dependence.  Multiplying by the bad-node flagging cost from \cref{lem:bad-cost} gives the stated resource bounds.
\end{proof}

\subsection{Sparse extraction}\label{sec:extraction}

In the statement of \cref{lem:minimal}, we assumed the existence of a subroutine $\Extract$ such that, given a bad leaf, it finds a quotient direction to increase the dimension of $W_G$.  Here, we state how this procedure works.  The subroutine is essentially a classical binary search below the $b$-sparse leaf.

\begin{tcolorbox}[title=Recursive procedure: \textsc{Extract}]
\paragraph{Input:} Oracles $\OO$ and $M$, current set $G$, and a $b$-sparse vector $v\in\F^n$ which is known to be bad.

\paragraph{Goal:} Add one new certified pair to $G$.

\begin{itemize}
    \item If $\operatorname{wt}(v)=1$, compute $Mv$ directly using $M$, update $G\gets G\cup\{(v,Mv)\}$, and return.

    \item Otherwise, split $v$ into $v=v_0+v_1$, where each vector has half the non-zero support of $v$, by extending the tree one layer further.

    \item Sample fresh BR randomness tables $\Omega_0$ and $\Omega_1$ for the two children, and compute
    \begin{align*}
        y_0=\Select\bigl(v_0,\calL_{G,\Omega_0}(v_0)\bigr)\enspace ,
        \qquad
        y_1=\Select\bigl(v_1,\calL_{G,\Omega_1}(v_1)\bigr)
        \enspace .
    \end{align*}

    \item Run restricted verification for the two claims $y_0=Mv_0$ and $y_1=Mv_1$.

    \begin{itemize}
        \item If the first verification rejects, call $\textsc{Extract}(\OO,M,G,v_0)$.

        \item Otherwise, if the second verification rejects, call $\textsc{Extract}(\OO,M,G,v_1)$.

        \item Otherwise, both children have verified successfully.  Update $G\gets G\cup\{(v,y_0+y_1)\}$.
    \end{itemize}
\end{itemize}
\end{tcolorbox}

The intended logic is simple.  A bad child cannot lie in $W_G$ on the good-list event, because the good-list event and deterministic selection would make its selected value correct.  If neither child is detected as bad, then both selected child values are correct and their sum gives $Mv$.

\begin{lemma}[Extraction progress]\label{lem:extract}
Assume that the initial list $G$ is certified.  Suppose that throughout one invocation of $\Extract(v,G)$ the following two conditions hold for the current list $G$:
\begin{enumerate}
\item Every generated child list $\calL_i$ for a child vector $v_i\in W_G$ contains $Mv_i$;
\item Every extraction equality test whose selected value is not the correct product returns $\No$.
\end{enumerate}
If $v\notin W_G$ and $\wt(v)\le b$, then $\Extract(v,G)$ returns after at most $O(\log b)$ recursive levels, appends one certified pair $(g,Mg)$, and increases
\begin{align*}
        \dim\bigl(V+\spanop\{g':(g',h')\in G\}\bigr)
        \enspace
\end{align*}
by at least one.
\end{lemma}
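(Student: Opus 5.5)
We argue by induction on $\wt(v)$, maintaining the invariant that the vector $v$ passed to the current invocation satisfies $v\notin W_G$ for the current (unchanged) list $G$. The list $G$ is only modified at termination, so $W_G$ is fixed throughout the descent. The base case $\wt(v)=1$ is immediate. The pair $(v,Mv)$ is computed exactly from one column of $M$, so it is certified. Since $v\notin W_G$, we get $\dim(W_G+\spanop\{v\})=\dim W_G+1$.

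For the inductive step, write $v=v_0+v_1$ with $\supp(v_0),\supp(v_1)$ a balanced partition of $\supp(v)$. Each child then has weight at most $\lceil \wt(v)/2\rceil<\wt(v)$ whenever $\wt(v)\ge2$. This gives the level bound: the weight halves at each level starting from at most $b$, so there are at most $\lceil\log b\rceil+1=O(\log b)$ recursive levels. We then split into the three branches of the procedure.

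\emph{First branch: the test for $v_0$ rejects.} The test is one-sided by \cref{lem:restricted-verification}: if $y_0=Mv_0$ the reject flag has amplitude zero. So a rejection implies $y_0\ne Mv_0$. Suppose $v_0\in W_G$. Condition~1 gives $Mv_0\in\calL_{G,\Omega_0}(v_0)$, and \cref{lem:select} then forces $y_0=Mv_0$, a contradiction. Hence $v_0\notin W_G$, the invariant holds for the recursive call, and we apply induction. The branch where the test for $v_1$ rejects is handled identically.

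\emph{Last branch: both tests accept.} By condition~2, a wrong selected value would have been rejected, so acceptance means $y_0=Mv_0$ and $y_1=Mv_1$. By linearity, $y_0+y_1=Mv$, so the appended pair $(v,y_0+y_1)$ is certified. Since $v\notin W_G$ by the invariant, adding $v$ raises $\dim(V+\spanop\{g'\})$ by one.

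In every branch exactly one pair is appended, and it is appended at the invocation where recursion stops; the procedure always returns because the weight strictly decreases. The main point needing care is the first branch. There we must use the one-sided soundness of the verifier to turn a rejection into genuine badness, and then condition~1 together with \cref{lem:select} to convert badness into non-membership in $W_G$. Conditions~1 and~2 are exactly what is needed, and the rest is bookkeeping.
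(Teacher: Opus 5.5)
Your proposal is correct and follows essentially the same argument as the paper's proof. Both proofs propagate the invariant that the current vector lies outside $W_G$, for the fixed starting list $G$, down the recursion path, using the verifier's one-sidedness, condition~1, and tournament selection. Both use condition~2 to certify $y_0+y_1=Mv$ in the terminal branch, and support halving for the $O(\log b)$ depth bound. Casting this as induction on $\wt(v)$ rather than as a path invariant is only a cosmetic difference.
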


\begin{proof}
On every recursive call, the support size is at most half the previous support size, so the recursion depth is at most $O(\log b)$.

We prove the invariant that the current vector $v'$ on the recursive path satisfies $v'\notin W_G$, where $G$ is still the list as it was at the start of this extraction call.  The invariant holds for the initial vector by assumption.  If $v'=v_0+v_1$ and the procedure recurses on a child $v_i$ detected as bad, then $y_i\ne Mv_i$: restricted verification has no false positives, so it cannot return $\No$ when $y_i=Mv_i$.  If $v_i\in W_G$, condition (1) and \cref{lem:select} would force $y_i=Mv_i$, a contradiction.  Thus $v_i\notin W_G$, and the invariant is preserved along the recursive path.

If the procedure reaches the weight-one branch, it reads the exact column combination $Mv'$ and appends $(v',Mv')$.  Since $v'\notin W_G$, the dimension increases by at least one.

It remains to consider a terminal branch in which neither child of the current vector $v'=v_0+v_1$ is detected as bad.  By condition (2), the selected child values $y_0$ and $y_1$ must be correct.  Hence, the appended pair is
\begin{align*}
        (v',y_0+y_1)=(v',Mv_0+Mv_1)=(v',Mv')\enspace .
\end{align*}
Again $v'\notin W_G$, so this certified append increases the quotient dimension by at least one.
\end{proof}

\subsection{Full algorithm, correctness, and bounds}
\label{sec:full-algorithm}

We now put the pieces together.  The parameter $t$ is the one from \cref{thm:BR}, so $t=O(\log(1/\eps))$.  We take the number of BR samples in each candidate list to be
\begin{align*}
        R=\ot\left(\frac{1}{\eps^{2t+1}}\right)
        \enspace .
\end{align*}
More concretely, $R$ is chosen large enough so that a union bound over all iterations, all tree nodes, and all vectors queried during extraction succeeds with constant probability.  This only costs logarithmic factors, which are hidden in $\ot(\cdot)$.  We will run $O(\log(1/\eps))$ iterations, since the BR subspace has codimension $O(\log(1/\eps))$.

\begin{tcolorbox}[title=Final algorithm]
\paragraph{Input:} Oracles $\OO$ and $M$, input vector $x\in\F^n$, and parameters $\eps, b,\kappa$.

\paragraph{Output:} A vector in $\F^n$.

\begin{itemize}
    \item Build the consistency tree $\calT_x$ from \cref{sec:tree}.  The root vector is $x$, each leaf has weight at most $b$, and the number of nodes is $N_{\calT}=O(n/b)$.

    \item Initialize $G\gets\emptyset$.

    \item Repeat the following for $O(\log(1/\eps))$ iterations.

    \begin{itemize}
        \item Sample a fresh BR randomness table $\Omega$ for the current set $G$.

        \item For a node $u$, define
        \begin{align*}
                y_u=\Select\bigl(v_u,\calL_{G,\Omega}(v_u)\bigr)
                \enspace .
        \end{align*}

        \item Use quantum search over the nodes of $\calT_x$ to find a bad node in the sense of \cref{def:bad}.  If a bad node is found, descend to a minimal bad node.

        \item If no bad node is found, output $\Select\bigl(x,\calL_{G,\Omega}(x)\bigr)$ and return.

        \item If the minimal bad node $u$ is internal, append $(v_u,y_{u0}+y_{u1})$ to $G$.

        \item If the minimal bad node $u$ is a leaf, call $\Extract(v_u,G)$.
    \end{itemize}
\end{itemize}
\end{tcolorbox}

\subsubsection*{Correctness}

We prove that the algorithm outputs \(Mx\) with probability at least \(2/3\).
Let $L:=\left\lceil C_0\log(1/\eps)\right\rceil$ be the number of iterations, where \(C_0\) is chosen sufficiently large that $L>\codim(V)$
for the subspace \(V\) supplied by \cref{thm:BR}.  Also, let
\[
        H:=\left\lceil\log_2 b\right\rceil+1
\]
be an upper bound on the number of recursive levels in any call to
\(\Extract\).  Recall that the consistency tree has
\(N_{\calT}=O(n/b)\) nodes.

We choose
\begin{align*}
        R
        =
        \left\lceil
        \frac{C_1}{\eps^{2t+1}}
        \log\bigl(30L(N_{\calT}+2H)\bigr)
        \right\rceil
\end{align*}
for a sufficiently large constant \(C_1\).  In particular,
\[
        R=\widetilde O(1/\eps^{2t+1})\enspace.
\]

Put
\[
        p_{\mathrm{list}}
        :=
        (1-\eps^{2t+1})^R
        \le
        \exp(-\eps^{2t+1}R)
        \le
        \frac{1}{30L(N_{\calT}+2H)}\enspace.
\]

We define three bad events.  The first event is guarded by the condition
that the current list \(G\) is certified. This is important because
\cref{lem:good-list} assumes certification.

\begin{itemize}
    \item Let \(\calE_{\mathrm{list}}\) be the event that, at some point
    during the execution, an instance
    \(\calL_{G,\Omega}(v)\) is generated such that $G$ is certified, $v\in W_G$, and $Mv\notin\calL_{G,\Omega}(v)$.

    We claim that
    \[
            \Pr[\calE_{\mathrm{list}}]\le \frac{1}{30} \enspace.
    \]
    To see this, first consider the tree-node lists in an outer iteration.
    Condition on the complete execution transcript immediately before the
    fresh outer randomness table \(\Omega\) is sampled.  At this point the
    current list \(G\) is fixed.  If \(G\) is certified, then for every
    fixed tree node \(u\) satisfying \(v_u\in W_G\),
    \cref{lem:good-list} gives
    \[
        \Pr_{\Omega}\left[
            Mv_u\notin\calL_{G,\Omega}(v_u)
            \,\middle|\,
            \text{previous transcript}
        \right]
        \le p_{\mathrm{list}} \enspace.
    \]
    If \(G\) is not certified, the corresponding execution does not
    contribute to \(\calE_{\mathrm{list}}\).  Thus, by a union bound over
    the \(N_{\calT}\) nodes, the probability of a guarded list failure in
    this outer iteration is at most
    \(N_{\calT}p_{\mathrm{list}}\).  Notice that independence between the
    node events is not needed, even though all node lists use the same
    table \(\Omega\).

    Next consider lists generated inside \(\Extract\).  Immediately before
    each child randomness table is sampled, condition on the entire
    previous transcript.  The current vector \(v\) and the current list
    \(G\) are then fixed, and the child table is fresh.  Therefore, whenever
    \(G\) is certified and \(v\in W_G\), the same conditional bound
    \(p_{\mathrm{list}}\) follows from \cref{lem:good-list}.  There are at
    most \(2H\) child-list instances in one call to \(\Extract\).

    There are at most \(L\) outer iterations and at most one extraction call
    per iteration.  Consequently,
    \[
        \Pr[\calE_{\mathrm{list}}]
        \le
        L(N_{\calT}+2H)p_{\mathrm{list}}
        \le
        \frac{1}{30} \enspace.
    \]

    \item Let \(\calE_{\mathrm{ver}}\) be the event that some measured
    verification call inside \(\Extract\) accepts an incorrect equality.
    Set the error parameter of each such verifier to
    \[
            \eta_{\mathrm{ver}}
            :=
            \frac{1}{20LH} \enspace.
    \]
    There are at most \(2LH\) measured verification calls during the whole
    execution.  For any fixed, possibly adaptively chosen, input
    \((v,y)\), the verifier always accepts when \(y=Mv\), and accepts with
    probability at most \(\eta_{\mathrm{ver}}\) when \(y\ne Mv\).
    Conditioning on the transcript before each invocation and applying a
    union bound therefore gives
    \[
            \Pr[\calE_{\mathrm{ver}}]
            \le
            2LH\eta_{\mathrm{ver}}
            \le
            \frac{1}{10} \enspace.
    \]

    \item Let \(\calE_{\mathrm{search}}\) be the event that some invocation
    of \cref{lem:find-minimal} fails to satisfy its guarantee. Namely, the
    tree contains a bad node, but the invocation does not return a minimal
    bad node.  We run each invocation with failure probability at most
    \(1/(10L)\).  Conditioned on the current \(G\) and \(\Omega\), the
    guarantee of \cref{lem:find-minimal} applies to the resulting fixed
    bad-node predicate.  Since there are at most \(L\) invocations,
    \[
            \Pr[\calE_{\mathrm{search}}]
            \le
            \frac{1}{10} \enspace.
    \]
    If the tree contains no bad node, the search returns \(\Fail\) with
    probability one by the one-sided guarantee, so this case contributes
    no failure probability.
\end{itemize}
Let
\[
        \calE
        :=
        \neg\calE_{\mathrm{list}}
        \cap
        \neg\calE_{\mathrm{ver}}
        \cap
        \neg\calE_{\mathrm{search}} \enspace.
\]
By the union bound,
\[
        \Pr[\calE]
        \ge
        1-\frac{1}{30}-\frac{1}{10}-\frac{1}{10}
        >
        \frac{2}{3} \enspace.
\]
It remains to prove that the algorithm outputs \(Mx\) whenever \(\calE\)
occurs.

We prove by induction over the iterations reached by the algorithm that
the current list \(G\) is certified and that every non-terminating
iteration appends a certified pair whose first component lies outside the
current space \(W_G\).  Initially \(G=\emptyset\), so the certification
invariant holds.

Assume that \(G\) is certified at the beginning of an iteration, and write
\[
        W_G
        :=
        V+\spanop\{g:(g,h)\in G\} \enspace.
\]
Because \(\calE_{\mathrm{list}}\) does not occur and \(G\) is certified,
the freshly sampled outer table has the following property for every tree
node \(u\):
\[
        v_u\in W_G
        \quad\Longrightarrow\quad
        Mv_u\in\calL_{G,\Omega}(v_u) \enspace.
\]
Hence, by \cref{lem:select},
\[
        v_u\in W_G
        \quad\Longrightarrow\quad
        y_u=Mv_u \enspace.
\]

If the search returns \(\Fail\), then, because
\(\calE_{\mathrm{search}}\) does not occur and the no-bad-node case is
one-sided, the tree contains no bad node.  By \cref{lem:no-bad}, the
selected value at the root is correct.  Since the root vector is \(x\),
the algorithm outputs
\[
        \Select(x,\calL_{G,\Omega}(x))=Mx \enspace,
\]
and terminates correctly.

Otherwise, the search returns a minimal bad node \(u\).  Suppose first
that \(u\) is internal, with children \(u0\) and \(u1\).  Since \(u\) is
minimal, neither child subtree contains a bad node.  By
\cref{lem:no-bad}, we have $y_{u0}=Mv_{u0}$and $ y_{u1}=Mv_{u1}$.
Therefore,
\[
        y_{u0}+y_{u1}
        =
        Mv_{u0}+Mv_{u1}
        =
        Mv_u \enspace.
\]
Since \(u\) is bad,
\[
        y_u\ne y_{u0}+y_{u1}=Mv_u \enspace.
\]
If \(v_u\in W_G\), then the absence of
\(\calE_{\mathrm{list}}\) and \cref{lem:select} would imply
\(y_u=Mv_u\), a contradiction.  Hence \(v_u\notin W_G\).
Appending
\[
        (v_u,y_{u0}+y_{u1})
        =
        (v_u,Mv_u)
\]
therefore preserves certification and increases
the dimension of the self-correctable space by at least one.

Now suppose that \(u\) is a leaf.  By the definition of leaf badness,
\[
        y_u\ne Mv_u \enspace.
\]
If \(v_u\in W_G\), then the absence of
\(\calE_{\mathrm{list}}\) and \cref{lem:select} would again imply
\(y_u=Mv_u\), a contradiction.  Thus $v_u\notin W_G$,
so the precondition for \(\Extract(v_u,G)\) holds.

We now verify that \(\Extract\) appends a certified new pair. During this call, keep \(G\) fixed as the list at the start of the call, and maintain the invariant that the current sparse vector \(v'\) satisfies \(v'\notin W_G\). This holds initially by the previous paragraph. If \(\wt(v')=1\), the procedure computes \(Mv'\) directly and appends \((v',Mv')\), which is certified and outside the current \(W_G\). Otherwise, write \(v'=v_0+v_1\), compute selected values \(y_0,y_1\), and run the two restricted verification tests. If the verifier rejects for a child \(v_i\), then by the one-sidedness of the verifier we have \(y_i\ne Mv_i\). If \(v_i\in W_G\), then \(\calE_{\mathrm{list}}\) and \cref{lem:select} would imply \(y_i=Mv_i\), a contradiction. Hence \(v_i\notin W_G\), and the recursive call preserves the invariant. If both child verifications accept, then by \(\calE_{\mathrm{ver}}\) both claims are true. Namely, $y_0=Mv_0$ and $y_1=Mv_1$. Thus, \[ y_0+y_1=Mv_0+Mv_1=Mv' \enspace . \] Since the invariant gives \(v'\notin W_G\), appending \[ (v',y_0+y_1)=(v',Mv') \enspace \] keeps \(G\) certified and makes genuine progress. Since the support size is halved at every recursive step, \(\Extract\) terminates after at most \(H=O(\log b)\) levels. We have shown that every iteration which does not output \(Mx\) appends one certified pair whose first component lies outside the current \(W_G\). Hence the quotient dimension of \(W_G\) increases by at least one in every non-terminating iteration. Since \(V\) has codimension \(O(\log(1/\eps))\), this can happen for at most \(O(\log(1/\eps))\) iterations. Our choice of \(L\) is larger than this bound. Therefore, conditioned on \(\calE\), the algorithm must eventually reach an iteration in which no bad node is found, and in that iteration it outputs \(Mx\).
Consequently,
\[
        \Pr\left[
            \Red_{b,\kappa}^{\OO,M}(x)=Mx
        \right]
        \ge
        \Pr[\calE]
        >
        \frac{2}{3}.
\]
\subsubsection*{Query and running-time bounds}
Let $K=q^{|G|}R$ be the candidate-list size in a given iteration.  Since the algorithm runs at most $O(\log(1/\eps))$ iterations, and the codimension of $V$ is bounded by this quantity, we always have $|G|=O(\log(1/\eps))$ on the good event.  Also $R=\ot(1/\eps^{2t+1})$ and $t=O(\log(1/\eps))$.  Therefore $K=\ot((q/\eps)^{O(\log(1/\eps))})$.

For fixed $G$ and fixed $\Omega$, one local bad-node predicate evaluation costs $\ot(KT+\Psi(n,b,\kappa))$.  Here the $KT$ term comes from list generation and tournament selection, while $\Psi(n,b,\kappa)$ is the block RS fingerprinted verification cost on a leaf.  In terms of queries, one predicate evaluation uses $\ot(K)$ queries to $U_{\OO}$ and $\ot(Kn+b\sqrt{n\kappa})$ queries to $U_M$.

The outer quantum search ranges over $N_\calT=O(n/b)$ tree nodes.  The additional descent to a minimal bad node has the same asymptotic cost, because the searched subtree sizes decrease geometrically.  Thus one iteration has running time
\begin{align*}
        \ot\left((KT+\Psi(n,b,\kappa))\sqrt{\frac nb}\right)
        \enspace .
\end{align*}
Using $T\ge n$, this is
\begin{align*}
        \ot\left(
        KT\sqrt{\frac nb}
        +
        \left(\frac n\kappa\right)^{3/2}\sqrt{\frac nb}
        +
        n\sqrt{b\kappa}
        \right)
        \enspace .
\end{align*}
The corresponding query bounds per iteration are $\ot(K\sqrt{n/b})$ queries to $U_{\OO}$ and
\begin{align*}
        \ot\left(K\frac{n^{3/2}}{\sqrt b}+n\sqrt{b\kappa}\right)
        \enspace
\end{align*}
queries to $U_M$.

The extraction cost is lower order in this bound.  Extraction has depth $O(\log b)$, and each level works with vectors of weight at most $b$.  Up to logarithmic factors, each level costs at most $\ot(KT+\Psi(n,b,\kappa))$, and the base case costs only one column read from $M$.  Since $\sqrt{n/b}\ge1$, this is absorbed by the iteration-search cost.  The final root selection costs $\ot(KT)$ and is absorbed as well.

Multiplying by the $O(\log(1/\eps))$ iterations and absorbing this factor into the global dependence gives the bounds in \cref{thm:main}.

\subsubsection*{Proof of \cref{thm:main} and \cref{cor:optimized}}

The correctness argument above gives success probability at least $2/3$.  Standard repetition can amplify this if desired.  The query and running-time bounds above are exactly the bounds stated in \cref{thm:main}.  This proves the theorem.

\begin{proof}[Proof of \cref{cor:optimized}]
For fixed $n$ and $T$, the $b,\kappa$-dependent part of the running time is
\begin{align*}
        (q/\eps)^{O(\log(1/\eps))}T\sqrt{\frac nb}
        +
        \left(\frac n\kappa\right)^{3/2}\sqrt{\frac nb}
        +
        n\sqrt{b\kappa}
        \enspace .
\end{align*}
When $n\le T\le n^{3/2}$, choose $b=\Theta(T^{4/3}/n)$ and $\kappa=\Theta(n/T^{2/3})$.  Then
\begin{align*}
        T\sqrt{\frac nb}=\Theta(nT^{1/3})
        \enspace ,
        \qquad
        \left(\frac n\kappa\right)^{3/2}\sqrt{\frac nb}=\Theta(nT^{1/3})
        \enspace ,
        \qquad
        n\sqrt{b\kappa}=\Theta(nT^{1/3})
        \enspace .
\end{align*}
This gives the claimed running time.  Substituting the same parameter values into the query bounds gives $Q_{\OO}(b,\kappa)=\ot((q/\eps)^{O(\log(1/\eps))}n/T^{2/3})$ and
\begin{align*}
        Q_M(b,\kappa)
        =
        \ot\left(
        (q/\eps)^{O(\log(1/\eps))}\frac{n^2}{T^{2/3}}
        +
        nT^{1/3}
        \right)
        \enspace .
\end{align*}
Since $T\ge n$, the first term is at most $\ot((q/\eps)^{O(\log(1/\eps))}nT^{1/3})$, so the simpler bound stated in the corollary follows.

If $T>n^{3/2}$, choose $b=n$ and $\kappa=1$.  Then the running-time expression is $\ot((q/\eps)^{O(\log(1/\eps))}T+n^{3/2})$, which is $\ot((q/\eps)^{O(\log(1/\eps))}T)$.  The query bounds reduce to the stated expressions.  In particular, when $T=\Theta(n)$, choosing $b=\kappa=\Theta(n^{1/3})$ gives the $\ot(n^{4/3})$ bound, up to the dependence on $q$ and $\eps$.
\end{proof}


\printbibliography

@inproceedings{AGGSS24,
  author    = {Asadi, Vahid R. and Golovnev, Alexander and Gur, Tom and Shinkar, Igor and Subramanian, Sathyawageeswar},
  title     = {Quantum worst-case to average-case reductions for all linear problems},
  booktitle = {Proceedings of the 2024 Annual ACM-SIAM Symposium on Discrete Algorithms (SODA)},

  year      = {2024},
  doi       = {10.1137/1.9781611977912.90},
}

@inproceedings{HS26,
author = {Hirahara, Shuichi and Shimizu, Nobutaka},
title = {Optimal Random Self-Reductions for All Linear Problems},
year = {2026},
url = {https://doi.org/10.1145/3798129.3800773},
doi = {10.1145/3798129.3800773},
booktitle = {Proceedings of the 58th Annual ACM Symposium on Theory of Computing},
  series    = {STOC 2026},

}

@inproceedings{AGGS22,
  author    = {Asadi, Vahid R. and Golovnev, Alexander and Gur, Tom and Shinkar, Igor},
  title     = {Worst-Case to Average-Case Reductions via Additive Combinatorics},
  booktitle = {Proceedings of the 54th Annual ACM SIGACT Symposium on Theory of Computing},
  series    = {STOC 2022},
  year      = {2022},
  doi       = {10.1145/3519935.3520041},
}

@book{NS02,
  title={Quantum computation and quantum information},
  author={Nielsen, Michael A and Chuang, Isaac L},
  year={2010},
  publisher={Cambridge University Press}
}

@article{BLR93,
  author  = {Manuel Blum and Michael Luby and Ronitt Rubinfeld},
  title   = {Self-Testing/Correcting with Applications to Numerical Problems},
  journal = {Journal of Computer and System Sciences},
  year    = {1993},
  doi     = {10.1016/0022-0000(93)90044-W}
}

@InProceedings{Gilyen2019,
author = {Gily{\'{e}}n, Andr{\'{a}}s and Su, Yuan and Low, Guang Hao and Wiebe, Nathan},
doi = {10.1145/3313276.3316366},
booktitle = {STOC 2019},
title = {{Quantum singular value transformation and beyond: Exponential improvements for quantum matrix arithmetics}},
year = {2019},
}

@inproceedings{HS22,
  author    = {Shuichi Hirahara and Nobutaka Shimizu},
  title     = {Hardness Self-Amplification from Feasible Hard-Core Sets},
  booktitle = {Proceedings of the 63rd IEEE Annual Symposium on Foundations of Computer Science},
  year      = {2022},
  doi       = {10.1109/FOCS54457.2022.00058}
}

@inproceedings{HS23,
  author    = {Shuichi Hirahara and Nobutaka Shimizu},
  title     = {Hardness Self-Amplification: Simplified, Optimized, and Unified},
  booktitle = {Proceedings of the 55th Annual ACM Symposium on Theory of Computing},
  year      = {2023},
  doi       = {10.1145/3564246.3585189}
}

@ARTICLE{GL89,
  title        = {A hard-core predicate for all one-way functions},
  author       = {Goldreich, Oded and Levin, Leonid A.},
  journaltitle = {Proceedings of Symposium on Theory of Computing (STOC)},
  publisher    = {ACM Press},
  location     = {New York, New York, USA},
  pages        = {25--32},
  date         = {1989},
  doi          = {10.1145/73007.73010},
  url          = {http://dx.doi.org/10.1145/73007.73010}
}

@INBOOK{Kal91,
  title     = {On {Wiedemann's} method of solving sparse linear systems},
  author    = {Kaltofen, Erich and David Saunders, B},
  booktitle = {Applied Algebra, Algebraic Algorithms and Error-Correcting Codes},
  publisher = {Springer Berlin Heidelberg},
  location  = {Berlin, Heidelberg},
  pages     = {29--38},
  date      = {1991},
  doi       = {10.1007/3-540-54522-0_93},
  series    = {Lecture notes in computer science},
  url       = {http://dx.doi.org/10.1007/3-540-54522-0_93},
  urldate   = {2025-10-22},
}

@misc{AK25,
      title={Quantum Worst-Case to Average-Case Reduction for Matrix-Vector Multiplication}, 
      author={Divesh Aggarwal and Dexter Kwan},
      year={2025},
      eprint={2510.15721},
      archivePrefix={arXiv},
      primaryClass={quant-ph},
      url={https://arxiv.org/abs/2510.15721}, 
}

\appendix
\crefalias{section}{appendix}


\section{{Proof of the warm-up algorithm}}\label{apx:proof}
Here we provide a detailed proof of the error-less warm-up in \cref{sec:warm-up-errorless}.  The proof works over any finite field $\F=\F_q$.
We borrow notation from \cref{sec:Certified directions and candidate lists}.

\begin{proof}[Proof of \cref{thm:warm-up-errorless}]
Let $L=O(\log(1/\eps))$ be a fixed number of rounds larger than the codimension of the BR subspace $V$.  Each repair path has depth $O(\log n)$, so there is a fixed upper bound $Q=O(L\log n)$ on the number of correction calls.  Choose the number of BR trials to be
\begin{align*}
        R=\left\lceil\frac{C}{\eps^{2t+1}}\log(Q+2)\right\rceil
        \enspace ,
\end{align*}
for a sufficiently large constant $C$.

For a fixed certified list $G$ and a fixed vector $v\in W_G$, choose coefficients $\alpha$ such that $v-w_\alpha\in V$.  By \cref{thm:BR}, each trial for this choice uses only inputs in $A$ with probability at least $\eps^{2t+1}$.  Such a trial gives the correct answer $Mv$.  Hence
\begin{align*}
        \Pr_\Omega[\mathsf{Correct}_{G,\Omega}(v)=\bot]
        \le(1-\eps^{2t+1})^R
        \enspace .
\end{align*}
With the choice above, a union bound over the at most $Q$ calls bounds the probability of any such failure by $1/10$.

Every non-$\bot$ output of $\mathsf{Correct}_{G,\Omega}$ is correct whenever $G$ is certified.  Indeed, as discussed above, all oracle values used by a good BR candidate are correct, and since $G$ is certified, any linear combination of its elements will be correct as well.  Note that on every execution, even if a list failure occurs, $G$ remains certified. Therefore, the algorithm can only output the correct product or $\bot$.

Condition on the event that no needed correction call returns $\bot$ for a vector in the current $W_G$.  If a round returns a vector, it is $Mx$.  Otherwise, correction of $x$ returned $\bot$, and therefore, because of the event we conditioned on, we have that $x\notin W_G$.  This gives the precondition of $\textsc{Repair}$.

We next show that $\textsc{Repair}$ indeed makes progress.  Its invariant is that the current vector $v$ is outside the space $W_G$ from the start of this call.  If $\wt(v)\le1$, the procedure computes $Mv$ directly and appends the certified pair $(v,Mv)$.  So this adds a direction outside $W_G$.

Otherwise, write $v=v_0+v_1$.  If correction of a child $v_i$ returns $\bot$, then $v_i\notin W_G$, and the recursive call preserves the invariant.  If both children return answers, then they are $Mv_0$ and $Mv_1$.  Their sum is $Mv$, so appending $(v,y_0+y_1)$ is certified and increases the dimension of $W_G$ by one.

Thus, every non-terminating round adds one certified direction outside the current $W_G$.  Since $V$ has codimension $O(\log(1/\eps))$, this can happen only $O(\log(1/\eps))$ times.  Our choice of $L$ leaves at least one final correction attempt after the self-correction space has become $\F^n$. The algorithm therefore returns $Mx$ with probability at least $9/10$, which is more than required.

Finally, throughout the execution, $|G|\le L$ and at most $q^L R$ BR candidates are tried in one correction call.  Each candidate uses $2t$ oracle calls and $O(tn)$ additional field operations.  Each repair call reads at most one column of $M$.  The total cost is therefore
\begin{align*}
        O\bigl(Qq^L Rt(T+n)+Ln\bigr)
        =\ot\left((q/\eps)^{O(\log(1/\eps))}T\right)
        \enspace .
\end{align*}
The number of entry queries to $M$ is $O(Ln)$.  For fixed $q$ and $\eps$, the displayed choice of $R$ gives the more explicit bound $O(T\log n\log\log n)$.  The extra $\log\log n$ comes from the union bound over the repair path, and it is suppressed by the $\ot(\cdot)$ notation in the overview.
\end{proof}

\begin{proof}[Proof of \cref{thm:warm-up-verifier}]
Given an error-prone oracle $\OO$ and a perfect verifier $\calV_M$, define
\begin{align*}
        \OO_\bot(z):=
        \begin{cases}
            \OO(z),&\calV_M(z,\OO(z))=\Yes\enspace ,\\
            \bot,&\text{otherwise}\enspace .
        \end{cases}
\end{align*}
We evaluate $\OO(z)$ once and store its answer before verification.  This is an error-less oracle with the same agreement as $\OO$, and one call costs $O(T+S)$.  Applying \cref{thm:warm-up-errorless} proves the first claim.

For the quantum implementation, let $Q_{\OO}$ be the fixed upper bound on the number of calls to $\OO_\bot$ in this reduction.  Use the matrix-vector verifier of \cite{AGGSS24} with error at most $1/(10Q_{\OO})$ per call.  Its cost is $\ot(n^{3/2})$, including this error reduction.

Conditioning on the transcript before each verification call and taking a union bound, the probability that any verifier answer differs from the ideal exact answer is at most $1/10$.  Until such an error, the execution agrees with the one using the perfect verifier.  The latter fails with probability at most $1/10$ by the proof above.  Thus the total failure probability is at most $1/5<1/3$.  Substituting the verification cost gives the claimed running time. 
\end{proof}


\section{{Randomized and quantum average-case oracles}}\label{apx:randomized-oracles}
We stated the main reduction for deterministic oracles for simplicity of exposition.  Here we extend it to randomized and quantum oracles.  The main technical idea to extend the result to this setting is to define a new oracle, which runs the original non-deterministic oracle several times and applies tournament selection on the list of candidates.  This gives a dense set of inputs on which the new oracle is correct with high probability on every fresh run.  We use this set in the BR argument, and account for fresh answers during the consistency-tree search.

As in the main reduction, we assume a coherent implementation of the oracle and access to its inverse.  This includes randomized and quantum algorithms with their measurements deferred.  We can formalize the oracle's action on a clean workspace as

\begin{align*}
        U_{\OO}|z\rangle|0\rangle|0\rangle
        =|z\rangle\sum_{y\in\F^n}
        \alpha_{z,y}|y\rangle|\gamma_{z,y}\rangle
        \enspace ,
\end{align*}
where the second register holds the output, the third contains the oracle's workspace,
and $\sum_y|\alpha_{z,y}|^2=1$. Measuring the output register gives
$y$ with probability $|\alpha_{z,y}|^2$. Thus,
$p_z:=|\alpha_{z,Mz}|^2$ is the probability of the correct answer.
The average-case assumption is
\begin{align}\label{eq:randomized-agreement}
        \E_{z\sim\F^n}[p_z]\ge\eps
        \enspace .
\end{align}
We now state the main result of this appendix.

\begin{theorem}\label{thm:randomized-oracles}
The conclusions of \cref{thm:main} also hold for an oracle satisfying \cref{eq:randomized-agreement}.  This includes the optimized bounds in \cref{cor:optimized}.  Here $T\ge n$ is the cost of one use of the original oracle or its inverse, and the success probability includes the oracle's internal randomness and quantum measurements.
\end{theorem}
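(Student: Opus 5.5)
We reduce to the deterministic analysis by building an amplified oracle $\OO'$ that is correct with overwhelming probability on every fresh run, on a dense set of inputs. Let $A_0:=\{z:p_z\ge\eps/2\}$. By \cref{eq:randomized-agreement} and Markov's inequality, $A_0$ has density at least $\eps/2$. On input $z$, $\OO'$ runs $U_\OO$ (measurements deferred) $m=O(\eps^{-1}\log(1/\gamma))$ times on fresh workspaces, obtaining a list $\calL_z$ of $m$ candidate outputs. It then applies $\Select(z,\calL_z)$ from \cref{lem:select}, which uses $m\wt(z)\le mn$ entry queries to $U_M$. If $z\in A_0$, then $Mz\in\calL_z$ except with probability $(1-\eps/2)^m\le\gamma$, and in that event $\Select$ returns $Mz$. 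So $\OO'$ is a coherent oracle of cost $\ot(T/\eps+n/\eps)=\ot(T/\eps)$, and on every $z\in A_0$ each fresh run outputs $Mz$ with probability at least $1-\gamma$.

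Next we rerun the analysis of \cref{sec:full-algorithm} with $A:=A_0$ and $\eps$ replaced by $\eps/2$, which changes only the constants inside $(q/\eps)^{O(\log(1/\eps))}$. \cref{thm:BR} applies to $A_0$ and gives a subspace $V$ of codimension $O(\log(1/\eps))$.

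The correctness argument used determinism in exactly one place: the selected value $y_u$ must be a fixed function of $u,G,\Omega$, so that $P^{G,\Omega}_{\mathrm{bad}}$ is a fixed predicate. This matters for \cref{lem:bad-cost}, for the zero-amplitude guarantee in \cref{lem:search,lem:find-minimal}, and for \cref{lem:no-bad}. With a randomized $\OO'$ this fails, because the list $\calL_{G,\Omega}(v_u)$ is recomputed coherently on each predicate call and may fluctuate.

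We handle this with a "typical branch" argument.
\begin{itemize}
\item Define the ideal deterministic oracle $\OO^*(z)$ to be $Mz$ on $A_0$ and the most likely output of $\OO'$ elsewhere.
\item The real run of $\OO'$ deviates from $\OO^*$ only off $A_0$ or with probability $\gamma$.
\item Values computed off $A_0$ are never relied on for correctness, since \cref{lem:good-list} only uses trials landing entirely in $A_0$.
\item However, the tree predicate at a node whose list contains non-$A_0$ points may still depend on noisy answers. So we want off-$A_0$ answers to be deterministic as well.
\end{itemize}
The approach I would try first is to compute each oracle answer for every list point once, classically, before the search. The candidate lists of all tree nodes share the reused points $a_{\alpha,r,i},b_{\alpha,r,i}$ ($i<t$), but the derived points $b_{\alpha,r,t}(v_u)$ differ per node. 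There are $O(KN_\calT)$ such points, and precomputing all of them would blow up the cost.

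Instead, we bound the deviation directly. Each coherent predicate call makes $\ot(K)$ uses of $\OO'$, and the whole algorithm makes $Q=\ot(K\sqrt{n/b}\,\mathrm{polylog})$ such uses. For inputs in $A_0$, replacing $\OO'$ by $\OO^*$ perturbs each call by $O(\sqrt\gamma)$ in operator norm on the relevant states, by the hybrid argument. For off-$A_0$ inputs, I would redefine the flag logic so that leaf badness and internal consistency remain one-sided regardless of oracle noise. In \cref{lem:minimal,lem:extract} only the direction "$v_u\in W_G\Rightarrow y_u=Mv_u$" is used, and that direction relies only on $A_0$ trials. Concretely:
\begin{itemize}
\item A fluctuating $y_u$ at a node with $v_u\notin W_G$ can at worst produce a flagged node whose repair is still valid. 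The repair of an internal minimal node uses only children values, which are correct by the inductive claim on $A_0$-supported trials.
\item Summing the hybrid errors $Q\cdot O(\sqrt\gamma)$ and choosing $\gamma=1/\mathrm{poly}(Q)$ costs only a logarithmic factor in $m$.
\end{itemize}

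The main obstacle is making the no-bad-node conclusion (\cref{lem:no-bad}) robust: a "not bad" verdict at an internal node must mean consistency of values that are stable across calls. I would resolve this by running the final root selection and the consistency checks with the ideal $\OO^*$ in the analysis and charging the discrepancy to the hybrid bound. The resulting resource counts are those of \cref{thm:main} with $T$ replaced by $\ot(T/\eps)$ and an extra $\ot(Kn/\eps)$ matrix queries per predicate. Both are absorbed into $(q/\eps)^{O(\log(1/\eps))}$, which yields \cref{thm:main} and \cref{cor:optimized}.
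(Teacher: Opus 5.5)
There is a genuine gap, and it is exactly where you locate the main obstacle. Your hybrid argument only controls inputs in $A_0$, where $\OO'$ is $\gamma$-close to $\OO^*$. Off $A_0$, nothing makes $\OO'$ close to any fixed function. For example, if $\OO$ is correct on a set of density $\eps$ and outputs uniformly random vectors elsewhere, then $\OO'(z)$ is essentially random off $A_0$. Since $A_0$ may have density only $\eps/2$, almost every list $\calL_{G,\Omega}(v_u)$ touches such points. So the discrepancy between the real run and the $\OO^*$-run can be constant per node computation, and it cannot be charged to a $Q\cdot O(\sqrt\gamma)$ bound.

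Consequently there is no fixed ``bad'' predicate with a zero-amplitude guarantee. A node is flagged with some probability $p_u\in(0,1)$, so neither \cref{lem:no-bad} nor the one-sided descent of \cref{lem:find-minimal} carries over. Your plan to ``redefine the flag logic so that it stays one-sided regardless of oracle noise'' is not something you actually have.

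Your claim that the children's values at a minimal internal node are correct ``by the inductive claim on $A_0$-supported trials'' is also wrong. The children need not lie in $W_G$. In the deterministic proof their correctness comes from exact consistency of the whole child subtrees down to verified leaves, i.e.\ from \cref{lem:no-bad}, not from \cref{lem:good-list}. In the randomized setting these values can be wrong with constant probability. Conditioned on the flag at $u$ (which may have been raised precisely because a child was wrong), and further after amplitude amplification, they are biased toward being wrong. Appending their sum would then break certification.

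The missing idea is a quantitative substitute for \cref{lem:no-bad}, which the paper builds as follows.
\begin{itemize}
\item It tracks $p_u$ and the fresh error $e_u=\Pr[Y_u\ne Mv_u]$.
\item It evaluates each child with a boosted list: $s$ fresh evaluations of the same BR list, concatenated, followed by one $\Select$. This gives $\Pr[\widehat Y_u\ne Mv_u]\le e_u^s$ whether or not $v_u\in W_G$.
\item It flags an internal node when $Y_u\ne\widehat Y_{u0}+\widehat Y_{u1}$. It proves $e_u\le p_u+e_{u0}^s+e_{u1}^s$ at internal nodes and $e_u\le\frac32 p_u$ at leaves, hence $e_u\le 2\sum_{w\in\Nodes(\calT_u)}p_w$.
\item Searches use fixed-point amplification with threshold $1/(8|S|)$. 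A $\Fail$ then certifies $\sum_{u\in S}p_u<1/8$, rather than the absence of bad nodes.
\item The descent searches all proper descendants of the current node. Stopping at $u$ yields $e_{u0},e_{u1}<1/4$, so the retained boosted child values are wrong only with probability about $4^{-s}$.
\item Unwanted flagged outcomes (a wrong $Y_u$ with $v_u\in W_G$, or a wrong retained child value) have probability $\zeta\le 2t\gamma+2\cdot4^{-s}$ before amplification. After $a$ uses of the preparation unitary this becomes $O(a^2\zeta)$, via \cref{lem:fpaa} plus a rotation-and-hybrid argument. This bound dictates the choice of $\gamma$ and $s$.
\item The root output is a fresh boosted $\widehat Y_r$.
\end{itemize}
Your amplification of $\OO$ to $\OO'$ and the application of \cref{thm:BR} to $A_0$ match the paper. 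Everything after that needs this machinery.
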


The starting point is to adjust the definition of the set of good inputs to the new setting, by amplifying the success rate of the oracle.

\paragraph{Amplifying the oracle.}
Let $A_0:=\{z:p_z\ge\eps/2\}$.  If $\mu=|A_0|/q^n$, then
\begin{align*}
        \eps\le\E_z[p_z]\le\mu+(1-\mu)\frac{\eps}{2}
        \enspace ,
\end{align*}
so $\mu\ge\eps/(2-\eps)\ge\eps/2$.  For an error parameter $\delta$, define $\OO'(z)$ by running $\OO(z)$ independently
\begin{align*}
        k=\left\lceil\frac{2}{\eps}\ln\frac1\delta\right\rceil
\end{align*}
times and applying $\Select$ to the resulting list.  For $z\in A_0$, at least one run is correct except with probability at most $(1-p_z)^k\le\delta$.  By \cref{lem:select}, tournament selection then returns $Mz$.  Therefore the set
\begin{align}\label{eq:randomized-good-set}
        A':=\{z:\Prb[\OO'(z)=Mz]\ge1-\delta\}
\end{align}
has density at least $\eps/2$.

This construction also works in superposition.  For each input branch, we run the $k$ copies on separate work registers and apply the reversible tournament to their output registers.  Whenever a basis list contains the correct answer, selection returns it, so the same probability bound holds.  One use of $\OO'$ costs $T'=O(k(T+n))=O(kT)$ and uses at most $kn$ additional entry queries to $M$.

\paragraph{BR correction.}
Put $\eps_0:=\eps/2$, and apply \cref{thm:BR} to $A'$ with this density lower bound.  Let $V$ be the resulting subspace, and choose $t=O(\log(1/\eps_0))$ as in that lemma.  The BR tables contain sampled input vectors, just as in the deterministic reduction. 

For a fixed certified $G$ and $v\in W_G$, a trial for an appropriate coefficient choice has all its $2t$ terms in $A'$ with probability at least $\eps_0^{2t+1}$.  Thus, by taking $R=\ot(\eps_0^{-(2t+1)})$, with high probability the table contains at least one such trial for every needed vector in $W_G$.  The union bound over tree nodes and adaptively chosen extraction vectors is the same as in \cref{sec:full-algorithm}.

Fix a table with this property.  For any one of these vectors $v$, the $2t$ calls to $\OO'$ in its successful trial are all correct except with probability at most $\nu:=2t\delta$.  Hence, the selected value obtained from the full BR candidate list is correct with probability at least $1-\nu$ on every fresh evaluation.  For a tree node $u$, denote this randomized selected value by $Y_u$, and write
\begin{align*}
        e_u:=\Prb[Y_u\ne Mv_u]
        \enspace .
\end{align*}
On the good-table event, we have
\begin{align}\label{eq:randomized-node-good}
        v_u\in W_G\quad\Longrightarrow\quad e_u\le\nu
        \enspace .
\end{align}
The table and $G$ are fixed here, but all oracle evaluations use fresh workspace.  In particular, we do not assume that $Y_u$ has the same value on different runs.

\paragraph{Consistency tests with fresh answers.}
For vectors in $W_G$, BR correction already gives the correct answer with high probability.  However, the children of a tree node need not lie in $W_G$, but at the same time,  their BR answers may be correct with non-negligible probability, without being correct with high probability.  This matters when we repair at an internal node: we use the sum of its child values as its correct value, so we need both child values to be correct.

We address this by using a larger BR candidate list for each child when we check their consistency with their parents.  For any node $u$, keep $G$ and the BR table fixed, evaluate the BR candidate list $s$ times using fresh calls to $\OO'$, concatenate the resulting lists, and apply $\Select$ once to this larger list.  Denote the result by $\widehat Y_u$.  Thus, we use the same BR input vectors in all copies, but evaluate the oracle independently in each copy.

If any one of these copies would give the correct selected answer, then the larger list contains $Mv_u$, and tournament selection returns it.  Therefore
\begin{align}\label{eq:boosted-node}
        \Prb[\widehat Y_u\ne Mv_u]\le e_u^s
        \enspace .
\end{align}
In particular, if $e_u\le1/4$, then the error is at most $4^{-s}$, whether or not $v_u$ lies in $W_G$. Taking $s$ logarithmic in $n$ and the error parameters then makes the child errors sufficiently small.  Note that this only adds logarithmic factors to the cost of a node computation.

At an internal node $u$, compute fresh values $Y_u,\widehat Y_{u0},\widehat Y_{u1}$, and raise a flag if
\begin{align*}
        Y_u\ne\widehat Y_{u0}+\widehat Y_{u1}
        \enspace .
\end{align*}
The parent uses the original BR list, and each child uses the larger list.  At a leaf, compute $Y_u$ and use the restricted verifier with error at most $1/3$ to flag an incorrect product, and keep the generated values in the workspace.

Let $p_u$ be the flag probability at node $u$.  If the parent answer is incorrect and both child answers are correct, the internal test must flag.  At a leaf, the verifier detects an incorrect answer with probability at least $2/3$.  Hence
\begin{align*}
        e_u&\le p_u+e_{u0}^s+e_{u1}^s
        &&\text{if }u\text{ is internal}\enspace ,\\
        e_u&\le\frac32p_u
        &&\text{if }u\text{ is a leaf}\enspace .
\end{align*}
Since $e_w^s\le e_w$, induction on the tree gives
\begin{align}\label{eq:subtree-error-mass}
        e_u\le2\sum_{w\in\Nodes(\calT_u)}p_w
        \enspace ,
\end{align}
where $\calT_u$ is the subtree rooted at $u$.

For a set $S$ of nodes, prepare a uniform superposition over $S$ and apply this flagging circuit.  Its flag probability is $|S|^{-1}\sum_{u\in S}p_u$.  Apply \cref{lem:fpaa} with lower bound $1/(8|S|)$.  In time $\ot(\sqrt{|S|})$ times the cost of one flag computation, this search returns a flagged node with high probability whenever $\sum_{u\in S}p_u\ge1/8$.  If it returns a node, retain the values observed with the flag.  Outside its allotted failure event, a return of $\Fail$ implies
\begin{align*}
        \sum_{u\in S}p_u<\frac18
        \enspace .
\end{align*}
This use of amplitude amplification only needs a fixed coherent circuit and its inverse, and does not need a deterministically selected value at each node.

The search and descent are now as follows.  First search the full tree.  If the search returns $\Fail$, output a fresh $\widehat Y_r$ at the root.  Otherwise, let $u$ be the returned node, together with its observed values.  If $u$ is internal, search all its proper descendants.  If that search finds a flagged node, move to it and continue; if it returns $\Fail$, stop at $u$.  At this stopping point, append
\begin{align*}
        (v_u,\widehat Y_{u0}+\widehat Y_{u1})
\end{align*}
using the values retained from the search that returned $u$.  If the descent reaches a leaf, run $\Extract$ as before.  Each move goes down at least one level, so the searched subtree sizes decrease geometrically.  The total search cost is still $\ot(\sqrt{N_{\calT}})$ flag computations.

To see why the larger child lists are useful, suppose the descent stops at an internal node $u$ and the search guarantee holds.  The total flag probability over its proper descendants is less than $1/8$.  By \cref{eq:subtree-error-mass}, this gives $e_{u0},e_{u1}<1/4$.  It does not show that either child belongs to $W_G$.  Nevertheless, \cref{eq:boosted-node} bounds the error of each larger child list by $4^{-s}$ in a fresh computation.  We keep the values that caused the original disagreement, since recomputing them could give different answers.  The next paragraph accounts for the effect of quantum search on the probability that these retained values are incorrect.

\paragraph{Errors under amplitude amplification.}
The bounds above concern one fresh node computation.  Quantum search can increase the probability of rare flagged outcomes, including flags caused by incorrect child values.  We therefore need to bound these errors after the search has selected a node.

Condition on a certified $G$ and a good BR table.  There are two unwanted kinds of flagged outcomes:
\begin{enumerate}
    \item The node satisfies $v_u\in W_G$, but its observed value $Y_u$ is incorrect.
    \item The node is internal, $e_{u0},e_{u1}\le1/4$, but at least one observed child value from the larger BR lists is incorrect.
\end{enumerate}
The first event would prevent an incorrect parent answer from certifying that $v_u\notin W_G$.  The second would prevent us from using the child sum as the correct value for the parent.  At a stopping internal node, the descendant search gives the constant bounds on the child error probabilities.  If neither unwanted event occurs, the retained child values are correct, their disagreement with the parent shows that its observed answer is incorrect, and the parent vector lies outside $W_G$.

By \cref{eq:randomized-node-good,eq:boosted-node}, the total probability of these unwanted outcomes in any one unamplified node computation is at most $\nu+2\cdot4^{-s}$.  The same bound holds for a uniform superposition over any set of nodes.  Thus, enlarging the child lists makes the second kind of error small even for children outside $W_G$.

We use a simple consequence of \cref{lem:fpaa} to account for the search.  Suppose an unwanted part of the flagged state has probability at most $\zeta$ before amplification, and the amplification circuit uses $a$ calls to the preparation unitary and its inverse.  The probability of this unwanted outcome afterwards is $O(a^2\zeta)$.

To see this, rotate the prepared state to remove its unwanted component and normalize the rest.  This changes the preparation unitary in operator norm by $O(\sqrt\zeta)$.  By a hybrid argument over its $a$ uses, the amplified state changes in norm by $O(a\sqrt\zeta)$.  For the modified preparation, the unwanted flagged component is zero after amplification: by \cref{lem:fpaa}, the final flagged component is proportional to the initial one.  Squaring the norm bound proves the claim.  This rotation is only used for the analysis.

Here are choices that make the total error small.  Let $L=O(\log(1/\eps_0))$ bound the number of outer iterations and let $H=O(\log n)$ bound the tree and extraction depths.  Give each measured search failure probability at most $1/(CLH)$, where $C$ is a sufficiently large constant.  Let $P=\ot(L\sqrt n)$ be a fixed upper bound on the total number of uses of the flag-preparation unitaries and their inverses in these searches.  Choose
\begin{align*}
        \delta:=\frac{1}{Ct(P^2+LH+1)}
        \enspace ,
        \qquad
        s:=\left\lceil C\log(P+LH+2)\right\rceil
        \enspace .
\end{align*}
The sum of the squared numbers of preparation calls is at most $P^2$.  The total probability of an unwanted amplified outcome is thus at most $O(P^2(\nu+4^{-s}))$, which can be made smaller than $1/30$.  Also, $\delta$ is inverse polynomial in $n$ for fixed $\eps$, and $s$ is logarithmic.  These choices do not change the asymptotic bounds.

\paragraph{Correctness.}
First suppose the full-tree search returns $\Fail$ and its guarantee holds.  By \cref{eq:subtree-error-mass}, $e_r<1/4$.  A fresh amplified root computation therefore returns $Mx$ except with probability at most $4^{-s}$.

Next suppose the descent stops at an internal node $u$.  Its proper-descendant search returned $\Fail$, so \cref{eq:subtree-error-mass} gives $e_{u0},e_{u1}<1/4$.  If the second unwanted event does not occur, both retained child values are correct.  Since the retained flag records a disagreement, the retained parent value $Y_u$ is incorrect.  If the first unwanted event does not occur either, this implies $v_u\notin W_G$.  Thus the appended pair is exactly $(v_u,Mv_u)$ and adds a direction outside $W_G$.

At a flagged leaf, the one-sided verifier guarantees that the observed value is incorrect.  Excluding the first unwanted event again gives $v_u\notin W_G$.  The proof for $\Extract$ is unchanged, except that its measured child computations also have failure probability at most $\nu$ when their vectors are in $W_G$ and their fresh BR tables are good.  There are $O(LH)$ such computations, and our choice of $\delta$ makes their total additional failure probability small.

Choose the BR sample count as in the main proof, with $\eps_0$ in place of $\eps$, so that all needed tables are good except with probability at most $1/30$.  The remaining events are search failures, unwanted amplified outcomes, extraction-verification errors, and the fresh selected-value errors just described.  The choices above, together with the verification error reduction from the main proof, make their total probability less than $1/3$.  The bounds are conditional on the previous transcript whenever the current $G$ is certified, so they also hold for the adaptive execution.  Outside these events, every non-terminating iteration adds a certified direction outside $W_G$.  The same dimension argument as before proves termination with the correct answer.

\paragraph{Cost.}
The initial amplification uses $k=\ot(1/\eps)$ calls to $\OO$ and $O(kn)$ additional matrix queries per evaluation of $\OO'$.  The amplification of child values uses only $s=O(\log n+\log L)$ repetitions and tournament comparisons.  Therefore one node flag costs $\ot(kKT+\Psi(n,b,\kappa))$, and the additional matrix queries are $\ot(kKn)$.  Since $kK=\ot((q/\eps)^{O(\log(1/\eps))})$, multiplying by the unchanged $\ot(\sqrt{n/b})$ search cost and accounting for extraction gives exactly the asymptotic bounds in \cref{thm:main,cor:optimized}.  In particular, the running time remains $\ot(nT^{1/3})$ for constant field size and agreement when $n\le T\le n^{3/2}$.

\begin{corollary}[Quantum algorithms for linear problems]
\label{cor:quantum-algorithms}
Let $M\in\F^{n\times n}$, where $|\F|=q$, and let $0<\eps\le1/2$.  Assume coherent entry access to $M$, with each query counted as one oracle gate.  Suppose a quantum algorithm $\mathcal A$ has gate complexity $T\ge n$ and satisfies
\begin{align*}
        \E_{z\sim\F^n}
        \left[\Prb[\mathcal A(z)=Mz]\right]
        \ge\eps
        \enspace ,
\end{align*}
where the inner probability is over the algorithm's internal randomness and quantum measurements.  Then there is a quantum algorithm $\mathcal B$, with coherent entry access to $M$, such that
\begin{align*}
        \Prb[\mathcal B(x)=Mx]\ge\frac23
        \qquad\text{for every }x\in\F^n
        \enspace ,
\end{align*}
and whose gate complexity is
\begin{align*}
        \ot\left(
        (q/\eps)^{O(\log(1/\eps))}
        \left(T+nT^{1/3}\right)
        \right)
        \enspace .
\end{align*}
\end{corollary}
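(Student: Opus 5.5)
We obtain $\mathcal B$ by feeding $\mathcal A$, viewed as an average-case oracle, into the reduction of \cref{thm:randomized-oracles}, and then fixing the parameters $b$ and $\kappa$ by case analysis on $T$.

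\textbf{Step 1: turn $\mathcal A$ into a coherent oracle.} Defer all intermediate measurements of $\mathcal A$ to the end. The result is a unitary circuit $U_{\mathcal A}$ of gate complexity $O(T)$. It acts as
\[
|z\rangle|0\rangle|0\rangle\mapsto|z\rangle\sum_y\alpha_{z,y}|y\rangle|\gamma_{z,y}\rangle ,
\]
with $|\alpha_{z,Mz}|^2=\Prb[\mathcal A(z)=Mz]$. Two points need care.

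\begin{itemize}
\item Calls made by $\mathcal A$ to the entry oracle for $M$ become calls to $U_M$ inside this unitary. Each such gate is counted once, so one use of $U_{\mathcal A}$ or of $U_{\mathcal A}^\dagger$ (obtained by reversing the gate sequence) costs $O(T)$ gates.
\item The output register of $\mathcal A$ is reset, whereas the oracle model $U_\OO$ adds its answer into a target register. To match the model, compute, copy the output into the target register by $O(n)$ CNOT-type gates, and keep the workspace. This is exactly the clean-workspace formalism of \cref{apx:randomized-oracles}.
\end{itemize}

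The hypothesis $\E_z[\Prb[\mathcal A(z)=Mz]]\ge\eps$ is then precisely \cref{eq:randomized-agreement}.

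\textbf{Step 2: apply the reduction.} Invoke \cref{thm:randomized-oracles} with this oracle. It states that the conclusions of \cref{thm:main,cor:optimized} hold with oracle cost $T$. For every $x$, the resulting circuit outputs $Mx$ with probability at least $2/3$. Its only oracles are $U_M$ and $U_{\mathcal A}$, so every $U_{\mathcal A}$ call can be inlined as an explicit $O(T)$-gate subcircuit. This yields a genuine circuit $\mathcal B$ whose only oracle is coherent entry access to $M$. By \cref{thm:main}, its total gate count is
\[
\ot\Bigl((q/\eps)^{O(\log(1/\eps))}T\sqrt{n/b}+\Psi(n,b,\kappa)\sqrt{n/b}\Bigr),
\]
since the $U_\OO$-time term already counts the gates of each inlined call. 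The $U_M$ queries are counted as unit gates and are dominated by the running time.

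\textbf{Step 3: choose $b$ and $\kappa$.} Split into two cases.

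\begin{itemize}
\item If $n\le T\le n^{3/2}$, take $b=\Theta(T^{4/3}/n)$ and $\kappa=\Theta(n/T^{2/3})$ as in \cref{cor:optimized}. This gives $\ot((q/\eps)^{O(\log(1/\eps))}nT^{1/3})$.
\item If $T>n^{3/2}$, take $b=n$ and $\kappa=1$. This gives $\ot((q/\eps)^{O(\log(1/\eps))}T+n^{3/2})$. Here $n^{3/2}<T$, so the bound is $\ot((q/\eps)^{O(\log(1/\eps))}T)$.
\end{itemize}

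Both cases are bounded by $\ot\bigl((q/\eps)^{O(\log(1/\eps))}(T+nT^{1/3})\bigr)$, which is the claimed gate complexity.

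\textbf{Main obstacle.} The delicate step is the claim in Step 2 that inlining is legitimate. \cref{thm:randomized-oracles} treats the oracle as a black box and uses its inverse inside fixed-point amplitude amplification. Its correctness argument only needs a fixed unitary implementing the stated clean-workspace action, together with that unitary's inverse, so replacing the oracle by the concrete circuit $U_{\mathcal A}$ and its reverse preserves correctness. Two further checks are needed. First, the garbage in the workspace $|\gamma_{z,y}\rangle$ must be allowed: the appendix formalism permits arbitrary workspace, so this is fine. Second, $\mathcal A$'s own $M$-queries must be charged only as unit gates, so they add nothing beyond $O(T)$ per call.
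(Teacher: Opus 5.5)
Your proposal is correct and follows the paper's proof essentially step for step: defer the measurements of $\mathcal A$ to obtain a coherent oracle whose inverse is the reversed circuit, invoke \cref{thm:randomized-oracles}, and then use the parameter choices of \cref{cor:optimized} in the two cases $n\le T\le n^{3/2}$ and $T>n^{3/2}$ (the latter with $b=n$, $\kappa=1$). Your added remarks are sound details that the paper leaves implicit: inlining the oracle circuit, matching the additive-target convention, and charging $\mathcal A$'s own $M$-queries as unit gates.
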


\begin{proof}
Defer the measurements of $\mathcal A$, retain its workspace, and preserve a copy of its input.  This gives a coherent implementation of the average-case oracle with the same output probabilities.  Its inverse is obtained by reversing the circuit and replacing each gate by its inverse.  Both computations have gate complexity $\ot(T)$.

We can therefore apply \cref{thm:randomized-oracles} and the parameter choices in \cref{cor:optimized}.  If $n\le T\le n^{3/2}$, the resulting gate complexity is
\begin{align*}
        \ot\left(
        (q/\eps)^{O(\log(1/\eps))}nT^{1/3}
        \right)
        \enspace .
\end{align*}
If $T>n^{3/2}$, choosing $b=n$ and $\kappa=1$ gives gate complexity
\begin{align*}
        \ot\left(
        (q/\eps)^{O(\log(1/\eps))}T+n^{3/2}
        \right)
        =
        \ot\left(
        (q/\eps)^{O(\log(1/\eps))}T
        \right)
        \enspace .
\end{align*}
Combining the two cases gives the stated bound.  Uniformity follows from the uniformity of the reduction and the circuit construction above.
\end{proof}

\end{document}